\DeclareMathOperator*{\argmin}{\arg\!\min}
\begin{document}
\title{Energy Harvesting Systems with Continuous Energy and Data Arrivals: the Optimal Offline and a Heuristic Online Algorithms} %
\author{\normalsize Milad Rezaee, Mahtab~Mirmohseni and~Mohammad Reza Aref\\

Information Systems and Security Lab (ISSL)\\

Department of Electrical Engineering, Sharif University of Technology, Tehran, Iran\\

Email: miladrezaee@ee.sharif.edu,\{mirmohseni,aref\}@sharif.edu \vspace{0ex}

\thanks{The material in this paper has been accepted in part in special sessions in the ISWCS, Brussels, Belgium, Aug. 2015.}}

\markboth{}%
{}
\maketitle

\begin{abstract}
\boldmath
Energy harvesting has been developed as an effective technology for communication systems in order to extend the lifetime of these systems. In this work, we consider a single-user energy harvesting wireless communication system, in which arrival data and harvested energy curves are modeled as \emph{continuous} functions. For the single-user model, our first goal is to find an offline algorithm, which maximizes the amount of data which is transmitted to the receiver node by a given deadline. If more than one scheme exists that transmits the maximum data, we choose the one with minimum utilized energy at the transmitter node. Next, we propose an online algorithm for this system. We also consider a \emph{multi-hop} energy harvesting wireless communication system in a full-duplex mode and find the optimal offline algorithm to maximize the throughput.
\vspace{0ex}
\end{abstract}

\begin{IEEEkeywords}
Energy harvesting, continuous arrivals, Throughput maximization, Optimal scheduling, Online optimization.
\end{IEEEkeywords}

\IEEEpeerreviewmaketitle
\section{Introduction}
Energy Harvesting (EH) has been appeared as an approach in order to make the green communications possible. In EH systems, nodes extract energy from the nature to extend their lifetimes. The harvested energy can also be used for the purpose of communication and specially for the transmission process. Compared to the conventional battery-powered systems, the EH systems have access to an unbounded source of energy (like vibration absorption devices, water mills, wind turbines, microbial fuel cells, solar cells, thermo-electric generators, piezoelectric cells, etc). However, the diffused nature of this energy makes it difficult to be used for communication. From information-theoretic point of view, the capacity of channels with EH nodes has been investigated. The basic results on the capacity of EH systems have been presented in \cite{ozel2010information}, which are continued in other works such as \cite{rajesh2011information,dong2014approximate,tutuncuoglu2014improved}.

Another important research field in this area has focused on the optimal transmission scheduling considering different optimization problems. One of these problems looks for the optimum schemes to maximize the \emph{throughput} in a given deadline \cite{ozel2011transmission,devillers2012general,tutuncuoglu2012optimum}. Most of the existing works have considered the EH nodes with \emph{discrete} energy and/or data arrivals. This problem for a single-user fading channel with additive white Gaussian noise is considered in \cite{ozel2011transmission}, where some optimal and suboptimal algorithms have been proposed. The authors in \cite{devillers2012general} consider a single-user communication with battery imperfections while the harvested energy curve is continuous. Throughput is maximized in \cite{tutuncuoglu2012optimum} while the battery is assumed to be limited. Moreover, two throughput maximization problems for single-user and multiple access channels are investigated in \cite{arafa2014single} with EH transmitters and receiver (Rx) while the Rx utilizes the harvested energy for decoding process.

Minimizing the \emph{completion time} to transmit a given amount of data is another problem which is considered by the researchers \cite{yang2012optimal,yang2012broadcasting,nagda2014optimal}. In \cite{yang2012optimal}, an algorithm is proposed to minimize the transmission period for a specific amount of given data. A broadcast channel is considered in \cite{yang2012broadcasting}, where the goal is to investigate the minimum of the transmission completion time. A single-user communication system is considered in \cite{nagda2014optimal}, in which the Rx is not supplied by an external source and its energy is provided by harvesting, resulting in a limited energy at the Rx. One of the very interesting problems in the EH systems is to consider the scenarios where some relays participate in the transmission process. A two-hop EH system is considered for the discrete energy in \cite{gurakan2013energy,luo2013optimal,gunduz2011two,orhan2012optimal}. The authors in \cite{gurakan2013energy} investigate the two-hop relay channel with EH transmitter (Tx) and relay, and one-way energy transfer from Tx to the relay. This problems in a half-duplex two-hop relay channel with EH only at the source have been considered in \cite{luo2013optimal}. In \cite{gunduz2011two}, it has been assumed that the relay and the source harvest energy from the environment and the problem in both full-duplex and half-duplex cases is investigated. \cite{orhan2012optimal} only develop the optimal offline algorithm for a half-duplex mode in a throughput maximization problem when both source and relay have only two energy packets before the given deadline. In \cite{gurakan2014energy}, the authors consider a diamond channel with one-way energy transfer from Tx node to the relays node. Moreover, there are some research works on Gaussian relay channel with direct link with EH at both of Tx and the relay, such as \cite{huang2013throughput} and \cite{feghhi2013optimal}.

Most of the above works focuses on offline algorithms where the arrival process information is provided non-causally to the EH Tx and online algorithms are less investigated; two examples are \cite{vaze2014dynamic} and \cite{vaze2013competitive}. \cite{vaze2014dynamic} considers the design of an online algorithm to maximize the throughput of a wireless communication channel with arbitrary fading coefficients. In \cite{vaze2013competitive}, the author finds a lower bound and an upper bound for the completion time of optimal online algorithm to the completion time of optimal offline algorithm, in order to transfer a given data in single-user and multiple-access channels with Gaussian additive noise and EH nodes. In \cite{ozel2011transmission,tutuncuoglu2012optimum} and \cite{xu2014throughput} some heuristic online schemes are proposed. The stochastic nature of the harvesting processes are taken account in \cite{lei2009generic,mao2012near,wang2014power} where optimal transmission policies for EH nodes based on Markov decision processes are studied.

As mentioned above, \emph{discrete} energy and/or data arrivals assumption is used in the most of the existing works.
Since the harvested energy in an energy harvester is naturally continuous by time, a continuous-time model is more suitable for the amount of harvested energy \cite{varan2014energy,ottman2002adaptive}. Although, assuming a discrete model makes the problem more tractable, the resulted optimal policy for such a model is a suboptimal policy in general and it reduces the efficiency. One motivation for the continuous data arrival comes from the relaying structure. In general, in a throughput maximization problem at a multi-hop relay channel, the arrival data curve at the relay node may be continuous even when the data arrival is discrete at Tx. In Section V, we provide an example to show that a throughput maximization problem in a two-hop channel with continuous energy arrivals in Tx and the relay and buffered data in Tx reduces to a throughput maximization problem in a single-user channel with continuous data arrival and harvested energy in the relay. Moreover, using rateless codes eliminates the need of packetizing data in some applications \cite{palanki2004rateless} that results in the continuous model better fitting these cases.
Even if we consider packetized communications, since in a network there are huge number of arrival packets with different sizes and arrival times which are sufficiently small and close to each other, the continuous model is more suitable in network calculus \cite{le2001network}. Therefore, investigating a system with continuous data arrival curve is crucial in analyzing EH systems. By considering the continuous energy and data arrivals, the problem enters a new space where the existing discrete-space proofs are not applicable. Noting this fact, the central question is how the existing discrete-space results change in this new continuous-space. We answer this question in this paper by providing the proofs which fit the continuous-space. Note that if we assume that arrival data is discrete, due to the continuous harvested energy, the problem enters in continuous space and the complexity of the proofs and the optimal algorithms do not change much compared with the case where both data and energy are continuous. We remark that, to the best of our knowledge, the model with discrete data arrival (not buffered) and continuous energy has not been considered in the previous works.
In addition, there are very limited results for the EH systems with continuous energy arrivals \cite{devillers2012general}. The authors in \cite{varan2014energy} investigate an EH system with a degrading battery of finite capacity by convex analysis tools for a continuous harvested energy curve. Although, in \cite{devillers2012general} and \cite{varan2014energy}, the harvested energy curve is continuous but all data is stored in information buffer at the beginning of the transmission and the arrival data curve has not been considered. Also, \cite{varan2014energy} considers finite batteries, battery imperfections, and processing costs.

The most challenging parts in this paper are to apply data and energy causalities in continuous space, which need totally different approaches from the discrete model in \cite{yang2012optimal} and the continuous model in
\cite{varan2014energy}, in which only harvested energy is a continuous function and arrival data has not been considered. Another difference between our work and \cite{yang2012optimal} is that: \cite{yang2012optimal} finds optimal policy between piecewise linear functions for the discrete harvested energy and arrival data curves; however, considering our continuous energy and data arrival curves, we search among the set of general functions (detailed in Section II). Our method for finding the algorithm considers both continuous and discrete arrival data functions as well as both continuous and discrete harvested energy functions. Moreover, there is a basic difference: \cite{yang2012optimal} investigates the dual problem i.e., a \emph{completion time minimization problem}, while we investigate a throughput maximization problem.

In this paper, we consider a single-user communication channel with an EH Tx with random data arrivals. We model harvested energy curve and arrival data curve with continuous functions in time and assume that the size of the energy and data buffers at Tx and Rx are infinite. We focus on a throughput maximization problem. However, it is possible that there exist more than one scheme which maximize the throughput. This happens when the harvested energy is more than the needed energy to transmit all of arrival data until the deadline (depends on the harvested energy curve). Hence, we need an extra condition for our model. We apply a constraint on the utilized energy to be minimum when this situation occurs. For this setup, we investigate the optimal policy which maximizes the amount of data
transmitted to Rx by a given deadline. If more than one scheme exist that transmit the maximum data, we choose the one with minimum utilized energy at the Tx.

First, we consider the optimal offline policy and state its properties in two fold: we use new proof techniques to extend the discrete case properties of \cite{yang2012optimal} and then we prove novel
properties for our continuous model which also hold for the existing discrete models. Then, based on these results, we propose an offline algorithm and show its optimality. We also consider a multi-hop
channel with one Tx, one Rx and many relays. We propose the optimal offline algorithms in both throughput maximization and completion time minimization problems in a full-duplex mode.
However, in practice, we may have no information about the future of the harvested energy and data arrival in Tx. Thus, we need an algorithm which without knowing the future amounts of harvested energy and data arrival determines the power in Tx. For this reason, we propose an online algorithm in Section VI where we do not have access to any information about the distribution of the
harvesting and arrival processes. We prove that our online algorithm uses all of the energy or sends all of the data in the data buffer and the transmitted power curve is a nondecreasing function,
similar to the optimal offline algorithm. Then, we derive a lower bound on the ratio of the amount of transmitted data in the online algorithm to the optimal offline algorithm and we show the cases
where the ratio is good enough (e.g., more than 0.5 or equal to 1). Our proposed online algorithm reduces to the optimal online algorithm for the discrete harvested energy case with no data arrival, derived in
 \cite{vaze2014dynamic}. In addition, we provide two sets of examples to assess our results numerically. Also, we show that discretizing harvested energy reduce the efficiency of system.

 The rest of the paper is organized as follow. In Section II, we formulate the main problem by an optimization problem. In Section III, we state the properties of the optimal energy and data transmitted curves for the optimal offline algorithm. In Section IV, we first prove a theorem for a simpler special case of the main problem; then, we propose the offline algorithm which gives us the optimal transmitted data curve. In Section V, we investigate a multi-hop channel by a throughput maximization problem. In Section VI, we propose an online algorithm for optimization problem of Section II, and in Section VII, we provide the simulation results. Finally, Section VIII concludes the paper.
 \theoremstyle{definition}
 \newtheorem{defn}{Definition}[section]
\section{System Model}
We consider a single-user wireless communication system, where the Tx is a node that harvests energy from an external source in a continuous fashion. We assume that at time $t$, we have $B_{s}\left (t \right )$ bits of data
 (where $B_{s}(t)$ is a continuous function) available at the Tx. The Rx is assumed to have enough energy to provide adequate power for decoding at any rate that can be achieved by the Tx. Also, we have the following assumptions.

  The harvested energy curve ($E_{s}(t)$) and arrival data curve ($B_{s}(t)$) are bounded differentiable functions of time $t$, for $t \in [0,\infty)$ (except probably in finite points because of discontinuity in these points or not equal right and left derivatives) which denote the amount of energy harvested and the amount of data arrived at the Tx in the interval $[0,t]$ respectively. Also we assume that derivative of $B_{s}(t)$ and $E_{s}(t)$ are piecewise continuous. The transmitted energy curve ($E(t)$) and the transmitted data curve ($B(t)$) are continuous functions. We assume that $E(t)$ and $B(t)$ are differentiable functions of time, $t$, for $t \in [0,\infty)$ (except probably in finite points) and these denote the amount of energy utilized and the amount of data that transmitted at the Tx in the interval $[0,t]$ for $t \in [0,\infty)$. The transmitted power curve $p(t)$ is a piecewise continuous function, that denotes the amount of power used at the Tx for $t \in [0,\infty)$. Also, we use $B^{*}(t)$, $E^{*}(t)$ and $p^{*}(t)$ as the optimal transmitted data, energy and power curves respectively.

   We assume that the instantaneous transmission rate relates
   to the power of transmission through a
   continuous function $r(p)$, which satisfies the following properties: i) $r(0)=0$, ii) $r(p)$ is a non-negative strictly concave function in $p$, iii) $r(p)$ is differentiable, iv) $r(p)$ increases monotonically in $p$, and v) $\lim_{p \to\infty }r(p)=\infty$.

   It can easily  be seen that the above conditions are satisfied in many
   systems with practical encoding/decoding schemes, such as single-user additive white Gaussian
   noise channel with optimal random coding, i.e.,  $r(p)=\frac{1}{2}\log(1+p)$.

   According to the above assumptions, we have:
\vspace*{-.1cm}
  \begin{align} \label{m}
   B(t)=\int_{0}^{t}r(\dfrac{d}{dt^{'}}E({t}'))d{t}',
      \end{align}
      In Sections III and IV, our goal is to find an offline algorithm, which maximizes the amount of data transmitted to the Rx in a given deadline. If more than one scheme exist which maximize the transmitted data, we impose another constraint: the algorithm must give the scheme, in which the amount of data is maximized, while the utilized energy is minimized at the Tx. Therefore, the optimization problem is:
   \vspace*{-.2cm}
       \begin{align} \label{kjhl}
          D(T)=&\max_{p(t)} \int_{0}^{T}r(p(t))dt\\
         & s.t.~ \int_{0}^{t}p({t}')d{t}'\leq E_{s}(t),~ 0\leq t\leq T\label{kj}\\
         &\int_{0}^{t}r(p({t}'))d{t}'\leq B_{s}(t),~0\leq t\leq T,\label{gh}
            \end{align}
            and if there exist more than one $p(t)$ such that $D(T)=B_{s}(T)$, then the one is selected which uses the minimum energy.

\eqref{kj} and \eqref{gh} hold due to the causality of energy and data, respectively. The causality of energy means that one cannot use the energy which is not harvested and the causality of data means that the data which has not arrived yet, cannot be sent. We remark that we investigate both continuous and discrete harvested energy and arrival data curves in our model.
\section{Properties of the Optimal Policy}
In this section, we state the properties of the optimal policy by some useful lemmas. In Subsection~\ref{subs:existing}, we extend the discrete case properties of \cite{yang2012optimal} to our model which needs new proof techniques. In Subsection~\ref{subs:new}, we prove new properties which hold for both continuous and discrete cases.
 \newtheorem{theorem}{Theorem}[section]
   \newtheorem{corollary}{Corollary}[theorem]
   \newtheorem{lemma}[theorem]{Lemma}

\subsection{Extending the discrete case properties to our model}\label{subs:existing}
\begin{lemma}\label{Jensen}
If $E(t)$ is  nonlinear in $t$ over $[a,b]$, the straight line which passes through the two points, $(a,E(a))$ and $(b,E(b))$, transmits more data than $E(t)$ (utilizing same amount of energy). Similarly, if $B(t)$ is  nonlinear in $t$ over $[a,b]$, the straight line which passes through the two points, $(a,B(a))$ and $(b,B(b))$, utilizes less energy than $B(t)$ (transmitting same amount of data).
\end{lemma}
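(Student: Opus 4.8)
The two statements are dual, and both are consequences of \emph{Jensen's inequality} together with the strict concavity of $r$, which is exactly what the lemma's label suggests. The plan is to work on the interval $[a,b]$ only, observing that replacing $E(t)$ (resp.\ $B(t)$) by the chord through its endpoints leaves the curve unchanged outside $[a,b]$ and preserves both endpoint values, so the behavior outside $[a,b]$ and the boundary conditions are untouched; only the contribution on $[a,b]$ must be compared.

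First I would prove the energy statement. On $[a,b]$ the transmitted power is $p(t)=\frac{d}{dt}E(t)$, a piecewise continuous function, and by \eqref{m} the data sent on this interval is $\int_a^b r(p(t))\,dt$. The chord has constant slope, hence constant power $p_0=\frac{E(b)-E(a)}{b-a}$, and it utilizes exactly the same energy $E(b)-E(a)=\int_a^b p(t)\,dt$ while transmitting $r(p_0)(b-a)$ bits. Jensen's inequality for the concave function $r$ gives
\begin{align}
\frac{1}{b-a}\int_a^b r(p(t))\,dt \le r\!\left(\frac{1}{b-a}\int_a^b p(t)\,dt\right)=r(p_0),
\end{align}
so the chord transmits at least as much data with identical energy. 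Since $E(t)$ is nonlinear on $[a,b]$, its derivative $p(t)$ is not constant on a subset of positive measure, and the strict concavity of $r$ upgrades this to a strict inequality, proving the claim.

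The data statement is the mirror image, obtained by inverting the rate relation. Because $r$ satisfies $r(0)=0$, is strictly concave, strictly increasing, and $\lim_{p\to\infty}r(p)=\infty$, its inverse $r^{-1}$ exists on $[0,\infty)$ and is strictly convex and increasing. On $[a,b]$ the power needed to sustain data rate $\frac{d}{dt}B(t)$ is $r^{-1}\!\left(\frac{d}{dt}B(t)\right)$, so the energy consumed is $\int_a^b r^{-1}\!\left(\frac{d}{dt}B(t)\right)dt$, whereas the chord of $B(t)$, having constant data rate $q_0=\frac{B(b)-B(a)}{b-a}$, consumes $r^{-1}(q_0)(b-a)$ while transmitting the same $B(b)-B(a)$ bits. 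Applying Jensen's inequality to the convex $r^{-1}$ reverses the direction,
\begin{align}
\frac{1}{b-a}\int_a^b r^{-1}\!\left(\frac{d}{dt}B(t)\right)dt \ge r^{-1}\!\left(\frac{B(b)-B(a)}{b-a}\right)=r^{-1}(q_0),
\end{align}
and the nonlinearity of $B(t)$ together with strict convexity of $r^{-1}$ makes it strict, so the chord uses strictly less energy.

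The main obstacle I anticipate is not the inequality itself but the care required to make it rigorous in the continuous setting. I must confirm that $p(t)$ (resp.\ $\frac{d}{dt}B(t)$) is integrable and non-constant on a positive-measure subset of $[a,b]$, given only that the curve is nonlinear and differentiable except at finitely many points, and I must justify the equality-case analysis of Jensen's inequality in integral form, where equality forces the integrand's argument to be constant almost everywhere. Establishing the regularity, monotonicity, and convexity of $r^{-1}$ from the stated properties of $r$ is a routine but necessary preliminary step before the second half goes through.
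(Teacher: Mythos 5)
Your proposal is correct and takes essentially the same route as the paper: both apply the integral form of Jensen's inequality with $f=p$ and $\phi=r$ strictly concave, obtaining strictness from the non-constancy of $p$ on $[a,b]$ induced by the nonlinearity of $E(t)$. The paper disposes of the $B(t)$ statement with a bare ``similarly,'' and your explicit dual argument via the strictly convex, increasing inverse $r^{-1}$ is exactly the argument that word abbreviates, just written out with more care about strictness and regularity.
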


\begin{proof}
With using Jensen's inequality in \cite{jeffrey2007table}
and substitute $f(.)=p(.)$ and $\phi (.)=r(.)$. Assuming that $p(.)$ is not a constant function in $[a,b]$ and $a\neq b$, because of $r(.)$ is strictly concave we obtain:
                       \begin{align}
                          r\left ( \frac{E(b)-E(a)}{b-a} \right )(b-a)> \int_{a}^{b}r(p(t))dt.
                                        \end{align}
Similarly, the result can be shown for $B(t)$. This completes the proof.
\end{proof}

\begin{lemma}\label{fci} Let $E(t)$ be a feasible transmitted energy curve, $m(t)$ be a straight line fragment over $[a,b]$ that passes through points $(a,E(a))$ and $(b,E(b))$, then $E(t)$ is not the optimal transmitted energy curve if  there exist a curve $E_{new}(t) \not \equiv E(t)$ which satisfies the causality conditions \eqref{kj} and \eqref{gh}. Also
    \begin{align} \label{addd}
                              E_{new}(t)= \left\{\begin{matrix}
                                        E(t) &0\leq t< a \\
                                        m(t) &a\leq t\leq b \\
                                        E(t) &b<t\leq T
                                        \end{matrix}\right..
                                         \end{align}
\end{lemma}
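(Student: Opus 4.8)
The plan is to show that the hypothesized replacement $E_{new}$ strictly outperforms $E$ under the objective \eqref{kjhl}, so $E$ cannot be optimal. First I would note that, because $E_{new}$ coincides with $E$ outside $[a,b]$ by \eqref{addd}, the assumption $E_{new}\not\equiv E$ forces $E$ to differ from the chord $m$ somewhere on $[a,b]$. Since $m$ is the unique affine function agreeing with $E$ at both endpoints $a$ and $b$, this means $E$ is nonlinear on $[a,b]$, which is exactly the hypothesis needed to invoke Lemma \ref{Jensen}.

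Next I would apply Lemma \ref{Jensen} on $[a,b]$. Because $m(a)=E(a)$ and $m(b)=E(b)$, the two curves consume identical energy $E(b)-E(a)$ over the interval, yet the lemma gives the strict inequality that the chord transmits more data. Denoting by $B(t)$ and $B_{new}(t)$ the data curves induced through \eqref{m} by $E$ and $E_{new}$, this yields $B_{new}(b)>B(b)$, while $B_{new}(t)=B(t)$ for $t\in[0,a]$.

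Then I would propagate this strict gain to the deadline. On $(b,T]$ both curves follow the same $E(t)$, so their instantaneous powers $\tfrac{d}{dt}E_{new}$ and $\tfrac{d}{dt}E$ agree and the incremental data $\int_{b}^{T} r(\tfrac{d}{dt}E)\,dt$ is common to both. Hence $B_{new}(T)-B(T)=B_{new}(b)-B(b)>0$, i.e. $E_{new}$ strictly increases the objective \eqref{kjhl}. Feasibility of $E_{new}$ is granted by hypothesis (it satisfies \eqref{kj} and \eqref{gh}), so a feasible curve beating $E$ in throughput directly exhibits the non-optimality of $E$; the energy tie-break is immaterial here, since $E_{new}(T)=E(T)$ and both expend the same total energy.

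The main obstacle, such as it is, is not a deep one: the heavy lifting is already packaged in Lemma \ref{Jensen}. The only care needed is the bookkeeping that (i) $E_{new}\not\equiv E$ genuinely delivers the nonlinearity premise of Lemma \ref{Jensen} rather than a trivial equality on $[a,b]$, and (ii) the strict data surplus created on $[a,b]$ is not eroded on $(b,T]$. Verifying that the tail powers coincide, so that the surplus is preserved up to $T$, is the one step that must be stated explicitly rather than taken for granted.
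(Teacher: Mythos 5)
Your proof is correct and follows essentially the same route as the paper: apply Lemma \ref{Jensen} on $[a,b]$ to get a strict data surplus for the chord, note the curves coincide (hence transmit identically) outside $[a,b]$, and conclude $E_{new}(t)$ delivers strictly more data by $T$, so $E(t)$ is not optimal. You merely make explicit two points the paper leaves implicit — that $E_{new}\not\equiv E$ forces the nonlinearity hypothesis of Lemma \ref{Jensen}, and that the surplus is preserved on $(b,T]$ — which is a welcome tightening, not a different argument.
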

\begin{proof} We have to show that $E_{new}(t)$ transmits more data than $E(t)$ while it uses the same energy. First, it is clear that $E_{new}(t)$ and $E(t)$ use equal energy to send equal amount of data in $(0,a)$ and $(b,T)$. \eqref{addd} implies that $E(t)$ transmits smaller amount of data than $E_{new}(t)$, using the same energy in $(a,b)$. Thus, overall $E_{new}(t)$ transmits more amount of data and  $E(t)$ is not optimal.\end{proof}

 Similarly, it can be shown that Lemma \ref{fci} is valid if we replace energy transmitted curve with data transmitted curve. Using the above lemma, we conclude that in the optimal policy, there are no two points on $E^{*}(t)$ such that the line passing through these points satisfies the causality conditions and is not equal to $E^{*}(t)$. The same results holds for $B^{*}(t)$.

\begin{lemma}\label{sdddf}
$p^{*}(t)$ is not decreasing.
\end{lemma}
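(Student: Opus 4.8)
The plan is to argue by contradiction, using the chord--replacement mechanism of Lemma~\ref{fci} together with the strict-concavity gain of Lemma~\ref{Jensen}. Suppose $p^{*}$ is not nondecreasing. Since $p^{*}=\frac{d}{dt}E^{*}$ is piecewise continuous, there exist $t_{1}<t_{2}$ with $p^{*}(t_{1})>p^{*}(t_{2})$, and this shows up either as a subinterval on which $p^{*}$ is strictly decreasing (so $E^{*}$ is strictly concave there) or as a downward jump of $p^{*}$ (so $E^{*}$ has a concave corner, being a function of decreasing slope). In either case I would isolate a short interval $[a,b]$ on which $E^{*}$ is concave and nonlinear.

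On $[a,b]$ I would define $E_{new}$ as in \eqref{addd}, replacing $E^{*}$ by the chord $m(t)$ through $(a,E^{*}(a))$ and $(b,E^{*}(b))$. Energy causality is then immediate: concavity of $E^{*}$ forces the chord below the curve, so $m(t)\le E^{*}(t)\le E_{s}(t)$ on $[a,b]$, while $E_{new}\equiv E^{*}$ elsewhere; hence \eqref{kj} still holds. Because $m$ matches $E^{*}$ at both endpoints, the energy spent on $[a,b]$ and the total energy $E_{new}(T)=E^{*}(T)$ are unchanged, and by Lemma~\ref{Jensen} the nonlinear $E^{*}$ transmits strictly less data on $[a,b]$ than the chord. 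Thus, \emph{provided $E_{new}$ is feasible}, it sends strictly more data with the same energy, and Lemma~\ref{fci} gives that $E^{*}$ is not optimal, the desired contradiction.

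The main obstacle is data causality \eqref{gh} for $E_{new}$. Flattening a concave segment raises the cumulative transmitted data near and beyond $b$: one checks $B_{new}(t)>B^{*}(t)$ for $t$ close to $b$ and $B_{new}(t)=B^{*}(t)+\Delta$ with $\Delta>0$ for all $t\ge b$, so \eqref{gh} is not automatic and can fail if $B^{*}$ hugs the arrival bound $B_{s}$ after the decreasing part. I would therefore split into two cases. If the chord keeps $B_{new}(t)\le B_{s}(t)$ everywhere (in particular when the data constraint is slack on $[a,T]$, or $\Delta$ is small enough after shrinking $[a,b]$), feasibility holds and the contradiction above closes.

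If instead the chord would overshoot $B_{s}$, the data constraint must be active just after the decreasing segment, so the extra data simply cannot be delivered; here I would invoke the secondary minimum-energy criterion rather than throughput. Since the transmitter is data-limited in that region, the high power before $t_{1}$ delivers data earlier than required, and lowering this early power (sending even less early, which always preserves \eqref{gh}) while still reaching the same final data $B^{*}(T)$ by the deadline strictly reduces the utilized energy by strict concavity of $r$, contradicting the minimality of $p^{*}$. Feasibility of each modification under the energy bound \eqref{kj} must also be checked. I expect this data-limited case, and the bookkeeping needed to satisfy both causality constraints at once, to be the hard part; the energy-causality and concavity steps are routine once Lemmas~\ref{Jensen} and~\ref{fci} are in hand.
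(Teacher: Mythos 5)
Your skeleton (contradiction, then replacing the concave segment of $E^{*}$ by a straight line) matches the paper's, and you correctly identify data causality as the crux; but your data-limited case — which you yourself flag as the hard part — contains two genuine gaps, and that case is exactly where the proof lives. First, the construction there is never pinned down. The paper does not use the chord at all: it always replaces $E^{*}$ on $[t_{c},t_{c}+\delta]$ by a line of \emph{reduced} slope $p_{0}$ chosen so that $\delta\, r(p_{0})=\int_{t_{c}}^{t_{c}+\delta}r(p^{*}(\tau))d\tau$, i.e.\ equal data with strictly less energy (strict concavity), and then shifts the tail of $E^{*}$ down by the saving $\epsilon$. This single construction makes your case split unnecessary, because for it data causality reduces to showing $B(t)\leq B^{*}(t)$ on the interval. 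That inequality is not free: you assert that ``sending even less early always preserves \eqref{gh},'' but a constant-rate segment delivering the same total data is not pointwise below $B^{*}$ by fiat — early on $p^{*}>p_{0}$, later $p^{*}<p_{0}$, and one must rule out an overshoot after the crossing point $t_{0}$. The paper does this by contradiction: if $B(a)>B^{*}(a)$ for some $a\geq t_{0}$, the equal-total-data identity forces $\int_{a}^{t_{c}+\delta}r(p^{*}(\tau))d\tau>(t_{c}+\delta-a)r(p_{0})$, which is impossible since $p^{*}\leq p_{0}$ on $[a,t_{c}+\delta]$. This verification is the technical heart of the lemma and is absent from your proposal.

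Second, and more seriously, your closing contradiction in the data-limited case is invalid as stated. You appeal to the ``secondary minimum-energy criterion,'' but the problem formulation invokes that tiebreak only when $D(T)=B_{s}(T)$, i.e.\ when \emph{all} arrived data is delivered by the deadline. In your case 2 the data constraint binds somewhere after the decreasing segment, but this does not imply $B^{*}(T)=B_{s}(T)$ (e.g.\ a late burst of arrivals with no energy left to serve it). When $B^{*}(T)<B_{s}(T)$, exhibiting a same-data, less-energy scheme contradicts nothing in the optimization. The paper closes exactly this hole: if $B(T)<B_{s}(T)$, the saved energy is spent by slightly increasing the power on $(T-\epsilon_{1},T)$ — feasible precisely because the data buffer is nonempty there — yielding strictly more throughput and contradicting optimality of $p^{*}$. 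Without that step (or an equivalent), your case 2 does not conclude, so the proposal as written has a real gap rather than merely unfinished bookkeeping.
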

\begin{proof} We prove this lemma by contradiction. Let there exists an interval where $p^{*}(t)$ is strictly decreasing. We propose a transmitted energy curve ($E(t)$) such that transmitted data curve ($B(t)$) is equal to $B^{*}(t)$ at $t=T$, while $E(t)$ uses smaller energy than $E^{*}(t)$.
Based on contradiction assumption, we assume that $p^{*}(t)$ is strictly decreasing in $[t_{c},t_{c}+\delta]$, then $E^{*}(t)$ is strictly concave in this interval. \eqref{addd} implies that the straight line which passes through $t_{c}$ and $t_{c}+\delta$ transmits more amount of data than $E^{*}(t)$ in this interval. Now, we decrease the slope of this line to $p_{0}$ such that this line transmits equal data with $E^{*}(t)$ in $[t_{c},t_{c}+\delta]$.
We define $E(t)$ a curve such that: (i) it is equal to $E^{*}(t)$ except at $[t_{c},T]$ ; (ii) it is a line with slope of $p_{0}$, that proposed above, at $ [t_{c}, t_{c}+\delta]$; (iii) it is equal to $E^{*}(t)-\epsilon$ where $\epsilon=E^{*}(t_{c}+\delta)-E(t_{c}+\delta)$ at $(t_{c} +\delta , T]$. Since, $E^{*}(t)$ and $E(t)$ transmits same amount of data in $[t_{c}, t_{c}+\delta]$, there exists $t_{0}\in (t_{c},t_{c}+\delta)$ such that $p^{*}(t)>p_{0}$ for $t_{c}<t<t_{0}$, and $p^{*}(t)<p_{0}$ in $t_{0}<t<t_{c}+\delta$. Otherwise, $p^{*}(t)>p_{0}$ or $p^{*}(t)<p_{0}$, which both of them are contradictions: because they result in $\int_{t_{c}}^{t_{c}+\delta }r(p^{*}(\tau))d\tau <\delta r(p_{0})$ or $\int_{t_{c}}^{t_{c}+\delta }r(p^{*}(\tau))d\tau >\delta r(p_{0})$, but
    \begin{align}
    \label{ggg} \int_{t_{c}}^{t_{c}+\delta }r(p^{*}(\tau))d\tau =\delta r(p_{0})
    \end{align} must hold.

     It is enough to prove that the transmitted data curve that obtains from $E(t)$, i.e., $B(t)$, satisfies the data causality condition. In $[0,t_{c})$, since $E^{*}(t)=E(t)$, then $B^{*}(t)=B(t)$. Clearly, if $B(t)\leq B^{*}(t)$ in $[t_{c},t_{c}+\delta]$ then $B(t)\leq B^{*}(t)$ in $[t_{c}+\delta,T]$ which means that $B(t)$ satisfies causality of data. Hence, we show that $B(t)\leq B^{*}(t)$ in $[t_{c},t_{c}+\delta]$.
    Since $p^{*}(x)> p_{0}$ and $r(.)$ increases monotonically, we have $\int_{t_{c}}^{x }r(p^{*}(\tau))d\tau >(x-t_{c}) r(p_{0})$ for $t_{c}<x<t_{0}$. Now, we use contradiction to prove that $B(t)\leq B^{*}(t)$ in $[t_{0},t_{c}+\delta]$. To do this we assume that, there exists a point $a$, such that $t_{0}\leq a\leq t_{c}+\delta$ and $B(a)> B^{*}(a)$. Hence,
    \begin{align}
    \int_{t_{c}}^{a}r(p^{*}(\tau))d\tau<(a-t_{c})r(p_{0}).
    \end{align}
which using \eqref{ggg} we obtain:
    \begin{align}\label{aaa}
    \int_{t_{c}}^{t_{c}+\delta }r(p^{*}(\tau))d\tau-\int_{t_{c}}^{a}r(p^{*}(\tau))d\tau>\delta r(p_{0})-(a-t_{c})r(p_{0}) \nonumber\\
    \Rightarrow \int_{a}^{t_{c}+\delta }r(p^{*}(\tau ))d\tau >(t_{c}+\delta-a)r(p_{0}).~~~~~~
    \end{align}
     However, for every $x\in [a,t_{c}+\delta]$, we have $p^{*}(x)\leq p_{0}$ and since $r(.)$ increases monotonically , we obtain
     \begin{align}
     \int_{a}^{t_{c}+\delta }r(p^{*}(\tau ))d\tau \leq (t_{c}+\delta-a)r(p_{0}).
     \end{align}
     which is inconsistent with \eqref{aaa}. Thus, $E(T)<E^{*}(T)$, but $B(T)=B^{*}(T)$. Now, if $B(T)=B_{s}(T)$, then we have contradiction due to the assumption that $p^{*}(t)$ is optimal and proof is completed. Otherwise, if $B(T)<B_{s}(T)$, we can increase $p(t)$ slightly in $(T-\epsilon_{1},T)$, which implies that $p^{*}(t)$ is not optimal. This completes the proof. \end{proof}
                \begin{lemma}\label{jgh}  Under the optimal policy, if there exists an epoch that no energy and no data are received, i.e., if $B_{s}(t)$ and $E_{s}(t)$ are constant, then $p^{*}(t)$ is constant in this epoch.
                \end{lemma}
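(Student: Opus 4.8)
The plan is to argue by contradiction, reusing the construction style from the proof of Lemma~\ref{sdddf}: assuming $p^{*}(t)$ is non-constant on the epoch $[a,b]$ (where $E_{s}(t)\equiv E_{s}(a)$ and $B_{s}(t)\equiv B_{s}(a)$), I would build a feasible energy curve $E_{new}(t)$ that transmits exactly the same data by the deadline $T$ but uses strictly less energy, contradicting the optimality of $E^{*}(t)$. By Lemma~\ref{sdddf}, $p^{*}$ is nondecreasing, but the argument below in fact needs only non-constancy.

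First I would fix the replacement on the epoch. Since $r(\cdot)$ is strictly increasing, a non-constant $p^{*}$ makes $r(p^{*}(t))$ non-constant, so $B^{*}(t)$ is nonlinear on $[a,b]$. I then invoke the second half of Lemma~\ref{Jensen}: the straight line in the data domain through $(a,B^{*}(a))$ and $(b,B^{*}(b))$, equivalently the constant transmit power $q_{0}$ with $r(q_{0})=(B^{*}(b)-B^{*}(a))/(b-a)$, transmits the same data on $[a,b]$ while using strictly less energy; call the saving $\epsilon=(E^{*}(b)-E^{*}(a))-q_{0}(b-a)>0$. I would then set $E_{new}(t)=E^{*}(t)$ on $[0,a)$, the constant-power segment of slope $q_{0}$ on $[a,b]$, and $E_{new}(t)=E^{*}(t)-\epsilon$ on $(b,T]$; this is continuous because the middle piece ends exactly at $E^{*}(b)-\epsilon$.

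The step I expect to be the main obstacle is verifying feasibility, in particular preserving the data causality \eqref{gh}. On $[a,b]$ the constancy of $B_{s}$ reduces \eqref{gh} to $B_{new}(t)\le B_{s}(a)$; since $B_{new}$ is linear there and its endpoints $B^{*}(a)$ and $B^{*}(b)$ both satisfy $B^{*}(\cdot)\le B_{s}(a)$, the whole segment stays below the bound. The same linearity-plus-endpoints reasoning, together with $E^{*}(b)\le E_{s}(b)=E_{s}(a)$, handles the energy causality \eqref{kj} on $[a,b]$. The replacement is engineered so that $B_{new}(b)=B^{*}(b)$; hence on $(b,T]$ the power profile, and therefore the data increments, coincide with those of $p^{*}$, giving $B_{new}(t)=B^{*}(t)$ so \eqref{gh} holds, while $E_{new}(t)=E^{*}(t)-\epsilon<E^{*}(t)\le E_{s}(t)$ keeps \eqref{kj}. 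This is precisely why I replace in the data domain (constant power that preserves data) rather than using the energy chord of Lemma~\ref{fci}: the energy chord would raise the data curve and could violate \eqref{gh} near $b$.

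Finally I would draw the contradiction exactly as in Lemma~\ref{sdddf}. The curve $E_{new}$ satisfies $B_{new}(T)=B^{*}(T)$ but $E_{new}(T)=E^{*}(T)-\epsilon<E^{*}(T)$. If $B^{*}(T)=B_{s}(T)$, this violates the minimum-energy tie-breaking criterion, so $E^{*}$ is not optimal. If instead $B^{*}(T)<B_{s}(T)$, the energy slack $\epsilon$ together with the data slack lets me increase $p(t)$ slightly on $(T-\epsilon_{1},T)$ to transmit strictly more data, contradicting throughput optimality. Either way $p^{*}$ cannot be non-constant on $[a,b]$, which proves that $p^{*}(t)$ is constant throughout the epoch.
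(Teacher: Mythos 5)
Your proposal is correct and is essentially the paper's own argument: the paper's (sketched) proof likewise replaces $E^{*}(t)$ on part of the epoch by a straight line that transmits the same data, shifts the remaining energy curve down, checks both causality conditions, and closes with the same two-case contradiction ($B^{*}(T)=B_{s}(T)$ violating the minimum-energy tie-break, versus $B^{*}(T)<B_{s}(T)$ allowing a slight power increase near $T$) inherited from Lemma~\ref{sdddf}. You have simply filled in the feasibility details the paper omits ``for brevity,'' exploiting the constancy of $B_{s}$ and $E_{s}$ on the epoch exactly as intended.
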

\newtheorem{conj}{Conjecture}
\newtheorem{rem}{Remark}
\newtheorem{coro}{Corollary}
  \begin{proof} Similar to the above lemma, we prove this lemma by proposing a suitable transmitted energy curve $E(t)$. We assume that $p^{*}(t)$ is not constant while $E_{s}(t)$ and $B_{s}(t)$ are constant in $[a,b]$. Similar to the proof of the above lemma, there exists a point $c$ such that, we can replace  $E^{*}(t)$ in $(a,c)$ with a straight line, where the new curves satisfy both data and energy causality conditions, $E(T)<E^{*}(T)$ and $B(T)=B^{*}(T)$. Hence, $E^{*}(t)$ is not optimal. For brevity, we do not include the proof details.
  \end{proof}

\begin{lemma}\label{sfp}  Under the optimal policy, whenever $p^{*}(t)$ increases (at instant $t_{0}$), at least one of the followings holds: (i) $E^{*}(t_{0})=E_{s}(t_{0})$, (ii) $B^{*}(t_{0})=B_{s}(t_{0})$.
                \end{lemma}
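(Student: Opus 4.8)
The plan is to prove the contrapositive by contradiction: I assume that at an instant $t_0$ where $p^{*}(t)$ strictly increases, \emph{both} constraints are slack, i.e. $E^{*}(t_0)<E_{s}(t_0)$ and $B^{*}(t_0)<B_{s}(t_0)$, and I derive a violation of optimality. The underlying intuition, exactly as in Lemma \ref{sdddf}, is that a strictly concave rate function rewards ``flattening'' the power profile: moving a little energy backward in time across a jump (from just after $t_0$ to just before $t_0$) lets us transmit the same total data with strictly less energy, or transmit strictly more data for the same energy. The slackness of both constraints is precisely what gives us the room to perform this local rearrangement without violating either causality condition \eqref{kj} or \eqref{gh}.

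First I would use continuity to set up a working interval. Since $E^{*}$ and $B^{*}$ are continuous and $E_{s},B_{s}$ are piecewise continuous, the strict inequalities $E^{*}(t_0)<E_{s}(t_0)$ and $B^{*}(t_0)<B_{s}(t_0)$ persist on a small two-sided neighborhood: there is a $\delta>0$ and a gap $\eta>0$ such that $E_{s}(t)-E^{*}(t)>\eta$ and $B_{s}(t)-B^{*}(t)>\eta$ for all $t\in[t_0-\delta,t_0+\delta]$. Because $p^{*}$ increases at $t_0$, I may choose points $t_1<t_0<t_2$ inside this neighborhood with $p^{*}(t_1)<p^{*}(t_2)$, so that $E^{*}$ is ``steeper after $t_0$ than before it'' in an averaged sense over $[t_1,t_2]$.

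Next I would construct the competing curve $E(t)$. I replace $E^{*}$ on $[t_1,t_2]$ by the straight-line chord through $(t_1,E^{*}(t_1))$ and $(t_2,E^{*}(t_2))$, leaving $E(t)=E^{*}(t)$ outside $[t_1,t_2]$. This chord uses exactly the same energy over $[t_1,t_2]$ (same endpoints) and agrees with $E^{*}$ at both ends, so $E(t)$ is globally continuous and $E(T)=E^{*}(T)$. By Lemma \ref{Jensen}, since $p^{*}$ is non-constant on $[t_1,t_2]$ the chord transmits \emph{strictly more} data than $E^{*}$ there; hence $B(T)>B^{*}(T)$. Alternatively, and more in the spirit of Lemma \ref{sdddf}, I can lower the chord's slope slightly to a constant $p_0$ chosen so that the transmitted data over $[t_1,t_2]$ equals that of $E^{*}$, which then gives $E(T)<E^{*}(T)$ with $B(T)=B^{*}(T)$; either way optimality is contradicted once feasibility is checked.

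The step I expect to be the main obstacle is verifying that this modified curve is \emph{feasible}, i.e. that it still satisfies both causality constraints \eqref{kj} and \eqml{gh} on the whole interval $[0,T]$. This is exactly where the assumed double slackness does its work. For the energy constraint, the chord lies on or below $E^{*}$ on $[t_1,t_2]$ only where needed; the worst-case overshoot of $E(t)$ above $E^{*}(t)$ inside $[t_1,t_2]$ is bounded by a quantity that shrinks with $t_2-t_1$, so by shrinking $\delta$ (hence $[t_1,t_2]$) I can keep this overshoot below the gap $\eta$, guaranteeing $\int_0^t p\,dt'=E(t)\le E_{s}(t)$. An identical argument, using the monotonicity of $r$ together with the running-integral comparison already carried out in Lemma \ref{sdddf} (the $p^{*}>p_0$ before the crossing point and $p^{*}<p_0$ after it), shows the transmitted data $B(t)$ stays below $B^{*}(t)+\eta\le B_{s}(t)$ throughout $[t_1,t_2]$, and hence the data causality \eqref{gh} holds. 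Having established feasibility together with either a strict data gain at equal energy or a strict energy saving at equal data, I conclude that $p^{*}$ was not optimal, contradicting the hypothesis. Therefore at least one of $E^{*}(t_0)=E_{s}(t_0)$ or $B^{*}(t_0)=B_{s}(t_0)$ must hold, completing the proof.
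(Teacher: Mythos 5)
Your overall strategy is the paper's own: the paper proves this lemma by pointing back at the constructions of Lemmas \ref{jgh} and \ref{sdddf} (replace $E^{*}$ locally by a suitable straight line and check causality), and your second construction --- lower the chord's slope to a constant $p_{0}$ chosen so that the line transmits exactly $B^{*}(t_{2})-B^{*}(t_{1})$ over $[t_{1},t_{2}]$, then continue with $E(t)=E^{*}(t)-\epsilon$ on $(t_{2},T]$ --- is precisely that argument. Your identification of where the double slackness enters is also correct: both the energy overshoot of the line above $E^{*}$ and the running data excess are bounded by quantities that vanish as $t_{2}-t_{1}\to 0$, so they can be pushed below the gap $\eta$, while causality after $t_{2}$ is inherited from $p^{*}$ because there the modified policy transmits the same data as $B^{*}$ with strictly less energy. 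Two small repairs are needed: since $p^{*}$ increases here, the crossing inequalities are the mirror image of Lemma \ref{sdddf} ($p^{*}<p_{0}$ before the crossing, $p^{*}>p_{0}$ after it), so the data excess peaks in the interior of $[t_{1},t_{2}]$; and when $B^{*}(T)<B_{s}(T)$ the saved energy must still be converted into extra data near $T$, exactly as at the end of the proof of Lemma \ref{sdddf}, because the minimum-energy tie-break in the problem statement applies only when all data is delivered.

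The genuine flaw is your first construction and the claim that ``either way optimality is contradicted once feasibility is checked.'' If you keep the chord itself (equal energy, strictly more data by Lemma \ref{Jensen}) and set $E(t)=E^{*}(t)$ for $t>t_{2}$, the data surplus $\Delta=B(t_{2})-B^{*}(t_{2})>0$ is carried forward unchanged: $B(t)=B^{*}(t)+\Delta$ for all $t\in(t_{2},T]$. Shrinking $[t_{1},t_{2}]$ makes $\Delta$ small but never zero, so at any later instant $\tau>t_{2}$ at which the data constraint is active, i.e. $B^{*}(\tau)=B_{s}(\tau)$ (and such instants can certainly exist after $t_{0}$), you get $B(\tau)>B_{s}(\tau)$ and \eqref{gh} fails, no matter how small the interval. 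Slackness at $t_{0}$ is a local statement and buys you nothing at later times. So the chord variant is not feasible in general and cannot serve as the contradiction; the proof must be run through the energy-saving variant only, which is what the paper does.
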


      \begin{proof} The proof is similar to the proof of Lemma \ref{jgh} by proposing an appropriate  transmitted energy curve $E(t)$.
      \end{proof}

\begin{coro} Based on Lemmas \ref{jgh} and \ref{sfp}, the optimal transmitted data/energy curve must be linear except probably in the epochs in which $E^{*}(t)$ equals $E_{s}(t)$ or $B^{*}(t)$ equals $B_{s}(t)$. Also based on Lemma \ref{sdddf} , since $p^{*}(t)$ is an increasing function, $E^{*}(t)$ and $B^{*}(t)$ are convex functions.
         \end{coro}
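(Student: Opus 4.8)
The plan is to prove the two assertions separately, drawing the first from Lemmas~\ref{jgh} and~\ref{sfp} (together with the monotonicity of Lemma~\ref{sdddf}) and the second from Lemma~\ref{sdddf} alone. For the linearity claim, I would fix any instant $t_0$ at which neither causality constraint is active, i.e. $E^{*}(t_0)<E_{s}(t_0)$ and $B^{*}(t_0)<B_{s}(t_0)$. Since $E^{*},E_{s},B^{*},B_{s}$ are continuous, both strict inequalities persist on an open neighborhood $I\ni t_0$, so on $I$ the optimal curve touches neither bound. On such an interval I would argue $p^{*}$ is constant: by Lemma~\ref{sfp}, $p^{*}$ cannot strictly increase at any interior instant of $I$, because any increase forces one of the two constraints to be tight there; and by Lemma~\ref{sdddf}, $p^{*}$ cannot decrease anywhere. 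A nondecreasing function with no point of increase is constant, so $p^{*}\equiv p_0$ on $I$. Substituting this constant power into $E^{*}(t)=\int_{0}^{t}p^{*}(t')\,dt'$ and into $B^{*}(t)=\int_{0}^{t}r(p^{*}(t'))\,dt'$ from \eqref{m} gives $E^{*}(t)=E^{*}(t_0)+p_0(t-t_0)$ and $B^{*}(t)=B^{*}(t_0)+r(p_0)(t-t_0)$ on $I$, i.e. both curves are affine off the tight set $\{t:E^{*}(t)=E_{s}(t)\text{ or }B^{*}(t)=B_{s}(t)\}$. The particular subcase in which $E_{s}$ and $B_{s}$ are themselves constant is exactly the hypothesis of Lemma~\ref{jgh}, which yields the same conclusion directly and thereby corroborates the argument.

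For the convexity claim I would invoke only Lemma~\ref{sdddf} and property iv of the model. Because $p^{*}$ is the (piecewise continuous) derivative of $E^{*}$ and is nondecreasing by Lemma~\ref{sdddf}, while $E^{*}$ is recovered as the integral of $p^{*}$, the function $E^{*}$ has a nondecreasing derivative and is therefore convex. For $B^{*}$, its derivative is $r(p^{*}(t))$ by \eqref{m}; since $r$ increases monotonically and $p^{*}$ is nondecreasing, the composition $r(p^{*}(t))$ is again nondecreasing, so $B^{*}$ is likewise convex.

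The routine parts are the two integration steps and the standard implication that a function expressible as the integral of a nondecreasing function is convex; these are legitimate under the model's standing assumption that $E^{*}$ and $B^{*}$ are differentiable except at finitely many instants with piecewise continuous derivative $p^{*}$. The main obstacle I anticipate is purely careful bookkeeping near the boundary of the tight set: that set need not be a finite union of isolated points and may abut the affine pieces in delicate ways, so I would be careful to choose each neighborhood $I$ strictly inside the inactive region and to verify that the finitely many permitted nondifferentiable instants do not interrupt the passage from ``nondecreasing and never increasing'' to ``constant'' on $I$.
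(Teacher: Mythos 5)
Your proposal is correct and is essentially the derivation the paper intends: the paper states this corollary without a separate proof, as an immediate consequence of Lemma~\ref{sfp} (contrapositive: no tight constraint at $t_0$ means no increase of $p^{*}$ there) combined with Lemma~\ref{sdddf} (no decrease), giving constancy of $p^{*}$ off the tight set, plus monotonicity of $p^{*}$ and of $r(\cdot)$ for convexity of $E^{*}$ and $B^{*}$. Your elaboration, including the local-constancy-on-connected-components bookkeeping, fills in exactly those steps and adds nothing structurally different.
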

\begin{rem}
Though similar results to the ones in Lemmas \ref{sdddf}, \ref{jgh} and \ref{sfp} have been proposed in \cite{yang2012optimal} for discrete energy and data arrivals for the completion time minimization problem, we show these results for the \emph{continuous} case using different proof techniques from the proofs in \cite{yang2012optimal}, noting that the proofs of \cite{yang2012optimal} cannot easily be extended to the continuous model.
\end{rem}

\subsection{New general properties}\label{subs:new}
\begin{lemma}\label{lemma6} Assume that $f(t)$ and $g(t)\not\equiv f(t)$ are continuous piecewise differentiable functions in $[a,b]$ such that $g(t)\leq f(t)$ in $(a,b)$ and $f(a)=g(a)$ and $f(b)=g(b)$. If $f(t)$ is convex and increasing in $[a,b]$, then $len_{[a,b]}(f(t))<len_{[a,b]}(g(t))$, where $len_{[a,b]}(f(t))$ means the length of curve $f(t)$ in interval $[a,b]$.
\end{lemma}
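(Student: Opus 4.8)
The plan is to recognize this as the statement that a convex curve is the shortest among all curves it bounds from above, and to prove it via the non-expansiveness of the nearest-point projection onto a convex set. First I would extend $f$ to a convex function $\tilde f$ on all of $\mathbb{R}$ by appending affine pieces whose slopes equal the one-sided derivatives of $f$ at $a$ and $b$; this preserves convexity and leaves $f$ unchanged on $[a,b]$. Let $\Omega=\{(t,y):y\ge \tilde f(t)\}$ be its epigraph, a closed convex set whose boundary $\partial\Omega$ is exactly the graph of $\tilde f$ (the affine extension removes the vertical ``walls'' at $t=a,b$ that would otherwise complicate the projection). Write $\pi:\mathbb{R}^2\to\Omega$ for the nearest-point projection, which is well defined and $1$-Lipschitz because $\Omega$ is closed and convex.

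The two key steps are then to bound $len_{[a,b]}(\pi\circ g)$ from above and below. Since $g(t)\le f(t)=\tilde f(t)$ on $[a,b]$, every point of the curve $g$ lies outside the interior of $\Omega$, so $\pi$ carries $g$ to a curve lying on $\partial\Omega$, i.e.\ on the graph of $f$; non-expansiveness gives $len_{[a,b]}(\pi\circ g)\le len_{[a,b]}(g)$. For the reverse bound, the endpoints $(a,f(a))$ and $(b,f(b))$ lie on $\partial\Omega$ and are fixed by $\pi$. Reading off the $t$-coordinate of $\pi(\gamma(s))$ as $\gamma$ traverses $g$ defines a continuous function equal to $a$ at the start and $b$ at the end, so by the intermediate value theorem it attains every value in $[a,b]$; hence the image of $\pi\circ g$ contains the entire arc $\{(t,f(t)):t\in[a,b]\}$. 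A continuous path whose image covers a simple arc is at least as long as that arc, so $len_{[a,b]}(f)\le len_{[a,b]}(\pi\circ g)$. Chaining the two bounds yields $len_{[a,b]}(f)\le len_{[a,b]}(g)$.

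For the strict inequality I would use the hypothesis $g\not\equiv f$ together with $g\le f$: by continuity $g<f$ on a nonempty open subinterval, so there $g$ lies strictly in the exterior of $\Omega$ while both of its endpoints remain on $\partial\Omega$. Such an excursion is strictly shortened by the projection (a detour off the boundary and back cannot have the same length as its projected shadow on the boundary), so $len_{[a,b]}(\pi\circ g)<len_{[a,b]}(g)$, and hence $len_{[a,b]}(f)<len_{[a,b]}(g)$.

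I expect the main obstacle to be the covering argument in the lower bound, namely rigorously showing that $\pi\circ g$ sweeps across the whole arc of $f$ rather than bunching up or backtracking, and, relatedly, making the strictness clean; both hinge on controlling the projection near the endpoints. Equivalently, the whole lemma can be packaged through the classical monotonicity of perimeter under inclusion of a convex region: letting $\ell$ be the chord joining $(a,f(a))$ and $(b,f(b))$, the region $R_f$ bounded by $f$ and $\ell$ is convex and is contained in the region $R_g$ bounded by $g$ and $\ell$ (because $g\le f\le\ell$), whence $\mathrm{perim}(R_f)\le\mathrm{perim}(R_g)$, and subtracting the common chord length $len_{[a,b]}(\ell)$ gives the claim. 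This route uses only the convexity of $f$; the monotonicity hypothesis appears not to be needed for the length comparison itself.
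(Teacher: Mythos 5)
Your route is genuinely different from the paper's. The paper proves the bound by an elementary discretization: it inscribes a polygonal approximation $l_1,\dots,l_n$ in $f$, drops perpendiculars at the chord endpoints onto $g$, compares $len(l_i)\le len(k_i)\le len(k_i^{'})$, and takes a supremum over partitions; strictness is then handled separately by a tangent-line/right-triangle argument at a point $c$ where $g(c)<f(c)$. You instead invoke the nearest-point projection onto the (affinely extended) epigraph of $f$: non-expansiveness gives $len(\pi\circ g)\le len(g)$, and your IVT/covering argument (length of a path is at least the $\mathcal{H}^{1}$-measure of its image, which contains the simple arc of $f$) gives $len_{[a,b]}(f)\le len(\pi\circ g)$. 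That half of your argument is sound --- including the subtle point that $\pi\circ g$ may wander onto the affine extensions outside $[a,b]$, which the IVT handles correctly --- and it buys generality: it works in any dimension and, as you observe, never uses the monotonicity hypothesis, which the paper's perpendicular construction does lean on. The cost is importing nonelementary standard facts where the paper is self-contained.

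The gap is the strict inequality, which is exactly where the paper spends the second half of its proof. Non-expansiveness is only $1$-Lipschitz, so by itself it can never yield $len(\pi\circ g)<len(g)$, and your parenthetical ``a detour off the boundary and back cannot have the same length as its projected shadow'' is an assertion, not a proof; since $g\not\equiv f$ is a hypothesis, strictness is the entire content of the lemma beyond the easy bound. The fix stays inside your framework but needs a quantitative strengthening of non-expansiveness: for a closed convex $\Omega$, points $x,y\notin \mathrm{int}\,\Omega$, and $d(\cdot)$ the distance to $\Omega$, one has
\begin{align}
|x-y|^{2}\;\ge\;|\pi(x)-\pi(y)|^{2}+\bigl(d(x)-d(y)\bigr)^{2},
\end{align}
proved by writing $x=\pi(x)+d(x)n_{x}$, $y=\pi(y)+d(y)n_{y}$ and using $\langle \pi(y)-\pi(x),n_{x}\rangle\le 0$ and $\langle n_{x},n_{y}\rangle\le 1$. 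Then on a maximal subinterval $[c,d]$ where $g<f$, pick $e$ with $\delta:=d\bigl((e,g(e))\bigr)>0$; summing the inequality over partitions refined through $e$ and applying Minkowski's inequality gives $len(g|_{[c,d]})\ge\sqrt{L_{1}^{2}+\delta^{2}}+\sqrt{L_{2}^{2}+\delta^{2}}>L_{1}+L_{2}=len(\pi\circ g|_{[c,d]})$, where $L_{1},L_{2}$ are the lengths of the two projected halves. With this lemma inserted, your proof closes. Note that the same gap affects your perimeter-monotonicity packaging, which as stated also delivers only the non-strict inequality.
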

\begin{figure}\label{fig1}
  \centering
  \includegraphics[width=3.3in]{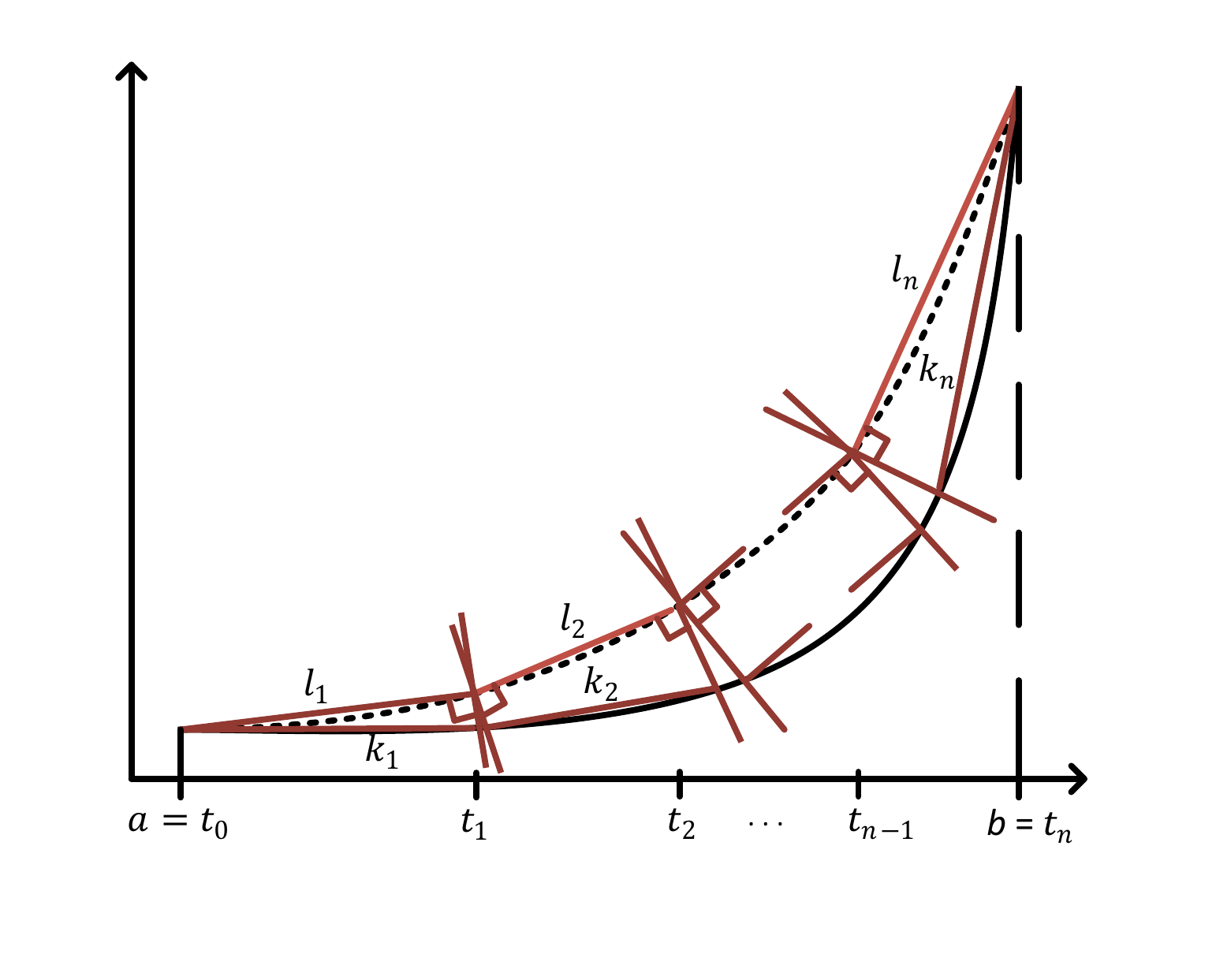}
  \caption{The doted curve $f(t)$, the continuous curve $g(t)$}
  \end{figure}

  \begin{figure}
    \centering
    \includegraphics[width=3.3in]{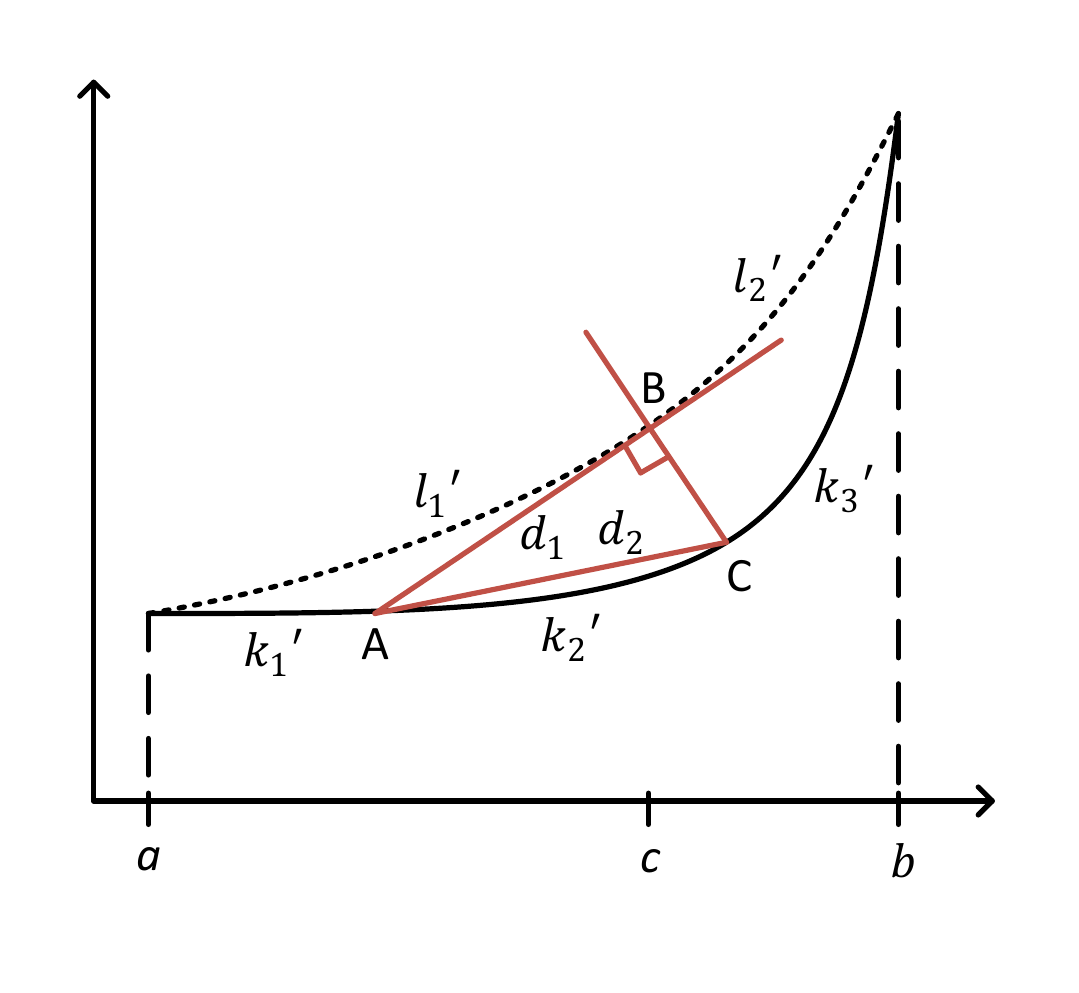}
    \caption{The doted curve $f(t)$, the continuous curve $g(t)$}
    \vspace*{-.5cm}
    \end{figure}

\begin{proof} As illustrated in Fig. 1, we choose arbitrarily points $t_{1},~t_{2},...,~t_{n-1}$ in interval $(a,b)$ and we draw the lines that connect any two adjacent points on $f(t)$ and we denote these line segments by $l_{1},~l_{2},...,~l_{n}$. Then, we draw the perpendicular lines on $l_{i}$ at both ends to collide $g(t)$ and draw line segments $k_{1},~k_{2},...,~k_{n}$ on $g(t)$ through the adjacent points of collisions.
Also assume that curves corresponding to $l_{i}$ and $k_{i}$ is $l_{i}^{'}$ and $k_{i}^{'}$, respectively on $f(t)$ and $g(t)$ in Fig. 1. According to Fig. 1, $\forall i$: $len(l_{i})\leq len(k_{i})\leq len(k_{i}^{'})$, where $len$ is calculated in the defined interval. Thus, we can write,
\begin{align}\label{convexshort}
\sum_{i=1}^{n}len(l_{i})\leq \sum_{i=1}^{n}len(k_{i})\leq \sum_{i=1}^{n}len(k_{i}^{'})\leq len_{[a,b]}(g(t)).
\end{align}
Since for all partitions $t_{1},~t_{2},...,~t_{n-1}$ in interval $(a,b)$, inequality (\ref{convexshort}) holds, we have:
\begin{align}\label{10}
len_{[a,b]}(f(t))=\sup_{I} \sum len(l_{i})\leq len_{[a,b]}(g(t)),
\end{align}
where $I$ is the set of all partitions in $(a,b)$.

Since $g(t)\not\equiv f(t)$, there exists $c$ such that $g(c)<f(c)$. We draw vertical line and tangent to $f(t)$ in $c$ as Fig. 2. If we assume that $l_{1}^{'}=f_{1}(t)$ and define $g_{1}(t)$ as the composition of two curves $k_{1}^{'}$ and $d_{1}$, then from (\ref{10}) we have,

\begin{align}
len(l_{1}^{'})\leq len(d_{1})+len(k_{1}^{'}).
\end{align}
On the other hand, since $d_{2}$ is hypotenuse in right triangle $ABC$, we obtain $d_{1}<d_{2}$. Hence,
\begin{align}\label{16}
len(l_{1}^{'})\leq len(d_{1})+len(k_{1}^{'}) <len(d_{2})+len(k_{1}^{'})\leq \\ \nonumber
len(k_{2}^{'})+len(k_{1}^{'}).~~~~~~~~~~~~~~~~~~
\end{align}
Similarly we can prove,
\begin{align}\label{17}
len(l_{2}^{'})<len(k_{3}^{'}).
\end{align}
(\ref{16}) and (\ref{17}) conclude,
\begin{align}
len(l_{1}^{'})+len(l_{2}^{'})\!\!<\!\!len(k_{1}^{'})+len(k_{2}^{'})+len(k_{3}^{'})\!\!\Rightarrow\\ \nonumber len_{[a,b]}(f(t))\!\!<\!\!len_{[a,b]}(g(t)).~~~~~~~~~~~
\end{align} \end{proof}

\begin{lemma}\label{III.7.}If in an interval we have $\dfrac{d}{dt}p^{*}(t)\neq 0$, then $B(t)\leq B^{*}(t)$.
\end{lemma}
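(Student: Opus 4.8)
The plan is to prove the claim by contradiction, reducing the question to the two distinct ways in which $p^*(t)$ can increase. First I would observe that Lemma \ref{sdddf} makes $p^*$ nondecreasing everywhere, so the hypothesis $\frac{d}{dt}p^*(t)\neq 0$ upgrades to $p^*$ being \emph{strictly} increasing on the interval. Suppose, for contradiction, that some feasible transmitted data curve $B$ satisfied $B(t_1)>B^*(t_1)$ at an interior point $t_1$. Because $p^*$ is strictly increasing at $t_1$, Lemma \ref{sfp} guarantees that at least one causality constraint is active there for the optimal policy: either $B^*(t_1)=B_s(t_1)$ or $E^*(t_1)=E_s(t_1)$. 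The first alternative is dispatched at once, since then $B(t_1)>B^*(t_1)=B_s(t_1)$ would violate the data-causality constraint \eqref{gh} for $B$. Thus the entire difficulty lies in the energy-active case $E^*(t_1)=E_s(t_1)$, where necessarily $B^*(t_1)<B_s(t_1)$.

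In that case energy causality \eqref{kj} gives $E(t_1)\le E_s(t_1)=E^*(t_1)$, so the competitor has sent strictly more data than the optimum while spending no more energy up to time $t_1$. To turn this into a contradiction I would compare throughputs analytically, using the concavity and monotonicity already established. Setting $G=E^*-E$, concavity of $r$ yields the pointwise bound $r(p^*)-r(p)\ge r'(p^*)(p^*-p)=r'(p^*)G'$, and integrating by parts on $[0,t_0]$ (using $G(0)=0$) bounds $B^*(t_0)-B(t_0)=\int_0^{t_0}[r(p^*)-r(p)]$ from below by $r'(p^*(t_0))\,G(t_0)$ minus $\int_0^{t_0} G\,r''(p^*)\,\dot{p}^{*}$, whose integrand is nonpositive since $r''\le 0$ and $\dot{p}^{*}\ge 0$ ($E^*$ convex, by the Corollary). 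Hence, \emph{provided} $G\ge 0$ on all of $[0,t_0]$, both the boundary term and the subtracted integral have the right sign and $B(t_0)\le B^*(t_0)$, contradicting $B(t_1)>B^*(t_1)$. Equivalently, this says that splicing the competitor's transmission on $[0,t_1]$ onto a down-shifted copy of $p^*$ on $(t_1,T]$ would beat $B^*(T)$.

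The main obstacle is exactly the proviso $G(t)=E^*(t)-E(t)\ge 0$ for every $t\le t_0$: since $E^*$ need only \emph{touch} the ceiling $E_s$ at $t_1$ and may run strictly below it earlier, a feasible competitor could have outrun $E^*$ in cumulative energy before $t_1$, spoiling the sign of the boundary term. I would resolve this through the structural description furnished by the Corollary, under which the strictly increasing region is an energy-hugging arc where $E^*=E_s$; at the instant this arc begins, either transmission is just starting or the data constraint was active immediately before, so that \emph{both} $E^*=E_s$ and $B^*=B_s$ hold at that transition point. Anchoring the comparison there, where feasibility already forces $B\le B_s=B^*$ and $E\le E_s=E^*$, lets me restrict the integration-by-parts estimate to the hugging arc itself, reducing the global inequality to an increment inequality whose boundary contribution is governed by the transition-point values. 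Making this boundary bookkeeping quantitatively tight, rather than merely sign-correct, is the delicate point and is where I would concentrate the effort; everything else follows from the concavity of $r$ and the monotonicity of $p^*$.
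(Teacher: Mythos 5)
Your opening reduction coincides with the paper's: from $B(t_1)>B^{*}(t_1)$ and data causality you get $B^{*}(t_1)<B_{s}(t_1)$, so Lemma \ref{sfp} forces $E^{*}(t_1)=E_{s}(t_1)$, hence $E(t_1)\le E^{*}(t_1)$. From there on, however, your argument has a genuine gap, and you have located it yourself: the concavity-plus-integration-by-parts estimate requires $G(t)=E^{*}(t)-E(t)\ge 0$ pointwise on the whole comparison interval, and feasibility gives no such thing (a competitor may front-load its energy expenditure wherever $E^{*}<E_{s}$). Your proposed repair does not close this hole. First, the structural claim it rests on --- that at the instant the energy-hugging arc begins, both $E^{*}=E_{s}$ and $B^{*}=B_{s}$ hold --- is unjustified and false in general: when data is abundant and the data constraint never binds, the arc can be preceded by a linear (constant-power) segment or by another energy arc, with $B^{*}<B_{s}$ throughout. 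Second, even granting such an anchor $a$, feasibility there gives $G(a)\ge 0$, which is the \emph{wrong} sign for the lower boundary term $-r'(p^{*}(a))G(a)$ produced by integration by parts; to absorb it you would need $B^{*}(a)-B(a)\ge r'(p^{*}(a))\bigl(E^{*}(a)-E(a)\bigr)$, a quantitative strengthening of the very statement being proved, which follows from nothing established. Your closing admission that this ``boundary bookkeeping'' is the unresolved delicate point confirms the proof is incomplete precisely at its decisive step. Note also that your parenthetical ``equivalently'' remark --- splicing the competitor's prefix onto a copy of $p^{*}$ on $(t_1,T]$ --- is infeasible as stated: the spliced data curve exceeds $B^{*}$ by the constant $B(t_1)-B^{*}(t_1)$, so it violates data causality wherever $B^{*}$ later touches $B_{s}$.

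The paper's proof avoids the pointwise energy comparison entirely; it needs only the single-point inequality $E(t_1)\le E^{*}(t_1)$ that you already derived. It follows the competitor on $[0,t_1)$, then stays \emph{silent} on $[t_1,t_1+\epsilon)$, with $\epsilon$ chosen so that $\int_{t_1}^{t_1+\epsilon}r(p^{*}(t))\,dt=B(t_1)-B^{*}(t_1)$, and then copies $p^{*}$ on $[t_1+\epsilon,T]$. The idle period burns off exactly the competitor's data surplus, so on the tail the spliced data curve coincides with $B^{*}$ (hence is feasible even where $B^{*}=B_{s}$), delivers $B^{*}(T)$, and uses strictly less energy than $E^{*}(T)$, contradicting the minimal-energy tie-breaking rule in the problem formulation; and if no such $\epsilon\le T-t_1$ exists, then $B(t_1)>B^{*}(T)$, so the competitor's prefix followed by silence already beats the optimum in throughput. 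This exchange construction is the missing ingredient: without it, or without pointwise control of $E^{*}-E$, the concavity estimate alone cannot be made to work.
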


\begin{proof} To prove this lemma we use contradiction. Assume that $t_{0}$ is a point such that $\dfrac{d}{dt}p^{*}(t_{0})\neq 0$ and there exists a transmitted data curve $B(t)$ that $B^{*}(t_{0})<B(t_{0})$ which results in $B^{*}(t_{0})< B_{s}(t_{0})$. Since $\dfrac{d}{dt}p^{*}(t_{0})\neq 0$, from Lemma \ref{sfp}, we have $E^{*}(t_{0})=E_{s}(t_{0})$ which means $E(t_{0})\leq E^{*}(t_{0})$. Consider the transmitted power curve $p_{1}(t)$ as below,

\begin{align}\label{p1}
p_{1}(t)=\left\{\begin{matrix}
p(t) & 0\leq t<t_{0} \\
0 & t_{0}\leq t<t_{0}+\epsilon \\
p^{*}(t) & t_{0}+\epsilon\leq t\leq T
\end{matrix}\right..
\end{align}
where $p(t)$ corresponds to the transmitted data curve $B(t)$. Now we show that $p_{1}(t)$ is more efficient than $p^{*}(t)$, thus we find an $\epsilon$ such that $0<\epsilon\leq T-t_{0}$ in (\ref{p1}) such that,
\begin{align}\label{20}
\int_{t_{0}}^{t_{0}+\epsilon}r(p^{*}(t))dt=B(t_{0})-B^{*}(t_{0}).
\end{align}
If there exists an $\epsilon$ that satisfies (\ref{20}), then $p_{1}(t)$ transmits $B_{1}(T)=B^{*}(T)$ data using $E_{1}(T)<E^{*}(T)$ energy and satisfies causality conditions: this means that we have a contradiction, and if there does not exist any $\epsilon$ in interval $(0,T-t_{0}]$, we assume that $p_{1}(t)$ is as follows,
\begin{align}
p_{1}(t)=\left\{\begin{matrix}
p(t) & 0\leq t<t_{0} \\
0 & t_{0}\leq t\leq T
\end{matrix}\right..
\end{align}
 From above, which results in $B^{*}(T)<B_{1}(t_{0})=B_{1}(T)$: this means we have a contradiction, too.\end{proof}

\begin{theorem}\label{III.9} If in the optimal policy , the data $B_{s}(T)$ is totally transmitted, i.e., $B_{s}(T)=B^{*}(T)$, then curve $B^{*}(t)$ has minimum length among the feasible transmitted data curves that transmit all of $B_{s}(T)$, that is, $B^{*}(t)$ minimizes the metric,
\begin{align}
len_{[0,T]}(B(t))=\int_{0}^{T}\sqrt{1+(\dfrac{d}{dt}B(t))^{2}}~dt
\end{align}among feasible curves which connects origin to $(T,B_{s}(T))$.
\end{theorem}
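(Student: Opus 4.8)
The plan is to use the two new properties just established. By the Corollary to Lemmas~\ref{jgh} and~\ref{sfp}, the optimal curve $B^{*}(t)$ is convex and nondecreasing on $[0,T]$, and it is affine (a straight segment) on every maximal interval where $\frac{d}{dt}p^{*}(t)=0$, since there the slope $r(p^{*})$ is constant. I would show that for an arbitrary feasible curve $B(t)$ with $B(T)=B_{s}(T)=B^{*}(T)$ one has $len_{[0,T]}(B^{*}(t))\le len_{[0,T]}(B(t))$. The difficulty is that a feasible $B(t)$ need not lie below $B^{*}(t)$ (a front-loaded transmission can exceed $B^{*}$ on its linear portions), so Lemma~\ref{lemma6} cannot be applied to $B$ directly. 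The idea is therefore to first replace $B$ by the lower curve $h(t):=\min\{B(t),B^{*}(t)\}$, argue that this replacement does not increase length, and then compare $h$ with $B^{*}$ through Lemma~\ref{lemma6}.

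First I would control where $B$ can exceed $B^{*}$. Wherever $\frac{d}{dt}p^{*}(t)\neq 0$, Lemma~\ref{III.7.} gives $B(t)\le B^{*}(t)$; hence $B(t)>B^{*}(t)$ is possible only on the complementary set, i.e.\ inside the maximal intervals on which $B^{*}$ is affine. Consequently every maximal open interval $(c_{1},c_{2})$ on which $B>B^{*}$ lies in a single affine piece of $B^{*}$: it cannot meet a strictly convex piece (there $B\le B^{*}$), nor straddle two affine pieces separated by a strictly convex one. By continuity and maximality, $B(c_{1})=B^{*}(c_{1})$ and $B(c_{2})=B^{*}(c_{2})$ (at $t=0$ this is the common origin). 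On such an interval $B^{*}$ is a straight segment joining $(c_{1},B^{*}(c_{1}))$ and $(c_{2},B^{*}(c_{2}))$, so the straight-line-is-shortest-path property yields $len_{[c_{1},c_{2}]}(B^{*})\le len_{[c_{1},c_{2}]}(B)$. Since $h=B$ on the set $\{B\le B^{*}\}$ while $h=B^{*}$ on each excursion, summing over the (at most countably many) excursions and the complementary region gives $len_{[0,T]}(h)\le len_{[0,T]}(B)$.

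Next, $h(t)\le B^{*}(t)$ on $[0,T]$ with $h(0)=B^{*}(0)$ and $h(T)=B^{*}(T)$, so Lemma~\ref{lemma6} (with $f=B^{*}$ convex and increasing and $g=h$) gives $len_{[0,T]}(B^{*})\le len_{[0,T]}(h)$, with equality only if $h\equiv B^{*}$. Chaining the two inequalities gives $len_{[0,T]}(B^{*})\le len_{[0,T]}(B)$, which is exactly the claim that $B^{*}$ minimizes the length metric among feasible curves joining the origin to $(T,B_{s}(T))$.

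The main obstacle is the step confining the excursions $\{B>B^{*}\}$ to single affine pieces and verifying that their endpoints are genuine crossings $B=B^{*}$; this is precisely where Lemma~\ref{III.7.} and the convexity Corollary do the work. Two edge cases need care: an initial flat segment of $B^{*}$ (when $p^{*}\equiv 0$ near $0$), which is covered by the same excursion argument because there $B\ge 0=B^{*}$ and the flat segment is itself a shortest path; and the hypothesis of Lemma~\ref{lemma6}, which asks for $B^{*}$ increasing rather than merely nondecreasing. On any flat piece $h$ and $B^{*}$ coincide, so Lemma~\ref{lemma6} need only be invoked on the strictly increasing pieces and contributes an equality elsewhere, which is enough for the chained bound.
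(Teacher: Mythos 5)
Your proof is correct and is essentially the paper's own argument: the paper likewise uses Lemma~\ref{III.7.} to confine the set $\{B>B^{*}\}$ to intervals where $B^{*}$ is linear, applies the straight-line-is-shortest-path property there, and applies Lemma~\ref{lemma6} on the intervals where $B\leq B^{*}$, then sums over the pieces. Your intermediate envelope $h=\min\{B,B^{*}\}$ is just a bookkeeping device that chains the same two interval-by-interval inequalities (and your handling of the flat initial segment, where $p^{*}\equiv 0$ forces $h\equiv B^{*}$, patches a hypothesis of Lemma~\ref{lemma6} that the paper's proof glosses over).
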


\begin{proof} Assume that $B(t)$ is a feasible curve such that $B(T)=B_{s}(T)$ and $B(t)\not\equiv B^{*}(t)$. Based on Lemma \ref{III.7.}, whenever in an interval $B(t)>B^{*}(t)$, then $B^{*}(t)$ must be linear in this interval. We divide the interval $[0,T]$ as follow:

1- Intervals $[c_{i},d_{i}]$, $1\leq i\leq n$, in which we have $B(t)\leq B^{*}(t)$, $B(c_{i})= B^{*}(c_{i})$ and $B(d_{i})= B^{*}(d_{i})$, $\forall i$.

2-Intervals $[e_{i},f_{i}]$ , $1\leq i\leq m$, in which we have $B^{*}(t)< B(t)$ in  $(e_{i},f_{i})$, $B(e_{i})= B^{*}(e_{i})$ and $B(f_{i})= B^{*}(f_{i})$, $\forall i$.

In the intervals of part 1 based on Lemma \ref{lemma6}, $len_{[c_{i},d_{i}]}B^{*}(t)\leq len_{[c_{i},d_{i}]}B(t)$ and if $B^{*}(t)\not\equiv B(t)$ in $[c_{i},d_{i}]$ then $len_{[c_{i},d_{i}]}B^{*}(t)< len_{[c_{i},d_{i}]}B(t)$.

In the intervals of part 2, as declared above, $B^{*}(t)$ in intervals $[e_{i},f_{i}]$, $\forall 1\leq i\leq m$ is linear and it conclude
$len_{[e_{i},f_{i}]}B^{*}(t)< len_{[e_{i},f_{i}]}B(t)$.
Hence, we have $len_{[c_{i},d_{i}]}B^{*}(t)< len_{[c_{i},d_{i}]}B(t)$.\end{proof}

\begin{conj}\label{III.8} If in the optimal policy $E_{s}(T)$ is totally used, i.e., $E_{s}(T)=E^{*}(T)$, then, curve $E^{*}(t)$ has minimum length among the feasible transmitted energy curves that use $E_{s}(T)$ totally, that is, $E^{*}(t)$ minimizes the metric,
\begin{align}
len_{[0,T]}(E(t))=\int_{0}^{T}\sqrt{1+(\dfrac{d}{dt}E(t))^{2}}dt.
\end{align}
\end{conj}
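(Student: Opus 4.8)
The plan is to mirror the proof of Theorem~\ref{III.9}, exploiting the fact (the corollary following Lemma~\ref{sfp}) that $E^{*}(t)$ is convex and increasing. I would start from an arbitrary feasible energy curve $E(t)\not\equiv E^{*}(t)$ with $E(T)=E_{s}(T)=E^{*}(T)$, and partition $[0,T]$ into the same two families used in Theorem~\ref{III.9}: intervals $[c_{i},d_{i}]$ on which $E(t)\le E^{*}(t)$ with equality at the endpoints, and intervals $[e_{i},f_{i}]$ on which $E(t)>E^{*}(t)$ in the interior with equality at the endpoints. On the first family $E^{*}$ is the upper curve and is convex and increasing, so Lemma~\ref{lemma6} applies verbatim and gives $len_{[c_{i},d_{i}]}(E^{*})\le len_{[c_{i},d_{i}]}(E)$, with strict inequality wherever the two curves differ.

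The work is all in the second family, where $E^{*}$ lies below $E$ and Lemma~\ref{lemma6} cannot be invoked with $E^{*}$ as the upper curve. Here I would reduce to showing that $E^{*}$ is \emph{linear} on each $[e_{i},f_{i}]$; once this is established, $E^{*}$ is the straight chord joining $(e_{i},E^{*}(e_{i}))$ and $(f_{i},E^{*}(f_{i}))$, and since the straight segment is the shortest curve between two points, $len_{[e_{i},f_{i}]}(E^{*})<len_{[e_{i},f_{i}]}(E)$. Summing over both families would then yield $len_{[0,T]}(E^{*})\le len_{[0,T]}(E)$, strict whenever $E\not\equiv E^{*}$, which is the claimed minimality.

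To obtain linearity of $E^{*}$ on the second family I would prove an energy analogue of Lemma~\ref{III.7.}: if $\tfrac{d}{dt}p^{*}(t_{0})\neq 0$ then $E(t_{0})\le E^{*}(t_{0})$ for every feasible $E$ using all the energy. Its contrapositive says that $E(t_{0})>E^{*}(t_{0})$ forces $\tfrac{d}{dt}p^{*}(t_{0})=0$, so $p^{*}$ is constant and $E^{*}$ is linear throughout each $[e_{i},f_{i}]$, exactly as needed. To prove the analogue I would invoke Lemma~\ref{sfp}: at an instant where $p^{*}$ strictly increases, either $E^{*}(t_{0})=E_{s}(t_{0})$ or $B^{*}(t_{0})=B_{s}(t_{0})$. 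In the first case $E(t_{0})>E^{*}(t_{0})=E_{s}(t_{0})$ immediately violates the energy causality constraint~\eqref{kj}, so such an $E$ is infeasible and the inequality holds trivially.

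The hard part is the second case, $B^{*}(t_{0})=B_{s}(t_{0})$, which has no clean counterpart on the data side. Writing $B(t)$ for the data curve produced by $E$ through~\eqref{m}, the inequality $E(t_{0})>E^{*}(t_{0})$ only says $E$ has spent strictly more energy than $E^{*}$ by time $t_{0}$; because $r$ is strictly concave, more total energy need not transmit more data, so feasibility, $B(t_{0})\le B_{s}(t_{0})=B^{*}(t_{0})$, does not by itself contradict $E(t_{0})>E^{*}(t_{0})$. The resolution I would pursue is that, since $E^{*}$ is convex (its slope $p^{*}$ is nondecreasing by Lemma~\ref{sdddf}), any competing feasible $E$ that exceeds $E^{*}$ on a data-limited epoch must postpone its energy expenditure so as not to overshoot $B_{s}(t)=B^{*}(t)$ at earlier instants -- precisely the back-loaded behaviour that keeps $E^{*}$ short -- and I would try to convert this into a variational contradiction with the optimality of $p^{*}$ in the spirit of the reallocation~\eqref{p1} used in Lemma~\ref{III.7.}, matching the data while strictly decreasing the length. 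Making this last step rigorous for a general strictly concave $r$ is the main obstacle, and is presumably why the statement is posed as Conjecture~\ref{III.8} rather than proved like Theorem~\ref{III.9}.
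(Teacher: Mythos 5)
First, a point of order: the paper does not prove this statement at all. It is posed as Conjecture~\ref{III.8}, and the Remark that follows it only explains what \emph{would} follow if it could be proved (namely, a shortest-path characterization of the optimal offline policy). So there is no proof in the paper to compare yours against; the only question is whether your sketch closes the gap, and --- as you yourself concede --- it does not. Your reduction is sound as far as it goes: decomposing $[0,T]$ into intervals where $E\le E^{*}$, on which Lemma~\ref{lemma6} applies because $E^{*}$ is increasing and convex (Lemma~\ref{sdddf}), and intervals where $E>E^{*}$, on which everything hinges on an energy analogue of Lemma~\ref{III.7.}, i.e., on showing that $E^{*}$ is linear wherever a feasible competitor exceeds it. You also handle the energy branch of Lemma~\ref{sfp} correctly: if $E^{*}(t_{0})=E_{s}(t_{0})$, then $E(t_{0})>E^{*}(t_{0})$ violates \eqref{kj} outright.

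The genuine gap is exactly where you place it, and it is worth being precise about why it cannot be patched by mimicking Lemma~\ref{III.7.}. On the data side, $B(t_{0})>B^{*}(t_{0})$ together with \eqref{gh} forces $B^{*}(t_{0})<B_{s}(t_{0})$, which kills the data branch of Lemma~\ref{sfp}, leaves the energy branch $E^{*}(t_{0})=E_{s}(t_{0})$, hence $E(t_{0})\le E^{*}(t_{0})$: the competitor has sent strictly more data using no more energy, and the idle-then-resume reallocation \eqref{p1} converts that surplus into a contradiction with optimality. On the energy side the analogous chain breaks at the first link: $E(t_{0})>E^{*}(t_{0})$ does \emph{not} rule out the data branch $B^{*}(t_{0})=B_{s}(t_{0})$, because under a strictly concave $r$ a curve that has spent more energy may well have transmitted less data (bursty spending is inefficient per joule, by \eqref{m} and Jensen), so the competitor can satisfy \eqref{gh} with room to spare; it is merely wasteful, not better, and no contradiction with the optimality of $p^{*}$ is available. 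In such a data-limited epoch $E^{*}$ follows $\int r^{-1}(\frac{d}{dt'}B_{s}(t'))dt'$ and can be strictly convex, a feasible competitor can lie strictly above it, and then neither Lemma~\ref{lemma6} (wrong side: $E^{*}$ is the lower curve) nor a chord argument (no linearity) applies. Your closing ``variational'' suggestion is a direction, not an argument: what must actually be shown is that data causality alone forces any feasible curve lying above $E^{*}$ in such an epoch to be longer, and neither your sketch nor the paper establishes this. In short, your attempt is a correct partial reduction plus an accurate diagnosis of the open step --- which matches, but does not improve on, the paper's own decision to leave the statement as a conjecture while proving only the data-side counterpart, Theorem~\ref{III.9}.
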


\begin{theorem}\label{III.11} In the optimal policy we have,
\begin{align}
\frac{d}{dt}B^{*}(t)\rvert_{t=T^{-}}\leq \max_{[0,T]} \frac{d}{dt}B(t),
\end{align}
where $B(t)$ is any arbitrary feasible transmitted data curve which connects the origin to $(T,B^{*}(T))$.
\end{theorem}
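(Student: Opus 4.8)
The plan is to prove the equivalent statement that the terminal slope of $B^{*}$ is the smallest maximal slope achievable by any feasible curve reaching $(T,B^{*}(T))$. First I would record two consequences of the earlier results: since $B^{*}(t)=\int_{0}^{t}r(p^{*}(t'))\,dt'$ we have $\frac{d}{dt}B^{*}(t)=r(p^{*}(t))$, and because $p^{*}$ is nondecreasing (Lemma \ref{sdddf}) and $r$ increases monotonically, the slope $\frac{d}{dt}B^{*}$ is nondecreasing on $[0,T]$. Hence its supremum is attained in the limit $t\to T^{-}$, i.e. $\max_{[0,T]}\frac{d}{dt}B^{*}(t)=\frac{d}{dt}B^{*}(t)\rvert_{t=T^{-}}=:s$. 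I would then argue by contradiction: suppose some feasible $B(t)$ with $B(0)=0$ and $B(T)=B^{*}(T)$ satisfies $\max_{[0,T]}\frac{d}{dt}B(t)=:s'<s$.

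Next I would locate the time at which the slope of $B^{*}$ overtakes $s'$ for good. Set $t_{0}=\sup\{t\in[0,T):\frac{d}{dt}B^{*}(t)\le s'\}$. Because $\frac{d}{dt}B^{*}$ is nondecreasing with terminal value $s>s'$, this set is an initial interval and $t_{0}<T$, with $\frac{d}{dt}B^{*}(t)>s'$ for every $t\in(t_{0},T)$. On $(t_{0},T)$ we then have $\frac{d}{dt}B^{*}(t)>s'\ge\frac{d}{dt}B(t)$, so the difference $B(t)-B^{*}(t)$ is strictly decreasing there. Since $B(T)-B^{*}(T)=0$, this forces $B(t)>B^{*}(t)$ for all $t\in(t_{0},T)$, and by continuity of $B$ and $B^{*}$ at $t_{0}$ one obtains the strict inequality $B(t_{0})>B^{*}(t_{0})$ as well.

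The contradiction then comes from Lemma \ref{III.7.}. At $t_{0}$ the slope of $B^{*}$ passes from $\le s'$ to $>s'$, so $p^{*}=r^{-1}(\frac{d}{dt}B^{*})$ increases at $t_{0}$ (whether through a continuous rise or a jump); by Lemma \ref{sfp} this is exactly the configuration governed by Lemma \ref{III.7.}, which yields $B(t_{0})\le B^{*}(t_{0})$, contradicting $B(t_{0})>B^{*}(t_{0})$. Finally I would dispose of the degenerate case in which $\{t:\frac{d}{dt}B^{*}(t)\le s'\}$ is empty (so $t_{0}=0$ and no upward crossing exists): there $\frac{d}{dt}B^{*}(t)>s'\ge\frac{d}{dt}B(t)$ on all of $(0,T)$, and integrating gives $B^{*}(T)=\int_{0}^{T}\frac{d}{dt}B^{*}(t)\,dt>\int_{0}^{T}\frac{d}{dt}B(t)\,dt=B(T)$, again contradicting $B(T)=B^{*}(T)$.

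The step I expect to be the main obstacle is the rigorous handling of the crossing point $t_{0}$: justifying that $p^{*}$ genuinely increases at the instant $t_{0}$ in the sense required by Lemmas \ref{sfp} and \ref{III.7.} when the slope may jump rather than rise continuously (if the rise is continuous one may instead pick an interior point of $(t_{0},T)$ where $p^{*}$ is strictly increasing and $B>B^{*}$, invoking Lemma \ref{III.7.} directly), and confirming that the strict inequality $B(t_{0})>B^{*}(t_{0})$ survives the passage to the boundary of $(t_{0},T)$. Everything else reduces to the monotonicity of $r(p^{*})$ and an elementary comparison of the two curves on the final interval.
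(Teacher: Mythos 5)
Your proof is correct at the paper's level of rigor, but it takes a genuinely different route from the paper's own argument. The paper argues directly: either $\frac{d}{dt}B(t)\rvert_{t=T^{-}}\geq \frac{d}{dt}B^{*}(t)\rvert_{t=T^{-}}$ and there is nothing to prove, or else $B^{*}<B$ on a maximal terminal interval $(T-\epsilon,T)$ with $B^{*}(T-\epsilon)=B(T-\epsilon)$; Lemma \ref{III.7.}, used in contrapositive form exactly as in Theorem \ref{III.9}, forces $B^{*}$ to be linear on $[T-\epsilon,T]$, so its slope there equals its terminal slope, and the ordering $B^{*}\leq B$ with equality at $T-\epsilon$ then yields $\frac{d}{dt}B(t)\rvert_{t=(T-\epsilon)^{+}}\geq \frac{d}{dt}B^{*}(t)\rvert_{t=T^{-}}$, which is the claim. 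You instead argue by contradiction on the maximal slope: via Lemma \ref{sdddf} you identify the terminal slope $s$ of $B^{*}$ with its maximal slope, assume a feasible $B$ has maximal slope $s'<s$, locate the last instant $t_{0}$ at which the slope of $B^{*}$ is $\leq s'$, deduce $B>B^{*}$ on $(t_{0},T)$ and (by strict monotonicity of $B-B^{*}$ there) also at $t_{0}$, and then apply Lemma \ref{III.7.} at the single point $t_{0}$, where $p^{*}$ must jump. Both proofs pivot on Lemma \ref{III.7.} applied to a terminal interval on which $B>B^{*}$; the structural difference is where and in what form the lemma is invoked. The paper's anchor point is the last intersection of the two \emph{curves}, and its invocation matches the interval form in which Lemma \ref{III.7.} is stated; your anchor point is the last threshold crossing of the \emph{slope} of $B^{*}$, which buys a clean minimax interpretation ($B^{*}$ attains the smallest possible maximal slope) but requires the pointwise version of Lemma \ref{III.7.} at a jump discontinuity of $p^{*}$, where $\frac{d}{dt}p^{*}$ does not literally exist --- the obstacle you correctly flag. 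That extension is legitimate, since the proof of Lemma \ref{III.7.} only needs Lemma \ref{sfp}, which is stated for any instant at which $p^{*}$ increases, jumps included; and in fairness the paper needs the same strengthening implicitly, because concluding linearity of $B^{*}$ on an interval where $B>B^{*}$ also requires excluding jumps of $p^{*}$ inside that interval. Your separate disposal of the degenerate case (slope of $B^{*}$ exceeding $s'$ on all of $(0,T)$) by direct integration is sound, and the paper has no analogue of it only because its direct formulation never needs it.
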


\begin{proof}If $\frac{d}{dt}B^{*}(t)\rvert_{t=T^{-}}\leq \frac{d}{dt}B(t)\rvert_{t=T^{-}}$ then proof is complete. Hence we assume that $\frac{d}{dt}B(t)\rvert_{t=T^{-}}<\frac{d}{dt}B^{*}(t)\rvert_{t=T^{-}}$. Since $\frac{d}{dt}B(t)\rvert_{t=T^{-}}<\frac{d}{dt}B^{*}(t)\rvert_{t=T^{-}}$ and $B(T)=B^{*}(T)$, there exists an $\epsilon$ such that for any $t\in (T-\epsilon,T)$, $B^{*}(t)< B(t)$. Also, we assume that $\epsilon=\min\{T-t:B(t)=B^{*}(t),~ t\in [0,T)\}$. Thus based on Lemma \ref{III.7.} for $t\in [T-\epsilon,T]$, $B^{*}(t)$ is linear ($\frac{d}{dt}B^{*}(t)\rvert_{t=(T-\epsilon)^{+}}=\frac{d}{dt}B^{*}(t)\rvert_{t=T^{-}}$). Thus due to $B^{*}(t)\leq B(t)$ for $t\in [T-\epsilon,T]$ and $B^{*}(T-\epsilon)=B(T-\epsilon)$ then $\frac{d}{dt}B^{*}(t)\rvert_{t=(T-\epsilon)^{+}}\leq \frac{d}{dt}B(t)\rvert_{t=(T-\epsilon)^{+}}$. This completes the proof.\end{proof}

\begin{rem} The importance of the Conjecture \ref{III.8} and Theorem~\ref{III.9} comes from the fact that if we can prove Conjecture~\ref{III.8}, we may propose the optimal offline algorithm as the shortest path curve among all admissible policies which use all the energy or send all the data in data buffer until $T$. As a result, we have a method to describe the optimal offline algorithm.
\end{rem}
\begin{rem}
Theorem \ref{III.11} describes that the maximum transmitted instantaneous power in the optimal policy is less than the maximum transmitted instantaneous power for all feasible policies. This becomes very significant if we impose an additional maximum power constraint in the optimization problem in \eqref{kjhl}-\eqref{gh}. In this case, we first solve the problem without considering the maximum power constraint. If the optimal policy satisfies the maximum power constraint, we are done; otherwise, there is no feasible policy that can transmit the amount of data transmitted by optimal policy with no maximum power constraint. Also, neither all of data in data buffer is sent, nor all of energy until $T$ is used for the optimal policy. As a result we can determine the cases where the maximum power constraint is a limiting element.
\end{rem}

 \section{Optimal Offline Algorithm}

  In this section, we propose the optimal offline algorithm for the optimization problem \eqref{kjhl}-\eqref{gh}. First, for simplicity, we explain the main idea of this algorithm for the discrete arrival data and discrete harvested energy curves, i.e., we assume that at instants $0,t_{1}^{E},...$ Tx harvests energy in amount of $E_{0}, E_{1}, ...$ and at instants $0,t_{1}^{B},...$ the data arrives in amount of $B_{0}, B_{1}, ...$ bits.

   \begin{defn}(Event point):
   every time in which the energy is harvested or the data is arrived is an event point.
   \end{defn}

\begin{theorem}\label{IV.1}
Let $u_{i}$ be the $(i+1)-th$ event point. Assume that there exist $m-1$ event points before $T$, and also assume that $u_{m}=T,~ u_{0}=0$. Then the optimal policy structure of the transmitted rate is as follows.
\begin{align}\label{12}
                          i_{n}= \argmin_{i}\bigg \{ \min_{u_{i_{n-1}}< u_{i}\leqslant u_{m}}r( \frac{E_{s}(u_{i}^{-})-E(u_{i_{n-1}})}{u_{i}-u_{i_{n-1}}})\nonumber\\ ,\min_{u_{i_{n-1}}< u_{i}\leqslant u_{m}}\frac{B_{s}(u_{i}^{-})-B(u_{i_{n-1}})}{u_{i}-u_{i_{n-1}}}\bigg \}                  \end{align}
\begin{align}\label{13}  r_{n}={\min}\bigg \{r( \frac{E_{s}(u_{i_{n}}^{-})-E(u_{i_{n-1}})}{u_{i_{n}}-u_{i_{n-1}}}),\nonumber\\\frac{B_{s}(u_{i_{n}}^{-})-B(u_{i_{n-1}})}{u_{i_{n}}-u_{i_{n-1}}}  \bigg\}
\end{align}
\begin{align}\label{eqn:update}
                          B(u_{i_{n}})=B(u_{i_{n-1}})+(u_{i_{n}}-u_{i_{n-1}})r_{n},
\end{align}
where  $u_{i_{0}}=0$, $n\in N$, and $r_{n}$ is the transmitted rate in the interval $(u_{i_{n-1}},u_{i_{n}})$.
\end{theorem}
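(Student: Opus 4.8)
The plan is to prove the theorem by induction on the stage index $n$: assuming the recursion \eqref{12}--\eqref{eqn:update} has reproduced the optimal curve on $[0,u_{i_{n-1}}]$, I would show that the constant rate $r_n$ it prescribes is the optimal rate on $(u_{i_{n-1}},u_{i_n}]$, so that $B(u_{i_n})=B^{*}(u_{i_n})$ and $E(u_{i_n})=E^{*}(u_{i_n})$. Writing $p_n=r^{-1}(r_n)$ (well defined since $r$ is strictly increasing) and using the implicit energy update $E(u_{i_n})=E(u_{i_{n-1}})+(u_{i_n}-u_{i_{n-1}})p_n$, there are three things to verify at each stage: feasibility of $r_n$, monotonicity $r_n\le r_{n+1}$, and optimality of the greedy choice. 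Since the residual problem on $[u_{i_1},T]$ has the same form with the updated offsets $B(u_{i_1})$ and $E(u_{i_1})$, it suffices to treat the first stage and then invoke the induction hypothesis.

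Feasibility follows directly from \eqref{12}. Because $i_1$ attains the minimum of the bracketed bottleneck over every event point in $(0,u_m]$, each $u_j\le u_{i_1}$ obeys $B_s(u_j^-)\ge r_1 u_j$ and $r(E_s(u_j^-)/u_j)\ge r_1$, i.e.\ $E_s(u_j^-)\ge p_1 u_j$; hence transmitting at the constant rate $r_1$ (power $p_1$) keeps $B(t)\le B_s(t)$ and $E(t)\le E_s(t)$, so \eqref{gh} and \eqref{kj} hold. For monotonicity I would argue directly: for any event point $u_j>u_{i_1}$, subtracting the increments $B(u_{i_1})=r_1 u_{i_1}$ and $E(u_{i_1})=p_1 u_{i_1}$ from the stage-$1$ inequalities gives $B_s(u_j^-)-B(u_{i_1})\ge r_1(u_j-u_{i_1})$ and $E_s(u_j^-)-E(u_{i_1})\ge p_1(u_j-u_{i_1})$. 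Thus every bottleneck measured from $u_{i_1}$ is at least $r_1$, so $r_2\ge r_1$. This reproduces the nondecreasing-power property of Lemma~\ref{sdddf}, and since a breakpoint arises only when a bottleneck constraint binds, it is consistent with Lemma~\ref{sfp}.

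Optimality of the first-stage rate is the crux. Let $r_1^{*}$ be the optimal rate on the first optimal epoch, which is constant by the Corollary following Lemma~\ref{sfp}. Feasibility at the global bottleneck $u_{i_1}$ gives $r_1^{*}\le r_1$. Suppose $r_1^{*}<r_1$ and set $p_1^{*}=r^{-1}(r_1^{*})$. Since $r_1$ is the minimum bottleneck over all event points up to $T$ and $B_s,E_s$ are step functions, one checks $B^{*}(t)=r_1^{*}t<r_1 t\le B_s(t)$ and $E^{*}(t)=p_1^{*}t<p_1 t\le E_s(t)$ for all $t\in(0,T]$, so no causality constraint is ever tight; by Lemma~\ref{sfp} the rate can then never increase, forcing $p^{*}(t)\equiv p_1^{*}$ and $B^{*}(T)=r_1^{*}T$. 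But the constant-rate-$r_1$ curve is feasible on all of $[0,T]$ (the same inequalities with $r_1^{*}$ replaced by $r_1$) and transmits $r_1 T>r_1^{*}T$, contradicting optimality. Hence $r_1^{*}=r_1$, the first optimal breakpoint is at $u_{i_1}$, and $B^{*}$ coincides with the recursion on $[0,u_{i_1}]$; applying the induction hypothesis to the residual problem yields $B^{*}\equiv B$ and establishes \eqref{12}--\eqref{eqn:update}.

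Finally, when the minimizing term in \eqref{13} is the data bottleneck, the stage consumes strictly less than the available energy, and the constant rate $r_n$ is exactly the energy-minimizing way to deliver that data by Jensen's inequality (Lemma~\ref{Jensen}), which secures the secondary objective in \eqref{kjhl}. The step I expect to be most delicate is the optimality argument above: ruling out a premature slowdown $r_1^{*}<r_1$ must be carried out carefully across both the energy and data constraints and at the one-sided limits $u_j^-$ where $B_s$ or $E_s$ jumps, and the $\argmin$ in \eqref{12} must be interpreted consistently under ties (several event points attaining the same bottleneck), in which case $u_{i_1}$ is non-unique even though the resulting curve is not. Handling these degeneracies, rather than the feasibility and monotonicity bookkeeping, is where the real work lies.
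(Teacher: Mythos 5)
Your argument is sound and reaches the right conclusion, but it runs the contradiction in the mirror-image direction of the paper's proof. Both rest on the same structural lemmas --- Lemma \ref{jgh} (constant rate between event points, so the optimum is piecewise linear with breakpoints at event points), Lemma \ref{sdddf} (nondecreasing power), and Lemma \ref{sfp} (a rate increase requires a tight constraint) --- and both reduce to the first stage and then recurse. The paper uses Lemma \ref{sfp} to argue that the optimal first rate $r_{1}^{*}$ must equal a bottleneck value at its own first breakpoint $u_{k}$, which is an event point, so $r_{1}^{*}\ge r_{1}$ by minimality of $r_{1}$; it then kills $r_{1}^{*}>r_{1}$ by a case split on the location of $u_{k}$: if $u_{k}>u_{i_{1}}$ a causality condition is violated at $u_{i_{1}}$, and if $u_{k}<u_{i_{1}}$ the rate would have to decrease later to honor the bottleneck at $u_{i_{1}}$, contradicting Lemma \ref{sdddf}. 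You do the opposite: you first get $r_{1}^{*}\le r_{1}$ from causality at $u_{i_{1}}$ (note this step silently needs Lemma \ref{sdddf} --- if the optimal first epoch ends before $u_{i_{1}}$, nondecreasingness is what converts the integral constraints $B^{*}(u_{i_{1}})\le B_{s}(u_{i_{1}}^{-})$ and $E^{*}(u_{i_{1}})\le E_{s}(u_{i_{1}}^{-})$ into a bound on the first-epoch rate), and then you kill $r_{1}^{*}<r_{1}$ by the chain: both constraints strictly slack everywhere, hence by Lemma \ref{sfp} the rate never increases, hence $p^{*}\equiv p_{1}^{*}$, hence it is beaten by the feasible constant-$r_{1}$ policy. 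Both routes are valid; yours is heavier but more self-contained, since it forces you to verify what the paper leaves implicit --- feasibility of the recursion's output over the whole horizon, cross-stage monotonicity $r_{n+1}\ge r_{n}$, the $\argmin$ tie degeneracy (your resolution is correct: a tie at $u_{k}<u_{i_{1}}$ leaves the residual bottleneck from $u_{k}$ equal to $r_{1}$, so the resulting curve is unique), and the secondary minimum-energy objective via Lemma \ref{Jensen}, which the paper's proof of this theorem never addresses --- whereas the paper's route is shorter precisely because its contradiction needs only the local violation at $u_{i_{1}}$, never global feasibility of the constant-$r_{1}$ policy.
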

 \begin{rem}
 The above algorithm works as follows. First, we use \eqref{12} to find the event point which imposes the data rate bottleneck by comparing the proper slopes. The first term shows the point which imposes this constraint due to the harvested energy shortage while the second term is related to the point where the arrival data causes this constraint. Then, we use the found even point in \eqref{13} to determine the data rate (i.e., the slope), which also shows that the bottleneck is due to harvested energy or arrival data. \eqref{eqn:update} shows the transmitted data update.
 \end{rem}

 \begin{proof} We again use contradiction. First, it is concluded from Lemma \ref{jgh} that the transmitted power and rate curves must be constant between any two event points. Thus, the transmitted data and energy curves are piecewise linear and we must find the optimal curves among the piecewise linear functions. Without loss of generality, we assume that $n=1$:
 \begin{align}\label{14}  r_{1}=\min\big \{r ( \frac{E_{s}(u_{i_{1}}^{-})}{u_{i_{1}}}  ),\frac{B_{s}(u_{i_{1}}^{-})}{u_{i_{1}}}  \big \}.
                                                \end{align}
 Noting the optimal rate in interval $(0, u_{k})$ by $r_{1}^{*}$, if $r_{1}$ is not the optimal rate, based on Lemma \ref{sfp}, we must have:

  \begin{align}  r^{*}_{1}>\min\big \{r\left ( \frac{E_{s}(u_{i_{1}}^{-})}{u_{i_{1}}} \right ),\frac{B_{s}(u_{i_{1}}^{-})}{u_{i_{1}}}  \big \}.
                                                 \end{align}
  Based on Lemma \ref{sfp}, under the optimal policy, at least one of two following equations holds,
  \begin{align}  r_{1}^{*}=r\left ( \frac{E_{s}(u_{k}^{-})}{u_{k}}\right )~~~or~~~  r_{1}^{*}=\frac{B_{s}(u_{k}^{-})}{u_{k}}.
  \end{align}
We consider two cases: (i) if $u_{k}>u_{i_{1}}$, then at least one of the causality conditions is violated due to $r_{1}<r^{*}_{1}$ and \eqref{14}; (ii) if $u_{k}<u_{i_{1}}$, since the optimal policy (corresponds to $r^*$) must satisfy the constraint imposed by $r_1$ at $u_{i_1}$, the transmitted rate curve, i.e., $r^*$, must be decreased at least in an interval which is inconsistent with Lemma \ref{sdddf}. Therefore, \eqref{14} gives $r_{1}^{*}$. A similar argument proves that $r_{2},r_{3},...$, also satisfy this structure.
   \end{proof}
 \begin{rem}
 Our algorithm (tailored to the discrete curves) differs from the one in \cite{yang2012optimal} in two aspects: (i) In \cite{yang2012optimal}, the purpose is to find the best scheme among \emph{piecewise linear functions} for the transmitted data curve. However, we first prove that the optimal transmitted data curve among all functions (assumptions of section II) is a piecewise linear function. (ii) \cite{yang2012optimal} investigates a minimization completion time problem, however we investigate a throughput maximization problem. The algorithm of \cite{yang2012optimal} is as follows: it first calculates the minimum needed energy and a lower bound $T_{1}$ for completion time to transmit a given amount of data. Next, it  uses \eqref{12} and \eqref{13} from $0$ to $T_{1}$ to find the next point $u_{k}$, where it again calculates minimum needed energy and another lower bound for completion time to transmit the remaining data. Then, this procedure is repeated from $u_{k}$ to the new lower bound finally to deliver all of given data. However, we use \eqref{12} and \eqref{13} repeatedly to compare all the event points (i.e., $u_{i}:~i=0,1,...,m$). We remark that our main contribution compared to \cite{yang2012optimal} is considering the continuous model for harvested energy and arrival data curves and the above differences are the minor ones considering our general three-step algorithm tailored to the discrete curves.
   \end{rem}

 Now, we describe our proposed algorithm for the $\emph{continuous}$ model. First, we define two variables $r_{a}$ and $r_{b}$ as follow:
 \begin{align}
      r_{b}=\lim_{u \to x^{+}}r(\frac{E_{s}(u)-E(x)}{u-x}),~~~ r_{a}=\lim_{u \to x^{+}}\frac{B_{s}(u)-B(x)}{u-x}.
 \end{align}
 Our algorithm has three steps which depend on two curves $B_{s}(t)$ and $E_{s}(t)$. This algorithm is the extension of the previous algorithm that has been presented in Theorem \ref{IV.1}. The main difference of these two algorithms is that in the algorithm of Theorem \ref{IV.1}, the optimal transmitted energy/data curve is piecewise linear; while in this algorithm, in some intervals, the energy/data transmitted curve could be equal to the harvested energy curve/arrival data curve (which are continuous in general).

In this algorithm for determining the intervals in which transmitted data/energy curve is equal to arrival data/harvested energy curve, we first check some conditions; based on them, we have three states:

\textbf{State A} (linear part): The transmitted data and energy curves are straight lines.

\textbf{State B} ($E^{*}(t)=E_{s}(t)$): The transmitted energy curve is equal to harvested energy curve.

\textbf{State C} ($B^{*}(t)=B_{s}(t)$): The transmitted data curve is equal to arrival data curve.

Our algorithm works as follows. First, the conditions in Tables~I and II are checked (the details will be provided later in this section). If the conditions in Table~I (or II) hold, the algorithm enters state B (or C). Otherwise, the algorithm enters state A.

When we are in state A, the transmitted data and energy curves are straight lines and we can derive the interval of linearity. The slope of the transmitted rate is derived as:
 \begin{align}\label{31}
  \widetilde{r}=\min\bigg \{\min_{x< u\leq T}r\left ( \frac{E_{s}(u)-E(x)}{u-x} \right ) ,\nonumber\\\min_{x< u\leq T} \frac{B_{s}(u)-B(x)}{u-x} \bigg \}
   \end{align}
and the end of the linearity interval is:
\begin{align}
     \widetilde{u}= \max \bigg\{ \argmin_{u}\bigg \{\min_{x< u\leq T}r\left ( \frac{E_{s}(u)-E(x)}{u-x} \right ) ,\nonumber\\\min_{x< u\leq T} \frac{B_{s}(u)-B(x)}{u-x} \bigg \} \bigg\}
\end{align}
At the end of the interval the algorithm repeats from the beginning.

Once we are in state B or C, the corresponding conditions (in Table~I or II) must be checked continuously. The algorithm stays in these states as long as these conditions continue to hold. If one of the conditions does not hold, the algorithm starts from the beginning.

If both $r_{a}$ and $r_{b}$ are unbounded, the conditions of Tables~I and II fail to hold and the algorithm enters state A. If both $r_{a}$ and $r_{b}$ are bounded, the conditions $\si{B_1},~\si{B_2}$ and $\si{B_3}$ in Table I and $\si{C_1},~\si{C_2},~\si{C_3}$ and $\si{C_4}$ in table II must be checked. Otherwise, B4 or C5 must be checked. There are some notions used in Tables I and II defined as below.

     \begin{align} r_{c}(x)=\inf_{x< u\leq T}\bigg\{ \frac{B_{s}(u)-B(x)}{u-x}\bigg\}
       \end{align}
     \begin{align}
     r_{d}(x)=\inf_{x< u\leq T}\bigg\{ r\left(\frac{E_{s}(u)-E(x)}{u-x}\right)\bigg\}.
       \end{align}

       \begin{table}[h!]
       \centering
       \caption{}
        \begin{tabular}{|c| p{2.8in}|}
        \hline
         & Condition that enters algorithm into state B\\
        \hline \hline
        \si{B_1}& $\overline{\left\{L_{1}>E_{s}(t),x,\epsilon_{1} \right \}}, \left \{L_{2}>B_{s}(t),x,\epsilon_{2} \right \}$, $\overline{\big\{L_{1}\times E_{s}(t),x \big \}}$, $m=r(E_{s}^{'}(x))$.  \\ \hline
        \si{B_2}&  $\overline{\left\{L_{1}>E_{s}(t),x,\epsilon_{1} \right \}},\overline{ \left \{L_{2}>B_{s}(t),x,\epsilon_{2} \right \}}$, $\overline{\big\{L_{1}\times E_{s}(t),x \big \}}$, $m=r(E_{s}^{'}(x)),\big\{L_{2}\times B_{s}(t),x \big \}$. \\ \hline
        \si{B_3}&  $\overline{\left\{L_{1}>E_{s}(t),x,\epsilon_{1} \right \}},\overline{ \left \{L_{2}>B_{s}(t),x,\epsilon_{2} \right \}}$, $\overline{\big\{L_{1}\times E_{s}(t),x \big \}}$, $m=r(E_{s}^{'}(x)),\overline{\big\{L_{2}\times B_{s}(t),x \big \}}$. \\ \hline
        \si{B_4}& $r_{b}$ is bounded and $r_{a}$ is unbounded and we have  $\left \{ L_{1}>E_{s}(t),x,\epsilon \right \}~$ and $\left\{L_{1}\times E_{s}(t),x \right \}$ and $m=r_{c}(x)$.\\ \hline
        \end{tabular}
       \end{table}

        \begin{table}[h!]
              \centering
              \caption{}
               \begin{tabular}{||p{0.2in}| p{2.8in}||}
               \hline
                & Condition that enters algorithm into state C\\ [0.5ex]
               \hline \hline
               \si{C_1}&  $\left\{L_{1}>E_{s}(t),x,\epsilon_{1} \right \}$, $\overline{ \left \{L_{2}>B_{s}(t),x,\epsilon_{2} \right \}}$, $m^{'}=B_{s}^{'}(x),\overline{\big\{L_{2}\times B_{s}(t),x \big \}}$.\\ \hline
               \si{C_2}&  $\overline{\left\{L_{1}>E_{s}(t),x,\epsilon_{1} \right \}},\overline{ \left \{L_{2}>B_{s}(t),x,\epsilon_{2} \right \}},\big\{L_{1}\times E_{s}(t),x \big \}$, $m^{'}=B_{s}^{'}(x),\overline{\big\{L_{2}\times B_{s}(t),x \big \}}$. \\ \hline
               \si{C_3}&  $\overline{\left\{L_{1}>E_{s}(t),x,\epsilon_{1} \right \}}$, $\overline{ \left \{L_{2}>B_{s}(t),x,\epsilon_{2} \right \}}$, $\overline{\big\{L_{1}\times E_{s}(t),x \big \}}$, $m=r_{c}(x),\overline{\big\{L_{2}\times B_{s}(t),x \big \}}, r_{c}(x)\neq r(E_{s}^{'}(x))$. \\ \hline
               \si{C_4}& It holds none below conditions:\\
               & 1- \si{B_1}, \si{B_2}, or \si{B_3} in Table 1.\\   & 2-$(\left\{L_{1}>E_{s}(t),x,\epsilon_{1} \right \},\left\{L_{2}>B_{s}(t),x,\epsilon_{2} \right \})$\\
                 & 3- $(\left\{L_{1}>E_{s}(t),x,\epsilon_{1} \right \},\overline{ \left \{L_{2}>B_{s}(t),x,\epsilon_{2} \right \}},\big\{L_{2}\times B_{s}(t),x \big \})$\\
                 & 4- $(\overline{\left\{L_{1}>E_{s}(t),x,\epsilon_{1} \right \}},~ \left \{L_{2}>B_{s}(t),x,\epsilon_{2} \right \},\big\{L_{1}\times E_{s}(t),x \big \})$\\
                 & 5- $(\overline{\left\{L_{1}>E_{s}(t),x,\epsilon_{1} \right \}},\overline{ \left \{L_{2}>B_{s}(t),x,\epsilon_{2} \right \}},\big\{L_{1}\times E_{s}(t),x \big \},\big\{L_{2}\times B_{s}(t),x \big \})$\\
                 & 6- $(\overline{\left\{L_{1}>E_{s}(t),x,\epsilon_{1} \right \}}, \left \{L_{2}>B_{s}(t),x,\epsilon_{2} \right \},m=r_{c}(x)$.\\
                 & 7- $(\left\{L_{1}>E_{s}(t),x,\epsilon_{1} \right \}, \overline{\left \{L_{2}>B_{s}(t),x,\epsilon_{2} \right \}},m^{'}=r_{d}(x))$\\
                 & 8- $(\overline{\left\{L_{1}>E_{s}(t),x,\epsilon_{1} \right \}},\overline{ \left \{L_{2}>B_{s}(t),x,\epsilon_{2} \right \}},\big\{L_{1}\times E_{s}(t),x \big \},\overline{\big\{L_{2}\times B_{s}(t),x \big \}},m^{'}=r_{d}(x))$\\
                 & 9- $(\overline{\left\{L_{1}>E_{s}(t),x,\epsilon_{1} \right \}},\overline{ \left \{L_{2}>B_{s}(t),x,\epsilon_{2} \right \}}$, $\overline{\big\{L_{1}\times E_{s}(t),x \big \}}$, $m=r_{c}(x),\big\{L_{2}\times B_{s}(t),x \big \})$
                \\ \hline
                 \si{C_5}& $r_{a}$ is bounded and $r_{b}$ is unbounded and we have $\left \{ L_{2}>B_{s}(t),x,\epsilon \right \}$, $\left\{L_{2}\times B_{s}(t),x \right \}$ and $m^{'}=r_{d}(x)$. \\ \hline
                       \end{tabular}
                      \end{table}
       We use the notation $\left \{ L>f(t),x,\epsilon \right \}$ to show that there exists an $\epsilon$ such that the straight line $L$ is above the function $f(t)$ in $(x,x+\epsilon)$ and $\overline{ \left \{L>f(t),x,\epsilon \right \}}$ to show that there exists no $\epsilon$ such that the straight line $L$ is above the function $f(t)$ in $(x,x+\epsilon)$.
        Also, we use the notation $\left\{L\times f(t),x \right \}$ to show that the straight line $L$ collides with the function $f(t)$ for $t>x$ and  $\overline{ \left\{L\times f(t),x \right \}}$ to show that the straight line $L$ does not collide with the function $f(t)$ for $t>x$ opposite.

        In addition, assume that $L_{1}$, $L_{2}$,  are two straight lines which are tangent to curves $E_{s}(t)$ and $B_{s}(t)$ in point $x$, respectively and
        \begin{align}
          m = \min \left \{r_{c}(x), r(\frac{d}{dt}E_{s}(t))\rvert_{t=x}  \right \}              \end{align}
        \begin{align}
                 m^{'} = \min \left \{r_{d}(x), \frac{d}{dt}B_{s}(t)\rvert_{t=x}  \right \}.              \end{align}
\begin{rem}\label{rem:tech}
If we extend the discrete algorithm directly to our continuous one, the equations \eqref{12} and \eqref{13} must be calculated for every event point, which are now a continuum (every point in $t\in [0,T]$) to find the next point in order to execute the algorithm. If this new point is same as the previous point, the algorithm enters state B or C. This process repeats until $t=T$. However, we propose a set of conditions (in Tables~I and II) to determine the state in each point. Also, we can show that the transmitted data curve which obtains from the proposed algorithm is convex. For more details assume that in an interval the transmitted data curve is concave. So the algorithm is in state B or C (The transmitted energy/data curve is equal to harvested energy/arrival data curve). Consider the beginning of this interval and use the equations \eqref{12} and \eqref{13} we can conclude the slope of straight line which passes through the endpoints of this interval is less than slope of transmitted energy/data curve in the beginning of this interval (since it must follow the harvested energy/arrival data curve). This results in a contradiction.
\end{rem}

\begin{lemma}\label{zafar} Let $B_{1}(t)$ and $B_{2}(t)$ be two distinct transmitted data curves and $B_{1}(t)> B_{2}(t)$ in the interval $(a,b)$ and $B_{1}(t)=B_{2}(t)$ at $t=a$ and $t=b$. If $B_{1}(t)$ is a convex function and $B_{1}(t)$ and $B_{2}(t)$ increase monotonically in $t$, then:
\begin{align} \int_{a}^{b}r^{-1}(\dfrac{d}{dt}B_{1}(t))dt< \int_{a}^{b}r^{-1}(\frac{d}{dt}B_{2}(t))dt. \end{align}
\end{lemma}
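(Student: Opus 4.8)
The plan is to read the integral $\int_a^b r^{-1}(\tfrac{d}{dt}B(t))\,dt$ as the energy expended to realize the data curve $B$ on $[a,b]$: from \eqref{m} the instantaneous power is $p(t)=r^{-1}(\tfrac{d}{dt}B(t))$, so the claim is exactly that the convex curve $B_{1}$, although lying above $B_{2}$, spends strictly \emph{less} energy. The structural fact I would exploit is that, because $r$ is strictly concave, increasing, differentiable and $r(0)=0$, its inverse $r^{-1}$ is a strictly convex, increasing, differentiable function on $[0,\infty)$ (the monotone growth of $B_{1},B_{2}$ guarantees the slopes fed into $r^{-1}$ are nonnegative); in particular $(r^{-1})'$ is increasing.

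First I would apply the first-order convexity inequality for $r^{-1}$ pointwise in $t$, namely $r^{-1}(\dot B_{2})-r^{-1}(\dot B_{1})\ge (r^{-1})'(\dot B_{1})\,(\dot B_{2}-\dot B_{1})$, with strict inequality wherever $\dot B_{1}\neq\dot B_{2}$. Since $B_{1}\not\equiv B_{2}$ while $B_{1}(a)=B_{2}(a)$, the set $\{\dot B_{1}\neq\dot B_{2}\}$ has positive measure (otherwise $B_{1}-B_{2}$ would be constant, hence identically $0$). Integrating over $[a,b]$ therefore yields the strict inequality $\int_a^b[r^{-1}(\dot B_{2})-r^{-1}(\dot B_{1})]\,dt>\int_a^b (r^{-1})'(\dot B_{1})\,(\dot B_{2}-\dot B_{1})\,dt$.

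It then remains to show that the right-hand integral is nonnegative. Writing $\dot B_{2}-\dot B_{1}=\tfrac{d}{dt}(B_{2}-B_{1})$ and integrating by parts, the boundary term $[(r^{-1})'(\dot B_{1})(B_{2}-B_{1})]_a^b$ vanishes because $B_{1}=B_{2}$ at $a$ and $b$, leaving $-\int_a^b (B_{2}-B_{1})\,d\big[(r^{-1})'(\dot B_{1})\big]$. Here the map $t\mapsto (r^{-1})'(\dot B_{1}(t))$ is nondecreasing, being the composition of the increasing function $(r^{-1})'$ with the nondecreasing function $\dot B_{1}$ (nondecreasing precisely because $B_{1}$ is convex); hence $d\big[(r^{-1})'(\dot B_{1})\big]$ is a nonnegative measure. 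Since $B_{2}-B_{1}\le 0$ on $[a,b]$, the integrand is $\le 0$ and the leading minus sign makes the whole expression $\ge 0$. Combining with the strict inequality of the previous step gives $\int_a^b r^{-1}(\dot B_{1})\,dt<\int_a^b r^{-1}(\dot B_{2})\,dt$, as claimed.

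I expect the only delicate point to be the regularity needed for the integration by parts: a convex $B_{1}$ need not be twice differentiable, so $\ddot B_{1}$ may not exist as a function. I would circumvent this by keeping the Riemann--Stieltjes form $d\big[(r^{-1})'(\dot B_{1})\big]$ above, which is a well-defined nonnegative measure since $(r^{-1})'(\dot B_{1})$ is monotone; this is what makes the sign argument rigorous without assuming $B_{1}\in C^{2}$. (If one prefers to assume $B_{1}$ twice differentiable, the same step reads $-\int_a^b (r^{-1})''(\dot B_{1})\,\ddot B_{1}\,(B_{2}-B_{1})\,dt\ge0$, the three factors carrying signs $+,+,-$.) As a consistency check on the hypotheses, note that a convex $B_{1}$ lies below the chord joining its endpoints, so the assumption $B_{2}<B_{1}$ is strictly stronger than $B_{2}$ lying below that chord; this is why the conclusion does not contradict Lemma \ref{Jensen}, which identifies the chord itself as the least-energy feasible curve.
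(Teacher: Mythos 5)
Your proof is correct, but it takes a genuinely different route from the paper's. The paper disposes of this lemma in two lines by appealing to arc-length machinery: it sets $A(t)=B_{1}(t)$ and $D_{min}(t)\leq B_{2}(t)$ in the framework of \cite{zafer2009calculus}, invokes \cite[Theorem IV]{zafer2009calculus} (among feasible departure curves, minimum energy is equivalent to minimum length), and then uses Lemma \ref{lemma6} to conclude that the convex increasing curve $B_{1}$, lying above $B_{2}$ with common endpoints, has strictly shorter length and hence uses strictly less energy. Your argument is instead self-contained and purely variational: strict convexity of $r^{-1}$ (inherited from strict concavity of the increasing $r$; note that monotonicity plus strict concavity also force $r'>0$, so $r^{-1}$ is indeed differentiable) gives the pointwise first-order inequality; distinctness together with $B_{1}(a)=B_{2}(a)$ forces the set where the derivatives differ to have positive measure, which yields strictness after integration; and the sign of the linear term is settled by integration by parts against the nonnegative Stieltjes measure $d\bigl[(r^{-1})'(\tfrac{d}{dt}B_{1})\bigr]$, which is exactly where convexity of $B_{1}$ and $B_{2}-B_{1}\le 0$ enter. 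What your route buys: no reliance on an external theorem (whose feasibility-tunnel hypotheses the paper matches only loosely), no arc-length geometry, and transparency about the hypotheses actually used --- you only ever need $B_{2}\le B_{1}$ rather than strict inequality on $(a,b)$, so your argument is in fact slightly more general than the statement. What the paper's route buys: brevity, and coherence with the length-minimization theme of Theorem \ref{III.9} and Conjecture \ref{III.8}, where shortest-path characterizations are the main structural tool. Your explicit handling of the regularity issue (keeping the Riemann--Stieltjes form because a convex $B_{1}$ need not be twice differentiable) is sound and is more careful than anything in the paper's own proof of this lemma.
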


    \begin{proof} If we assume that $A(t)=B_{1}(t)$ and $D_{min}(t)\leq B_{2}(t)$ in \cite{zafer2009calculus}, based on \cite[Theorem IV]{zafer2009calculus} it concludes that the curve which uses the minimum energy has shortest length. Also, based on Lemma \ref{lemma6}, $B_{1}(t)$ has minimum length among the feasible data transmitted curves. Thus $B_{1}(t)$ uses minimum energy and the proof is complete.\end{proof}

   \begin{lemma}\label{IV.3}In our proposed algorithm, there do not exist any two points on the transmitted data curve, $B(t)$, such that the line passing through these points satisfies both causality conditions and $B_{new}(t)\neq B(t)$, where $B_{new}(t)$ is the $B(t)$ replaced with the straight line that passes through these two points in the interval made by them.
   \end{lemma}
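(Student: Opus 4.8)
The plan is to argue by contradiction, leaning on the convexity of the curve produced by the algorithm (established in Remark~\ref{rem:tech}) together with the three-state structure of the construction. Throughout I write $B(t)$ and $E(t)$ for the algorithm's transmitted data and energy curves, so that State B means $E(t)=E_{s}(t)$ and State C means $B(t)=B_{s}(t)$. Suppose two points $t_{1}<t_{2}$ exist on $B(t)$ such that the chord $\ell(t)$ through $(t_{1},B(t_{1}))$ and $(t_{2},B(t_{2}))$ satisfies both causality conditions while $B_{new}(t)\not\equiv B(t)$ on $[t_{1},t_{2}]$. Since $B(t)$ is convex, the affine $\ell(t)$ lies on or above $B(t)$ on $[t_{1},t_{2}]$; moreover $\ell(t)-B(t)$ is concave, nonnegative, vanishes at the endpoints, and is not identically zero, so it is \emph{strictly} positive on the open interval. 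This strict separation, $\ell(t)>B(t)$ for $t\in(t_{1},t_{2})$, is the fact I would reuse at every step.

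First I would rule out State C anywhere on $(t_{1},t_{2})$: at such a point $t_{c}$ we have $B(t_{c})=B_{s}(t_{c})$, hence $\ell(t_{c})>B(t_{c})=B_{s}(t_{c})$, so $B_{new}$ exceeds the arrival curve and violates data causality \eqref{gh}, contradicting feasibility of the chord. Consequently any increase of the slope of $B(t)$ inside $(t_{1},t_{2})$ can occur only in State B, where $E(t)=E_{s}(t)$. This is the content of Lemma~\ref{sfp}, read off here directly from the definition of the states: the slope is frozen in State A and can grow only on entering State B or State C.

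Next I would locate such a State B point precisely in the region where inserting the chord overspends energy. Convexity forces $\frac{d}{dt}B(t)\rvert_{t=t_{1}^{+}}<s$, where $s$ is the slope of $\ell$, since otherwise $B$ would dominate $\ell$ near $t_{1}$. Let $t_{0}$ be the first point of $(t_{1},t_{2}]$ at which the slope of $B$ exceeds its initial value; on $(t_{1},t_{0})$ the curve is a line of slope $\frac{d}{dt}B(t)\rvert_{t=t_{1}^{+}}<s$ (State A), and $t_{0}<t_{2}$ because $B\not\equiv\ell$. Since the slope increases at $t_{0}$, the previous paragraph gives $E(t_{0})=E_{s}(t_{0})$. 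Comparing energies through
\begin{align}
\phi(t)=E_{new}(t)-E(t)=\int_{t_{1}}^{t}\Big(r^{-1}(s)-r^{-1}\big(\tfrac{d}{d\tau}B(\tau)\big)\Big)\,d\tau,
\end{align}
and using that $r^{-1}$ is increasing while $\frac{d}{d\tau}B(\tau)<s$ on $(t_{1},t_{0})$, we get $\phi(t_{0})>0$, i.e. $E_{new}(t_{0})>E(t_{0})=E_{s}(t_{0})$. Thus $B_{new}$ violates energy causality \eqref{kj}, contradicting feasibility of the chord and completing the argument.

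I expect the main obstacle to be justifying, without circularity, that the slope of $B(t)$ may increase only where $E=E_{s}$ or $B=B_{s}$: because $B(t)$ here is the algorithm's \emph{output} rather than an a priori optimal curve, I would derive this from the definitions of States A, B, C rather than by invoking Lemma~\ref{sfp} as a black box. A secondary technical point is the concavity argument upgrading $\ell\ge B$ to the strict inequality $\ell>B$ on the open interval, which is exactly what powers both the State C exclusion and the strict bound $\frac{d}{dt}B(t)\rvert_{t=t_{1}^{+}}<s$. (One could alternatively extract a global energy comparison from Lemma~\ref{zafar}, but only the local overspend near $t_{1}$ is actually needed.)
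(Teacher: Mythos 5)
Your overall strategy coincides with the paper's: argue by contradiction, use convexity of the algorithm's output to get strict separation $\ell(t)>B(t)$ on $(t_{1},t_{2})$, kill every point with $B=B_{s}$ via data causality \eqref{gh}, and then extract an energy-causality violation at the first point where the slope of $B(t)$ increases, relying on the structural fact that slope increases happen only where $E=E_{s}$ or $B=B_{s}$. The paper's proof sets $v=\inf\{t\in(t_{1},t_{2}):B(t)=B_{s}(t)\ \text{or}\ E(t)=E_{s}(t)\}$ and runs exactly this argument; your justification of strict separation ($\ell-B$ concave, nonnegative, vanishing at the endpoints, not identically zero) is in fact cleaner than the paper's, and your worry about not invoking Lemma~\ref{sfp} as a black box is well placed — the paper, too, reads the slope-change property off the three-state structure of the algorithm.

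There is, however, one genuine gap: your $t_{0}$, ``the first point of $(t_{1},t_{2}]$ at which the slope of $B$ exceeds its initial value,'' need not exist. If $B(t)$ is in State B immediately after $t_{1}$, following a strictly convex stretch of $E_{s}(t)$ (e.g. $E_{s}(t)=e^{t}-1$, as in the paper's own examples), then the slope of $B$ is strictly increasing on $(t_{1},t_{1}+\delta)$; the set of times where the slope exceeds $\frac{d}{dt}B(t)\rvert_{t=t_{1}^{+}}$ then has infimum $t_{1}$ but no minimum, so taking $t_{0}=t_{1}$ makes your linear segment $(t_{1},t_{0})$ empty and $\phi(t_{0})=0$, and no contradiction follows. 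This is precisely the case the paper handles separately, namely when the infimum $v$ coincides with the left endpoint: there, on a small interval $(t_{1},t_{1}+\epsilon_{1})$ one has $E=E_{s}$ (State C being excluded by your own first step), and since $\frac{d}{dt}B(t)\rvert_{t=t_{1}^{+}}<s$ and the power is piecewise continuous, $p(t)=r^{-1}\bigl(\frac{d}{dt}B(t)\bigr)<r^{-1}(s)=p_{new}(t)$ on that interval, whence $E_{new}(t)>E(t)=E_{s}(t)$ and energy causality \eqref{kj} is violated immediately, with no need for a preceding linear piece. Your framework already contains both ingredients (State C exclusion and the local power comparison), so the repair is a case split — either a nondegenerate constant-slope segment follows $t_{1}$ (your argument), or the slope increases immediately after $t_{1}$ (the local argument above) — but as written your proof fails on the second case.
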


\begin{proof} The proof is based on contradiction. Hence, we assume that there exist two points $s$ and $l$ such $B_{new}(t)$ does not violate both causality conditions and $B_{new}(t)\neq B(t)$. As explained,$B(t)$ has at most three parts: 1- linear part, 2- some parts in which, $B(t)=B_{s}(t)$, 3- some parts in which $E(t)=E_{s}(t)$. Moreover, since the algorithm obtains a convex transmitted data curve, the straight line that passes through $s$ and $l$ is above of $B(t)$ in $(s,l)$, i.e., $B_{new}(t)>B(t)$ in $(s,l)$. It is clear that $s$ and $l$ are not on the linear part. Hence, there exists a point $x$ in $(s,l)$ such that $B(x)=B_{s}(x)$ or $E(x)=E_{s}(x)$. We assume $M=\left\{t:~B(t)=B_{s}(t)~or~E(t)=E_{s}(t),~t\in(s,l) \right\}$ and $v=\inf(M)$. If $v=s$ there exists an $\epsilon$ such that $B(t)=B_{s}(t)$ or $E(t)=E_{s}(t)$ for $s<t<s+\epsilon$. Thus one of the causality conditions is violated: because, if $B(t)=B_{s}(t)$ in $s<t<s+\epsilon$, we have $B(t)=B_{s}(t)$ and $B_{new}(t)>B(t)$ which result in $B_{new}(t)>B_{s}(t)$. Therefore, the data causality condition is violated. If $E(t)=E_{s}(t)$ in $s<t<s+\epsilon$, there exists an $\epsilon_{1}$ such that in $s<t<s+\epsilon_{1}$, we have $p(t)<p_{new}(t)$. Thus, $E(t)<E_{new}(t)$ which results in $E_{s}(t)<E_{new}(t)$ in $s<t<s+\epsilon_{1}$ and the energy causality condition violated. Hence, we have $v\neq s$ which results that $t=v$ is the first instant, in which $r(t)=\frac{d}{dt}B(t)$ can be changing in $s<t<l$. If $B(v)=B_{s}(v)$, then the data causality condition is violated. Now assume that $E(v)=E_{s}(v)$: $B(t)$ is linear in $(s,v)$ and its slope is smaller than the slope of the straight line in curve $B_{new}(t)$ in $s<t<l$. Therefore, $p(t)<p_{new}(t)$ in $(s,v)$ which results in $E(t)<E_{new}(t)$ in $(s,v)$. Thus $E_{s}(v)<E_{new}(v)$ and the causality of energy is violated at $t=v$. This results in a contradiction which completes the proof. \end{proof}

\begin{lemma}\label{IV.4}
If there exists a convex transmitted data curve, $B(t)$ such that $B(T)=B_{s}(T)$ or $E(T)=E_{s}(T)$ and satisfies the condition of Lemma~\ref{IV.3}, then $B(t)$ is optimal.
\end{lemma}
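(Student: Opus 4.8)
The plan is to argue by contradiction. Suppose $B(t)$ satisfies the three hypotheses but is not optimal, and let $B^{*}(t)$ be the genuine optimum; by the Corollary after Lemma~\ref{sfp} it is convex and increasing, and by the remark following Lemma~\ref{fci} it too admits no feasible improving chord. The first step I would establish is a structural fact about $B(t)$: wherever $B(t)$ lies strictly below both $B_{s}(t)$ and the energy limit on an open interval, a short nonlinear piece could be replaced by the chord through its endpoints, which by Jensen's inequality (exactly as in Lemma~\ref{Jensen}) sends the same data with strictly less energy and, on a short enough piece, stays feasible thanks to the strict slack. That chord would violate the condition of Lemma~\ref{IV.3}; hence $B(t)$ can be nonlinear only on intervals where $B(t)=B_{s}(t)$ or $E(t)=E_{s}(t)$, which, together with convexity, pins down its shape.

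Consider first the alternative $B(T)=B_{s}(T)$. No feasible curve can deliver more than $B_{s}(T)$, so $B(t)$ already maximises throughput and $B^{*}(T)=B(T)=B_{s}(T)$; optimality therefore reduces to minimal energy, which by the length--energy equivalence invoked in Lemma~\ref{zafar} (from \cite{zafer2009calculus}) and Theorem~\ref{III.9} is equivalent to $B(t)$ having minimal length among feasible all-data curves. I would then compare $B$ with any such competitor $\tilde B$ interval by interval: where $\tilde B\le B$, Lemma~\ref{lemma6} makes the convex curve $B$ the shorter one; where $\tilde B> B$, the data slack $B<B_{s}$ together with the structural fact confines any nonlinearity of $B$ to points where its energy is saturated, and there a competitor lying above $B$ would need more energy than $B$ has already exhausted, so such intervals cannot contribute extra length. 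This yields $len(B)\le len(\tilde B)$, hence $B$ is of minimal length and, by the strict inequality in Lemma~\ref{lemma6}, coincides with $B^{*}$.

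In the alternative $E(T)=E_{s}(T)$ the energy tie-break is vacuous, so I cannot appeal to the still-unproven Conjecture~\ref{III.8}; only throughput must be bounded. Here I would argue directly from the energy budget: every feasible competitor obeys $\int_{0}^{T}\tilde p\,dt\le E_{s}(T)=\int_{0}^{T}p\,dt$, and since $r$ is strictly concave while $p=\frac{d}{dt}B$ is nondecreasing and, by the structural fact and Lemma~\ref{sfp}, as flat as the causality constraints permit, a Jensen/rearrangement estimate shows that no reallocation respecting energy causality can raise $\int_{0}^{T}r(\tilde p)\,dt$ above $B(T)$; thus $\tilde B(T)\le B(T)$ and $B$ is throughput-optimal.

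The main obstacle in both cases is the same: the condition of Lemma~\ref{IV.3} controls only chords of $B$, while the decisive comparison happens on intervals where a competitor rises above $B$, where the connecting chord sits above $B$ and is not automatically feasible. The structural fact from the first paragraph is what removes this difficulty, since it forces $B$ to bend only against an active constraint, and against a saturated energy constraint an overshooting competitor is itself infeasible. Making the energy-saturated sub-case fully rigorous---quantifying that more cumulative data forces more cumulative energy under the convex cost $r^{-1}$---is the step I expect to require the most care.
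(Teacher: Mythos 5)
There is a genuine gap, and it sits exactly where you predicted it would: the intervals on which a feasible competitor $B_{1}(t)$ rises above $B(t)$. Your structural fact is stated relative to the \emph{constraints} ($B$ can be nonlinear only where $B=B_{s}$ or $E=E_{s}$), and that fact is indeed derivable from Lemma~\ref{IV.3}. But the paper proves a different, competitor-relative fact that is the true crux: on \emph{any} interval where $B(t)<B_{1}(t)$, the curve $B$ must be linear. The proof uses the feasibility of $B_{1}$ itself to certify a forbidden chord of $B$: on a short enough subinterval the chord stays below $B_{1}\leq B_{s}$ (data causality), and its cumulative energy stays below $E(t)+\min_{t}\left(E_{1}(t)-E(t)\right)\leq E_{1}(t)\leq E_{s}(t)$ (energy causality), contradicting the hypothesis of Lemma~\ref{IV.3}. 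Your weaker fact leaves open precisely the bad configuration: an overtaking interval on which $B$ is nonlinear because $E=E_{s}$ (state B). There you assert that a competitor above $B$ ``would need more energy than $B$ has already exhausted,'' i.e.\ that more cumulative data forces more cumulative energy. For strictly concave $r$ this implication is false in general: a competitor with a flatter power profile can deliver \emph{more} data with \emph{less} total energy (Jensen works against you here); what actually excludes such a competitor is energy causality at \emph{earlier} times, a global argument that does not follow from pointwise saturation $E(t^{*})=E_{s}(t^{*})$. The implication does hold when $B$ is linear on the overtaking interval, from the last crossing point onward --- which is exactly why the paper's competitor-relative linearity fact, and not yours, is the load-bearing step. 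The fix is not a sharper ``Jensen/rearrangement estimate''; it is this different structural lemma.

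Two secondary problems would remain even with that repaired. First, in the case $B(T)=B_{s}(T)$ you reduce optimality to minimal length and back, but Theorem~\ref{III.9} only gives ``optimal $\Rightarrow$ minimal length''; you need the converse, which the paper never establishes and does not need. Second, in the case $E(T)=E_{s}(T)$ your direct throughput bound (``no reallocation respecting energy causality can raise $\int r(\tilde p)$'') is a restatement of the optimality to be proven, not an argument. The paper avoids both issues by running a single energy-comparison endgame for the two alternatives: with the competitor-relative fact, $B$ uses no more energy than $B_{1}$ interval by interval (Lemma~\ref{Jensen} where $B<B_{1}$ since $B$ is linear there; Lemma~\ref{zafar} where $B_{1}\leq B$ since $B$ is convex and both are increasing), so $E(T)<E_{1}(T)$ whenever $B_{1}\not\equiv B$; then $B(T)=B_{s}(T)$ forces $B_{1}(T)=B(T)$ and contradicts the minimum-energy tie-break, while $E(T)=E_{s}(T)$ gives $E_{1}(T)>E_{s}(T)$, violating energy causality. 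Your instinct to keep Conjecture~\ref{III.8} out of the argument was correct, but the route you chose cannot be completed as written.
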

\begin{proof} Again we use contradiction and we assume that $B(t)$ is not optimal. Hence, there exists a transmitted data curve $B_{1}(t)\neq B(t)$, such that (i) $B_{1}(T)>B(T)$ or (ii) $B_{1}(T)=B(T)$ and $E_{1}(T)\leq E(T)$.
First, we show that when $B(t)<B_{1}(t)$ holds in an interval, then $B(t)$ is linear. Let $a=\sup\left\{ t:~(\forall x:~B_{1}(x)=B(x))|~0\leq x< t\right\}$. Thus, in $[0,a)$, we have $B(t)=B_{1}(t)$. Assume that $(b,c)\subseteq(a,T]$ is the first interval that we have $B(t)<B_{1}(t)$ and $B(t)$ is not linear in $(b,c)$. Thus, we can find a subinterval $(b+\epsilon,d)$ where the line passing through its endpoints is under $B_1(t)$ and satisfies the data causality condition. Now, let $E_{diff}=\min(E_{1}(t)-E(t))$ in $(b+\epsilon,d)$. Since $E(b+\epsilon)\leq E_1(b+\epsilon)$ ($E(b)< E_1(b)$ and assuming $p(b)< p_1(b)$ in $(b,b+\epsilon)$, we can find a sufficiently small interval $(b+\epsilon_{1},e)\subseteq(b+\epsilon,d)$ such that maximum energy difference between the line passing through the endpoints and $B(t)$ in $(b+\epsilon_{1},e)$ is less than or equal to $E_{diff}$. Therefore, this line does not violate the causality conditions which results in a contradiction. Therefore, if in an interval $B(t)<B_{1}(t)$ holds, then $B(t)$ must be linear. Thus, based on Lemma \ref{Jensen}, $B(t)$ uses less energy than $B_{1}(t)$ in this interval.

Now, if $B_{1}(t)\leq B(t)$ in $t\in (a,T]$, then $B_{1}(T)=B(T)$. Since $B(t)$ is convex and $B(t)$ and $B_{1}(t)$ are increasing in $t$, based on Lemma \ref{zafar}, we have $E(T)<E_{1}(T)$ which is a contradiction.

If for $t\in (a,T]$, always $B_{1}(t)\leq B(t)$ does not hold, then we define $t_{c}$ as:
    \begin{align}
    t_{c}=\sup \left \{t: B_{1}(t)=B(t)~\textrm{for}~ t<T  \right \}
    \end{align}
We obtain $E(t_{c})\leq E_{1}(t_{c})$. Because, in some intervals in $(a,t_c]$, $B(t)$ is either linear and uses less energy (based on Lemma \ref{Jensen}) or $B_{1}(t)\leq B(t)$ which again uses less energy (based on Lemma \ref{zafar}).

Now we have two cases, $t_{c}\neq T$ and $t_{c}= T$. For the first case, if $B_{1}(t)\leq B(t)$ for $t_{c}<t<T$, we have a contradiction and the proof is completed, because (i) $B_{1}(T)<B(T)$ or (ii) $B_{1}(T)=B(T)$ and $E(T)<E_{1}(T)$, which both are contradictions. If $B(t)<B_{1}(t)$ for $t_{c}<t<T$, then $B(t)$ must be linear in this interval. For this case, if $B(T)=B_{s}(T)$, then should $B_{1}(T)=B_{s}(T)$. Since the curve $B(t)$ uses less energy than $B_{1}(t)$ in $t_{c}<t<T$ and $E(t_{c})\leq E_{1}(t_{c})$, we get $E(T)< E_{1}(T)$ which is a contradiction. Having $E(T)=E_{s}(T)$ implies that $E_{s}(T)< E_{1}(T)$ which is a contradiction, too.
When $t_{c}= T$, we have an interval $(t_{b},T]$ in which $B_{1}(t)=B(t)$ and $t_{b}=\inf\left\{ t:~(\forall x:~B_{1}(x)=B(x))|~t< x\leq T\right\}$, then we can define $t_{m}$ as:
 \begin{align}
    t_{m}=\sup \left \{t: B_{1}(t)=B(t)~\textrm{for}~ t<t_{b}  \right \}
    \end{align}
Now, we can use the same argument with substituting of $t_{m}$ instead of $t_{c}$, and the proof is complete.
\end{proof}

\begin{theorem}The presented algorithm is optimal.
    \end{theorem}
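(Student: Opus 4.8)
The plan is to reduce optimality of the presented continuous algorithm to the sufficient condition already established in Lemma~\ref{IV.4}: any convex, feasible transmitted data curve that either sends all the data ($B(T)=B_s(T)$) or consumes all the harvested energy ($E(T)=E_s(T)$) and satisfies the non-improvability property of Lemma~\ref{IV.3} is automatically optimal. Hence it suffices to check that the curve $B(t)$ produced by the three-state (A/B/C) procedure meets these three hypotheses; the theorem then follows in one line. Note that invoking Lemma~\ref{IV.4} delivers full optimality \emph{including} the minimum-energy tie-break, since its proof already rules out both a feasible competitor with $B_1(T)>B(T)$ and one with $B_1(T)=B(T)$ but $E_1(T)\le E(T)$.

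First I would verify feasibility and convexity of the output. Feasibility is built into the construction: in state A the slope $\widetilde{r}$ in \eqref{31} is taken as the minimum of the two forward bottleneck slopes, so neither the energy constraint \eqref{kj} nor the data constraint \eqref{gh} is ever violated, and in states B and C the curve tracks $E_s(t)$ or $B_s(t)$ exactly, i.e. it rides the active constraint. For convexity I would follow Remark~\ref{rem:tech}: choosing $\widetilde{r}$ as an infimum of forward slopes (the continuous analogue of \eqref{12}--\eqref{13}) forces the slope to be nondecreasing, and if the curve were concave on some interval it would have to be in state B or C there, yet the chord through the endpoints of that interval would then have slope strictly below the tracked-curve slope at its start, contradicting the minimality used to select $\widetilde{r}$. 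Carrying this slope bookkeeping through every A$\to$B, A$\to$C, B$\to$A and C$\to$A transition yields a globally nondecreasing $p(t)$, i.e. a convex $B(t)$.

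Second, I would argue termination with all data sent or all energy spent. At each step the rate is a bottleneck slope, so a state-A linear segment ends precisely where the energy or data constraint becomes tight, while in states B and C one of $E_s,B_s$ is tracked with equality. Propagating this up to the deadline shows that at $t=T$ either $E^*(T)=E_s(T)$ or $B^*(T)=B_s(T)$, which is the second hypothesis of Lemma~\ref{IV.4}. The third hypothesis---that no chord joining two points of $B(t)$ satisfies both causality conditions while differing from $B(t)$---is exactly the statement of Lemma~\ref{IV.3}, already proved for the algorithm's output. With convexity, resource exhaustion, and the Lemma~\ref{IV.3} property all in hand, Lemma~\ref{IV.4} immediately gives that $B(t)$ is optimal.

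I expect the main obstacle to be the convexity (nondecreasing-slope) verification across state transitions. Because the state is selected by the elaborate condition sets of Tables~I and II (the bounded/unbounded cases for $r_a,r_b$, the tangency data $m,m'$, and the collision tests $\{L\times f(t),x\}$), one must check case by case that the slope entering a B- or C-segment matches or exceeds the slope leaving the preceding A-segment, and symmetrically upon re-entering A. This is precisely where the continuous-space analysis departs from the clean piecewise-linear argument of Theorem~\ref{IV.1}, and it is the step that carries the real technical weight.
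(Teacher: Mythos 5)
Your proposal is correct and takes essentially the same route as the paper: the paper's own proof is precisely the one-line combination of Lemma~\ref{IV.3} (the algorithm's output admits no improving chord) with Lemma~\ref{IV.4} (any convex feasible curve with that property which exhausts the data or the energy at $T$ is optimal). Your additional verification of convexity (the paper's Remark~\ref{rem:tech}) and of resource exhaustion at $t=T$ simply spells out the hypotheses that the paper leaves implicit when citing those two lemmas.
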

    \begin{proof} The proof is directly obtained from Lemmas \ref{IV.3} and \ref{IV.4}.
    \end{proof}
 \section{Multi-Hoping: Throughput Maximization and Completion Time Minimization}
 In this section, we consider a multi-hop channel with one Tx, one Rx and many relays and we investigate the throughput maximization and completion time minimization problems in an offline model in a full-duplex mode. For simplicity we first assume that we have a two-hop communication channel which is illustrated in Fig. 4. Then we extend the results to $n$ relays in Corollaries 2 and 3.
 \subsection{Throughput Maximization}
 \begin{figure}
        \centering
        \includegraphics[width=3.3in]{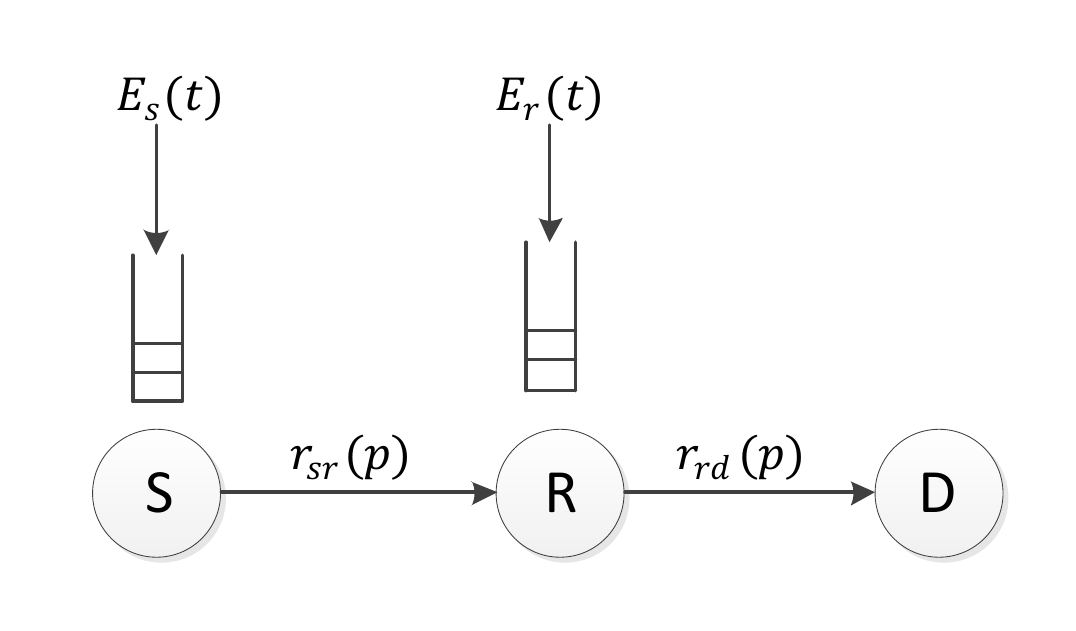}
        \caption{The topology of the network}
        \end{figure}

Following our assumption in Section II, in this model all harvested energy curves $E_{s}(t)$ and $E_{r}(t)$ and arrival data curve $B_{s}(t)$ are continuous. Similar to the single-user channel in Section II assume that the instantaneous transmission rates in both relay and Tx relate to the power of transmission through continuous functions $r_{sr}(p)$ and $r_{rd}(p)$, respectively. $B_{sr}(t)$ and $B_{rd}(t)$ are the amount of data which are transmitted from the Tx to the relay and from the relay to the Rx, respectively. $E_{sr}(t)$ and $E_{rd}(t)$ are the amount of energy that are utilized in the Tx and the relay to transmit data from the Tx to the relay, and the relay to Rx in $[0,t]$ respectively. $p_{sr}(t)$ and $p_{rd}(t)$ are the amount of power used in Tx, and the relay for data transmission. We assume that $B_{sr,s}^{*}(t)$ and $E_{sr,s}^{*}(t)$ are, respectively the optimal transmitted data and energy curves which are obtained from problem \eqref{kjhl}-\eqref{gh} (only when we consider the causality conditions in Tx); and $B_{rd,B_{sr}}^{*}(t)$, $E_{rd,B_{sr}}^{*}(t)$ and $p_{rd,B_{sr}}^{*}(t)$ are respectively the optimal transmitted data curve, optimal transmitted energy curve and optimal transmitted power curve which are obtained from problem \eqref{kjhl}-\eqref{gh} when we substitute $B_{s}(t)=B_{sr}(t)$ and $E_{s}(t)=E_{r}(t)$ (we consider the causality conditions in Relay). Now, we can formulate our problem as follows:
\vspace*{-.2cm}
\begin{eqnarray}
D^{(MH)}(T)&=&\!\!\!\!\!\max_{p_{sr}(t),p_{rd}(t)} \int_{0}^{T}r_{rd}(p_{rd}(t))dt\label{42}\\
 s.t.~~\int_{0}^{t}p_{sr}(t^{'})&\leq &\!\!\!\!\! E_{s}(t),~0\leq t\leq T\label{43}\\
\int_{0}^{t}p_{rd}(t^{'})&\leq &\!\!\!\!\! E_{r}(t),~0\leq t\leq T\label{44}\\
 \int_{0}^{t}r_{sr}(p_{sr}(t^{'}))dt^{'}&\leq &\!\!\!\!\! B_{s}(t),~0\leq t\leq T\label{45}\\
   \int_{0}^{t}r_{rd}(p_{rd}(t^{'}))dt^{'}&\leq &\!\!\!\!\!\!\!\int_{0}^{t}r_{sr}(p_{sr}(t^{'}))dt^{'},~0\leq t\leq T\label{46}.
\end{eqnarray}
 \eqref{43} and \eqref{44} are the energy causality conditions in Tx and the relay. \eqref{45} and \eqref{46} are the data causality conditions in Tx and the relay.  Also we assume that $B_{sr}^{*}(t)$ and $E_{sr}^{*}(t)$ are the optimal transmitted data and energy curves in Tx, $B_{rd}^{*}(t)$ and $E_{rd}^{*}(t)$ are the optimal transmitted data and energy curves in relay for problem \eqref{42}-\eqref{46}.
 
 In the following theorem, we show that the optimal solution of the two-hop transmission problem in \eqref{42}-\eqref{46} is derived by first solving a point-to-point throughput maximization problem at the source, and next solving a point-to-point throughput maximization problem at the relay (after applying the first solution as the input of the second problem).
 
 \begin{theorem}\label{V.1.}
In the optimal policy we have $B_{sr}^{*}(t)=B_{sr,s}^{*}(t)$ and $B_{rd}^{*}(t)=B_{rd,B_{sr,s}^{*}}^{*}(t)$.
 \end{theorem}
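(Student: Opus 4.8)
The plan is to prove the theorem by decoupling the two-hop optimization into the two stated single-user problems, via two claims: (A) once the source curve $B_{sr}$ is fixed, the relay's subproblem is \emph{exactly} the single-user problem \eqref{kjhl}--\eqref{gh}; and (B) the source curve that maximizes the end-to-end throughput is its own single-user optimum $B_{sr,s}^{*}$. Claim (A) is immediate: with $B_{sr}$ fixed, the objective \eqref{42} together with the relay energy causality \eqref{44} and the relay data causality \eqref{46} (which reads $B_{rd}(t)\le B_{sr}(t)$) is precisely \eqref{kjhl}--\eqref{gh} under the substitution $(B_{s},E_{s},r)\mapsto(B_{sr},E_{r},r_{rd})$; hence the relay's optimal response is $B_{rd,B_{sr}}^{*}$, and the second equality $B_{rd}^{*}=B_{rd,B_{sr,s}^{*}}^{*}$ follows the moment (B) gives $B_{sr}^{*}=B_{sr,s}^{*}$. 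So the whole argument reduces to (B).

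For (B), write $\Phi(B_{sr})$ for the relay's optimal throughput (by (A), a single-user value) when the source delivers $B_{sr}$; I must show $\Phi(B_{sr,s}^{*})\ge\Phi(B_{sr})$ for every feasible source curve $B_{sr}$. First I would record the cascade bound $\Phi(B_{sr})\le B_{sr}(T)\le B_{sr,s}^{*}(T)$, coming from \eqref{46} and the source-throughput optimality of $B_{sr,s}^{*}$. The core step is an exchange/feasibility argument: let $B_{rd}$ be the relay's optimal response to $B_{sr}$, which by Lemma~\ref{sdddf} is convex and satisfies $B_{rd}(t)\le B_{sr}(t)$ together with the relay energy constraint. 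Since that energy constraint does not involve the source, it suffices to show that $B_{rd}$ remains feasible when the arrival curve is replaced by $B_{sr,s}^{*}$, i.e. that $B_{rd}(t)\le B_{sr,s}^{*}(t)$ for all $t$; then $\Phi(B_{sr,s}^{*})\ge B_{rd}(T)=\Phi(B_{sr})$ and we are done.

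The hard part is exactly this pointwise domination $B_{rd}\le B_{sr,s}^{*}$, which does \emph{not} follow from the endpoint inequality $B_{rd}(T)\le B_{sr,s}^{*}(T)$ alone, since two convex curves sharing the origin with ordered endpoints can still cross. I would resolve it using the structural description of $B_{sr,s}^{*}$ from Lemmas~\ref{jgh} and \ref{sfp} and the corollary following Lemma~\ref{sfp}: $B_{sr,s}^{*}$ is convex and, on every maximal interval, is either (i) linear (neither causality active), or (ii) pinned to a source causality constraint, $E_{sr,s}^{*}=E_{s}$ or $B_{sr,s}^{*}=B_{s}$. On a pinned interval of type (ii), $B_{sr,s}^{*}$ coincides with the maximum data the source can possibly deliver up to that time, so any curve bounded by a feasible source curve---in particular $B_{rd}\le B_{sr}$---lies below it. On a linear interval of type (i), suppose $B_{rd}$ rose above $B_{sr,s}^{*}$; taking the enclosing sub-interval where $B_{rd}>B_{sr,s}^{*}$ with equality at the endpoints, convexity of $B_{rd}$ puts it below the chord of $B_{sr,s}^{*}$ through those endpoints, so that chord is squeezed between two points of $B_{sr,s}^{*}$ yet lies above it, and the no-improving-chord property of the optimum (Lemma~\ref{IV.3}, and the remark following Lemma~\ref{fci}) forces $B_{sr,s}^{*}$ to equal that chord there, contradicting $B_{rd}>B_{sr,s}^{*}$. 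Stitching these two cases across the crossing points---and in particular verifying that on pinned pieces $B_{sr,s}^{*}$ really is the maximal deliverable curve and that the state transitions glue consistently---is the delicate bookkeeping I expect to be the main obstacle.

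Finally, I would assemble the pieces: the feasibility step gives $\Phi(B_{sr,s}^{*})\ge\Phi(B_{sr})$ for all feasible $B_{sr}$, so $B_{sr,s}^{*}$ is an optimal source curve and the cascade bound holds with equality; hence $B_{sr}^{*}=B_{sr,s}^{*}$, and claim (A) yields $B_{rd}^{*}=B_{rd,B_{sr,s}^{*}}^{*}$. The same reduction, iterated hop by hop, would give the $n$-relay extension announced for the subsequent corollaries.
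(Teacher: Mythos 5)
Your high-level decomposition is valid and it is genuinely different from the paper's route. The paper reduces Theorem~\ref{V.1.} to the pointwise comparison of the two \emph{relay responses}, namely $B_{rd,B_{sr}}^{*}(t)\leq B_{rd,B_{sr,s}^{*}}^{*}(t)$ for every feasible $B_{sr}$, and proves this by contradiction at the first divergence interval using the explicit power characterization \eqref{31}, with Lemma~\ref{zafar} handling the energy infimum and Lemma~\ref{III.7.} handling the data infimum. You instead compare the relay's response to the \emph{source} optimum, $B_{rd,B_{sr}}^{*}(t)\leq B_{sr,s}^{*}(t)$, and then finish with an exchange/feasibility argument: $B_{rd,B_{sr}}^{*}$ is feasible for the relay problem with arrival curve $B_{sr,s}^{*}$, hence $\Phi(B_{sr,s}^{*})\geq \Phi(B_{sr})$. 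If the domination is established, this exchange step yields the theorem more transparently than the paper's infimum bookkeeping, and it iterates over hops just as you say.

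There are, however, two genuine gaps in your treatment of the domination itself. First, on intervals pinned by the \emph{energy} constraint ($E_{sr,s}^{*}=E_{s}$ while $B_{sr,s}^{*}<B_{s}$), your assertion that $B_{sr,s}^{*}$ ``coincides with the maximum data the source can possibly deliver up to that time'' is precisely the nontrivial claim that needs proof, not bookkeeping: throughput-optimal curves are \emph{not} pointwise maximal in general (a greedy feasible curve can lie strictly above $B_{sr,s}^{*}$ at interior times), and energy-pinning by itself does not obviously restore maximality, e.g.\ a pinned stretch can contain sub-intervals of constant power. The tool the paper built for exactly this purpose is Lemma~\ref{III.7.}, which you never invoke: read contrapositively (as the paper itself uses it in this very proof), any point where a feasible source curve exceeds $B_{sr,s}^{*}$ forces $B_{sr,s}^{*}$ to be locally linear there. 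Since $B_{rd}\leq B_{sr}$ and $B_{sr}$ is source-feasible, this makes $B_{sr,s}^{*}$ linear on every connected component of $\left\{t: B_{rd}(t)>B_{sr,s}^{*}(t)\right\}$; the component's endpoints satisfy $B_{rd}=B_{sr,s}^{*}$ (at the right end use $B_{rd}(T)\leq B_{sr}(T)\leq B_{sr,s}^{*}(T)$), so convexity of $B_{rd}$ places it below the chord, which is $B_{sr,s}^{*}$ itself---a contradiction. This repairs your argument and eliminates the case split and the ``stitching'' entirely. Second, your appeal to Lemma~\ref{IV.3} on linear pieces is not legitimate as stated: that lemma only excludes chords that satisfy \emph{both} causality conditions, and the chord through the endpoints of an exceedance component lies \emph{above} $B_{rd}$, so the bound $B_{rd}\leq B_{s}$ gives no control over the chord's data causality (nor over its induced energy causality). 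Fortunately the appeal is also unnecessary: once the component is known to sit inside a linear piece, the chord equals $B_{sr,s}^{*}$ and convexity of $B_{rd}$ alone closes the argument.
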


\begin{proof}
It is enough to prove that for any feasible $B_{sr}(t)$, we have: $B_{rd,B_{sr}}^{*}(t)\leq B_{rd,B_{sr,s}^{*}}^{*}(t)$. To prove, we use contradiction as well as the technique mentioned in Remark~\ref{rem:tech}. Thus, assume that $(a,b)$ is the first interval in which $ B_{rd,B_{sr,s}^{*}}^{*}(t)<B_{rd,B_{sr}}^{*}(t)$, and so $B_{rd,B_{sr,s}^{*}}^{*}(a)=B_{rd,B_{sr}}^{*}(a)$. Therefore, there exists an interval $(a,a+\epsilon)$ in which $p_{rd,B_{sr,s}^{*}}^{*}(t)<p_{rd,B_{sr}}^{*}(t)$.

 Now, we use \eqref{31} for an arbitrary $t_{0}\in(a,a+\epsilon)$ as follows:
\begin{align}p_{rd,B_{sr}}^{*}(t_{0})=\min\{\inf_{t_{0}<x\leq T}r^{-1}_{rd}(\frac{B_{sr}(x)-B_{rd,B_{sr}}^{*}(t_{0})}{x-t_{0}}),\nonumber\\\inf_{t_{0}<x\leq T}\frac{E_{r}(x)-E_{rd,B_{sr}}^{*}(t_{0})}{x-t_{0}}\}~~~~~~~\nonumber\\
 p_{rd,B_{sr,s}^{*}}^{*}(t_{0})=\min\{\inf_{t_{0}<x\leq T}r^{-1}_{rd}(\frac{B_{sr,s}^{*}(x)-B_{rd,B_{sr,s}^{*}}^{*}(t_{0})}{x-t_{0}}),\nonumber\\\inf_{t_{0}<x\leq T}\frac{E_{r}(x)-E_{rd,B_{sr,s}^{*}}^{*}(t_{0})}{x-t_{0}}\}.~~~~~~~
 \end{align}
 Also, we have,
 \begin{align}\label{ener}
 E_{rd,B_{sr,s}^{*}}^{*}(t_{0})=E_{rd,B_{sr,s}^{*}}^{*}(a)+\int_{a}^{t_{0}}p_{rd,B_{sr,s}^{*}}^{*}(t)dt\nonumber\\
 E_{rd,B_{sr,s}}^{*}(t_{0})=E_{rd,B_{sr,s}}^{*}(a)+\int_{a}^{t_{0}}p_{rd,B_{sr,s}}^{*}(t)dt
 \end{align}
Due to Lemma \ref{zafar} we have:
\begin{align}\label{ener2}
E_{rd,B_{sr,s}^{*}}^{*}(a)\leq E_{rd,B_{sr}}^{*}(a)
\end{align}
Based on \eqref{ener}, \eqref{ener2} and $p_{rd,B_{sr,s}^{*}}^{*}(t)<p_{rd,B_{sr}}^{*}(t),t\in(a,t_{0})$, we have $E_{rd,B_{sr,s}^{*}}^{*}(t_{0})<E_{rd,B_{sr}}^{*}(t_{0})$. This results in,
 \begin{align}\label{energys}
  \inf_{t_{0}<x\leq T}\frac{E_{r}(x)-E_{rd,B_{sr}}^{*}(t_{0})}{x-t_{0}}<\inf_{t_{0}<x\leq T}\frac{E_{r}(x)-E_{rd,B_{sr,s}^{*}}^{*}(t_{0})}{x-t_{0}}.
 \end{align}
Now, let $m=\argmin\limits_{t_{0}<x\leq T}(\frac{B_{sr}(x)-B_{rd,B_{sr}}^{*}(t_{0})}{x-t_{0}})$ and $x_{c}>t_{0}$ is a point where $B_{sr,s}^{*}(x_{c})<B_{sr}(x_{c})$. Then, based on Lemma \ref{III.7.} there exist $\epsilon_{1}>0$ and $\epsilon_{2}>0$ such that (i) in $(x_{c}-\epsilon_{1},x_{c}+\epsilon_{2})$: $B_{sr,s}^{*}(t)$ is linear and $B_{sr,s}^{*}(t)<B_{sr}(t)$, and (ii)
  $B_{sr,s}^{*}(x_{c}-\epsilon_{1})=B_{sr}(x_{c}-\epsilon_{1})$ and $B_{sr,s}^{*}(x_{c}+\epsilon_{2})=B_{sr}(x_{c}+\epsilon_{2})$.
Since in $(x_{c}-\epsilon_{1},x_{c}+\epsilon_{2})$ we have $B_{sr,s}^{*}(t)<B_{sr}(t)$, then $\frac{B_{sr,s}^{*}(t)-B_{rd,B_{sr}}^{*}(t_{0})}{t-t_{0}}<\frac{B_{sr}(t)-B_{rd,B_{sr}}^{*}(t_{0})}{t-t_{0}}$.
Now, we have two cases: (i) $\frac{B_{sr,s}^{*}(x_{c}-\epsilon_{1})-B_{rd,B_{sr}}^{*}(t_{0})}{x_{c}-\epsilon_{1}-t_{0}}\leq\frac{B_{sr,s}^{*}(x_{c}+\epsilon_{2})-B_{rd,B_{sr}}^{*}(t_{0})}{x_{c}+\epsilon_{2}-t_{0}}$, and (ii) $\frac{B_{sr,s}^{*}(x_{c}+\epsilon_{2})-B_{rd,B_{sr}}^{*}(t_{0})}{x_{c}+\epsilon_{2}-t_{0}}<\frac{B_{sr,s}^{*}(x_{c}-\epsilon_{1})-B_{rd,B_{sr}}^{*}(t_{0})}{x_{c}-\epsilon_{1}-t_{0}}$.
Because of linearity of $B_{sr,s}^{*}(t)$ in $(x_{c}-\epsilon_{1},x_{c}+\epsilon_{2})$, for the cases (i) and (ii), we have $\frac{B_{sr,s}^{*}(x_{c}-\epsilon_{1})-B_{rd,B_{sr}}^{*}(t_0)}{x_{c}-\epsilon_{1}-t_{0}}\leq\frac{B_{sr,s}^{*}(t)-B_{rd,B_{sr}}^{*}(t_{0})}{t-t_{0}}$, and $\frac{B_{sr,s}^{*}(x_{c}+\epsilon_{2})-B_{rd,B_{sr}}^{*}(t_{0})}{x_{c}+\epsilon_{2}-t_{0}}<\frac{B_{sr,s}^{*}(t)-B_{rd,B_{sr}}^{*}(t_{0})}{t-t_{0}}$ for $t\in[x_{c}-\epsilon_{1},x_{c}+\epsilon_{2}]$, respectively. Therefore, for the case (i) we have,
  \begin{align}
  \frac{B_{sr}(x_{c}-\epsilon_{1})-B_{rd,B_{sr}}^{*}(t_{0})}{x_{c}-\epsilon_{1}-t_{0}}=\frac{B_{sr,s}^{*}(x_{c}-\epsilon_{1})-B_{rd,B_{sr}}^{*}(t_{0})}{x_{c}-\epsilon_{1}-t_{0}}\nonumber\\\leq \frac{B_{sr,s}^{*}(t)-B_{rd,B_{sr}}^{*}(t_{0})}{t-t_{0}}<\frac{B_{sr}(t)-B_{rd,B_{sr}}^{*}(t_{0})}{t-t_{0}}~~~~~~
\end{align}
for $t\in(x_{c}-\epsilon_{1},x_{c}+\epsilon_{2})$. And, similarly, for the case (ii) we have,
 \begin{align}
   \frac{B_{sr}(x_{c}+\epsilon_{2})-B_{rd,B_{sr}}^{*}(t_{0})}{x_{c}+\epsilon_{2}-t_{0}}=\frac{B_{sr,s}^{*}(x_{c}+\epsilon_{2})-B_{rd,B_{sr}}^{*}(t_{0})}{x_{c}+\epsilon_{2}-t_{0}}\nonumber\\\leq \frac{B_{sr,s}^{*}(t)-B_{rd,B_{sr}}^{*}(t_{0})}{t-t_{0}}<\frac{B_{sr}(t)-B_{rd,B_{sr}}^{*}(t_{0})}{t-t_{0}}~~~~~~
 \end{align}
for $t\in(x_{c}-\epsilon_{1},x_{c}+\epsilon_{2})$. Hence, $m$ is not in the intervals that $B_{sr,s}^{*}(t)<B_{sr}(t)$. Now, the only candidates for $m$ are the instants where $B_{sr,s}^{*}(t)\geq B_{sr}(t)$. Hence, $B_{sr}(m)\leq B_{sr,s}^{*}(m)$ and
\begin{align}\label{datas}
 \inf_{t_{0}<x\leq T}r^{-1}_{rd}(\frac{B_{sr}(x)-B_{rd,B_{sr}}^{*}(t_{0})}{x-t_{0}})<\nonumber\\\inf_{t_{0}<x\leq T}r^{-1}_{rd}(\frac{B_{sr,s}^{*}(x)-B_{rd,B_{sr,s}^{*}}^{*}(t_{0})}{x-t_{0}}).
\end{align}
 From \eqref{energys} and \eqref{datas}, we have $p_{rd,B_{sr}}^{*}(t_{0})<p_{rd,B_{sr,s}^{*}}^{*}(t_{0})$ which is a contradiction. Hence, proof is complete.
\end{proof}
\begin{coro}
Theorem \ref{V.1.} can be extended to $n$ relays: Tx transmits maximum amount of data by proposed algorithm in Section IV, the first relay sends maximum amount of data to the second relay by the same algorithm and this procedure repeats until the Rx.
 \end{coro}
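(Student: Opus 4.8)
The plan is to argue by induction on the number of relays, using Theorem~\ref{V.1.} both as the base case and as the engine of the inductive step. Index the nodes as $0$ (Tx), $1,\dots,n$ (relays), with the Rx following node $n$; let node $k$ have harvested energy $E_{k}$ and link rate function $r_{k}$, and for any data curve $A(t)$ made available at node $k$ let $\Psi_{k}[A]$ denote the optimal transmitted-data curve returned by the algorithm of Section~IV for node $k$ when its arrival curve is $A$ and its energy is $E_{k}$. The greedy scheme in the statement is then $\hat B_{0}=\Psi_{0}[B_{s}]$ and $\hat B_{k}=\Psi_{k}[\hat B_{k-1}]$ for $k\ge 1$, with throughput $\hat B_{n}(T)$, and I must show that every feasible scheme, with actual forwarded curves $B_{0},\dots,B_{n}$, obeys $B_{n}(T)\le \hat B_{n}(T)$. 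The object I would propagate is the per-node optimum $B_{k}^{\circ}:=\Psi_{k}[B_{k-1}]$, obtained by running the point-to-point algorithm on the \emph{actual} input $B_{k-1}$, and the induction hypothesis I carry is $H(k)$: $B_{k}^{\circ}(t)\le \hat B_{k}(t)$ for all $t\in[0,T]$. The base case $H(0)$ is immediate, since node $0$ always sees the fixed external arrival $B_{s}$, so $B_{0}^{\circ}=\Psi_{0}[B_{s}]=\hat B_{0}$.

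For the step $H(k)\Rightarrow H(k+1)$ I would bound $B_{k+1}^{\circ}=\Psi_{k+1}[B_{k}]$ by inserting $B_{k}^{\circ}$ as an intermediate curve, in two moves. First, regarding node $k$ as a source whose (fixed) arrival is $B_{k-1}$, the actual output $B_{k}$ is a feasible output of that source's problem and $B_{k}^{\circ}=\Psi_{k}[B_{k-1}]$ is its point-to-point optimum; Theorem~\ref{V.1.} applied at node $k+1$ then gives the pointwise dominance $\Psi_{k+1}[B_{k}]\le \Psi_{k+1}[B_{k}^{\circ}]$. Second, since $H(k)$ provides $B_{k}^{\circ}\le \hat B_{k}$ pointwise, the nested-input case of the same dominance yields $\Psi_{k+1}[B_{k}^{\circ}]\le \Psi_{k+1}[\hat B_{k}]=\hat B_{k+1}$. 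Chaining the two gives $B_{k+1}^{\circ}\le \hat B_{k+1}$, which is $H(k+1)$. The conclusion then follows from $H(n)$ together with the fact that the point-to-point optimum maximizes the transmitted data at the terminal time, namely $B_{n}(T)\le B_{n}^{\circ}(T)=\Psi_{n}[B_{n-1}](T)\le \hat B_{n}(T)$; the passage from two hops to the full chain is exactly this composition, and it is verbatim for any $n$.

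The main obstacle is that a point-to-point optimal curve dominates a competing feasible curve only at the deadline $T$, not pointwise (optimal curves are convex and deliberately defer transmission), so one cannot naively assert ``actual output $\le$ greedy output'' at intermediate times and chain the hops; a direct chaining would require $B_{k}\le \hat B_{k-1}$, which is false in general. The resolution, and the delicate part, is to chain at the level of the per-node optima $B_{k}^{\circ}$ and to route each step through the node's own optimum, so that Theorem~\ref{V.1.} is invoked with \emph{matched} feasibility (both compared inputs feasible for one common node-$k$ source problem) before the pointwise hypothesis $H(k)$ carries us over to the greedy curve. Two technical points must be verified: that Theorem~\ref{V.1.} gives pointwise rather than merely terminal dominance of the relay output, which its proof does by analyzing the first interval of disagreement via Lemma~\ref{III.7.} and Lemma~\ref{zafar}; and that the nested-input monotonicity $A\le A'\Rightarrow \Psi[A]\le \Psi[A']$ holds, which is precisely the special case of that dominance in which the set $\{A>A'\}$ is empty and the slope comparison in \eqref{31} becomes immediate.
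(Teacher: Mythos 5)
Your proposal is correct and follows exactly the route the paper intends: the paper states this corollary with no proof at all, treating it as an immediate iteration of Theorem~\ref{V.1.}, and your induction is precisely the formalization of that iteration (greedy curves $\hat B_{k}$, per-node optima $B_{k}^{\circ}=\Psi_{k}[B_{k-1}]$, and the chain $\Psi_{k+1}[B_{k}]\leq\Psi_{k+1}[B_{k}^{\circ}]\leq\Psi_{k+1}[\hat B_{k}]$). The two ingredients you isolate --- that the dominance $B^{*}_{rd,B_{sr}}(t)\leq B^{*}_{rd,B^{*}_{sr,s}}(t)$ established inside the proof of Theorem~\ref{V.1.} is pointwise rather than only at $T$, and that the point-to-point optimizer is monotone in its input curve, obtained as a degenerate case of the same first-interval-of-disagreement argument via \eqref{31}, Lemma~\ref{III.7.} and Lemma~\ref{zafar} --- are exactly what is needed to make the chaining rigorous, and both are justified correctly; indeed your writeup is more careful than the paper, which never acknowledges that naive chaining fails because feasible intermediate curves need not lie below the greedy ones at intermediate times.
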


\textbf{An example:}
 Assume that harvested energy curves in Tx and the relay nodes are $E_{s}(t)=e^{t}-1$, $E_{r}(t)=2e^{t}-2$, respectively and $r_{sr}(p)=r_{rd}(p)=\frac{1}{2}\log(1+p)$ in which logarithm is in base 2. We want to maximize the throughput from Tx to the destination. Using energy causality and convexity of $E_{s}(t)$ based on Section IV instantaneous arrival data at the relay is maximized in every $t\in[0,1]$ if $E_{sr}(t)=E_{s}(t)$. Thus, the optimal arrival data at the relay is $B_{sr,s}^{*}(t)=\int_{0}^{t}\frac{1}{2}\log(1+\frac{d}{dt^{'}}E_{s}(t^{'}))dt^{'}$ which is a continuous curve. Now, the problem reduces to a single-user throughput maximization problem in the relay node with harvested energy curve $E_{r}(t)$, and arrival data curve $B_{sr,s}^{*}(t)$.
\subsection{Completion Time Minimization}
In this subsection we investigate a completion time minimization problem to transmit $B_{0}$ amount of data to Rx in a multi-hop channel. We remark that the results of this section can be easily reduced to the single-user scenario. We can formulate the problem as follows:
 \begin{align}T_{off}= \min~ T~~~~~~~~~~~~~~~~~~~~~~\\
 s.t.~ \int^{T}_{0} r_{rd}(p_{rd}(t))dt=B_{0},~
 \eqref{43}-\eqref{46}.
 \end{align}
 \begin{lemma}\label{V.2}
 $D^{(MH)}(t)$ in \eqref{42} is nondecreasing. Also if $\lim_{p\to \infty}\frac{r_{rd}(p)}{p}=0$, then $D^{(MH)}(t)$ is continuous.
 \end{lemma}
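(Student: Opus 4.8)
The plan is to handle the two claims separately and, in both, to argue directly on the optimization \eqref{42}--\eqref{46} by extending or restricting a feasible power pair $(p_{sr},p_{rd})$ across the deadline. First I would establish that $D^{(MH)}$ is nondecreasing by a zero-padding argument: given $T_{1}<T_{2}$ and any feasible $(p_{sr},p_{rd})$ for deadline $T_{1}$, I extend it by setting both powers to $0$ on $(T_{1},T_{2}]$. All four causality constraints persist on $[0,T_{2}]$ — the cumulative used-energy and used-data integrals freeze after $T_{1}$ while the cumulative curves $E_{s},E_{r},B_{s}$ are nondecreasing, and the relay constraint \eqref{46} freezes on both sides, so the inequality it held at $T_{1}$ is preserved. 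The objective is unchanged, so the extended policy is feasible for $T_{2}$ with the same value, and taking the supremum gives $D^{(MH)}(T_{2})\ge D^{(MH)}(T_{1})$.

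For continuity, since $D^{(MH)}$ is nondecreasing it has one-sided limits everywhere and only jumps can occur, so it suffices to control how much data can reach the Rx in a short window. Fix $t_{0}$ and set $C:=\sup_{t\in[0,t_{0}+1]}E_{r}(t)$, finite because the harvested-energy curve is bounded. For any feasible policy and any interval $I$ of length $\delta$ ending no later than $t_{0}+1$, the relay energy spent on $I$ is at most $C$, so by Lemma \ref{Jensen} (equivalently Jensen's inequality for the concave $r_{rd}$) the data delivered on $I$ obeys
\begin{align}
\int_{I} r_{rd}(p_{rd}(t))\,dt \le \delta\, r_{rd}\!\Big(\tfrac{1}{\delta}\!\int_{I} p_{rd}(t)\,dt\Big)\le \delta\, r_{rd}(C/\delta)=:\phi(\delta).
\end{align}
Here the hypothesis $\lim_{p\to\infty}r_{rd}(p)/p=0$ is exactly what forces $\phi(\delta)=C\cdot r_{rd}(C/\delta)/(C/\delta)\to 0$ as $\delta\downarrow 0$.

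With this estimate the two one-sided bounds follow by restriction, since the restriction of a feasible policy to any sub-interval $[0,T']$ still satisfies \eqref{43}--\eqref{46} on $[0,T']$. For left-continuity I restrict an (almost) optimal policy for deadline $t_{0}$ to $[0,t_{0}-\delta]$: it is feasible for deadline $t_{0}-\delta$ and loses only the data sent on $(t_{0}-\delta,t_{0})$, giving $D^{(MH)}(t_{0})-\phi(\delta)\le D^{(MH)}(t_{0}-\delta)\le D^{(MH)}(t_{0})$. For right-continuity I restrict an (almost) optimal policy for deadline $t_{0}+\delta$ to $[0,t_{0}]$, which is feasible for deadline $t_{0}$ and yields $D^{(MH)}(t_{0})\le D^{(MH)}(t_{0}+\delta)\le D^{(MH)}(t_{0})+\phi(\delta)$. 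Letting $\delta\downarrow 0$ and invoking $\phi(\delta)\to 0$ squeezes both one-sided limits to $D^{(MH)}(t_{0})$, so $D^{(MH)}$ is continuous.

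The routine parts are the feasibility checks under extension and restriction. The main obstacle is the continuity step: the optimal relay power need not be bounded as $t$ approaches the deadline, so a priori the data pushed through in a vanishing time window need not vanish. The crux, and the only place the extra hypothesis is used, is to see that the sublinear-rate condition, passed through Jensen's inequality, caps that data by $\phi(\delta)\to 0$ independently of how the (possibly unbounded) power is distributed over the window.
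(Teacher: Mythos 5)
Your proof is correct and takes essentially the same route as the paper: the core of both arguments is the Jensen-type window estimate (data deliverable in a window of length $\delta$ is at most $\delta\, r_{rd}(C/\delta)$ for bounded available energy $C$), killed as $\delta \downarrow 0$ by the sublinearity hypothesis $\lim_{p\to\infty} r_{rd}(p)/p = 0$. Your version is in fact more complete than the paper's --- you prove the monotonicity claim (which the paper dismisses as straightforward), treat both one-sided limits explicitly via restriction of $\eta$-optimal policies rather than assuming the supremum is attained, and correctly identify the relay energy $E_{r}$ as the binding quantity, where the paper's bound invokes $E_{s}$ and an undefined constant $A_{0}$ --- but the underlying idea is the same.
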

 \begin{proof}
  The proof of the first part is straightforward and is omitted for brevity. For the second part, for $t\in(t_{0},t_{0}+\epsilon]$ with any $\epsilon \geq0 $ we have,
  \begin{align}
  D^{(MH)}(t_{0})\leq D^{(MH)}(t)\leq D^{(MH)}(t_{0})+(t-t_{0}) r(\frac{A(t)}{t-t_{0}}),
  \end{align}
  where $A(t)=E_{s}(t)-E^{*}(t_0)+A_{0}$ and $\lim_{t\to t_{0}}A(t)=E_{s}(t_{0})-E^{*}(t_{0})+A_{0}=A_{1}$, and $0<A_{1}$ is a finite number. Based on $\lim_{p \to \infty}\frac{r(p)}{p}=0$ and assuming $p=\frac{A(t)}{t-t_{0}}$ we have,
  \begin{align}
  \lim_{t\to t_{0}^{+}} (D^{(MH)}(t_{0})+ (t-t_{0}) r(\dfrac{A(t)}{t-t_{0}}))\nonumber\\=D^{(MH)}(t_{0})+ A_{1}\lim_{p\to \infty}\frac{r(p)}{p}=D^{(MH)}(t_{0}).
  \end{align}
  From above, it is concluded $\lim_{t\to t_{0}^{+}}D^{(MH)}(t)=D^{(MH)}(t_{0})$. We can similarly prove that $\lim_{t\to t_{0}^{-}}D^{(MH)}(t)=D^{(MH)}(t_{0})$. Thus $D^{(MH)}(t)$ is continuous.
  \end{proof}
 \begin{theorem}\label{V.3}
 Assume that $C=\left \{t:~~D^{(MH)}(t)=B_{0}  \right \}$ and $\lim_{p\to \infty}\frac{r_{rd}(p)}{p}=0$. If $C\neq \emptyset$ and $T_{min}=\min~C$, then $T_{off}=T_{min}$, and optimal offline algorithm is given by the proposed algorithm in Section IV for given deadline $T_{off}$. If $C=\emptyset$, there does not exist any policy to transmit the amount of $B_{0}$ data.
 \end{theorem}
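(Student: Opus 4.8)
The plan is to exploit the duality between the throughput-maximization function $D^{(MH)}(t)$ and the completion-time problem, with the continuity of $D^{(MH)}(t)$ (guaranteed by Lemma~\ref{V.2} under the assumption $\lim_{p\to\infty} r_{rd}(p)/p = 0$) as the central tool. First I would dispose of the case $C=\emptyset$. Since $D^{(MH)}(t)$ is continuous and nondecreasing with $D^{(MH)}(0)=0<B_{0}$, the intermediate value theorem shows that if $D^{(MH)}(t^{*})>B_{0}$ for some $t^{*}$, then $D^{(MH)}$ must attain the value $B_{0}$ somewhere in $[0,t^{*}]$, forcing $C\neq\emptyset$. Hence $C=\emptyset$ implies $D^{(MH)}(t)<B_{0}$ for all $t$; as $D^{(MH)}(t)$ is by definition the maximum data deliverable to the Rx by deadline $t$, no deadline---and therefore no policy---can transmit $B_{0}$ bits.

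For the case $C\neq\emptyset$, I would first argue that $T_{min}=\min C$ is attained: $C$ is the preimage of the closed set $\{B_{0}\}$ under the continuous map $D^{(MH)}$, hence closed, and it is bounded below by $0$, so its infimum lies in $C$. Then I would prove the two inequalities separately. For $T_{off}\le T_{min}$: at $t=T_{min}$ the throughput-optimal policy delivers exactly $D^{(MH)}(T_{min})=B_{0}$ bits while respecting all causality constraints \eqref{43}--\eqref{46}, so it is feasible for the completion-time problem with completion time $T_{min}$, giving $T_{off}\le T_{min}$. For $T_{off}\ge T_{min}$: any feasible completion-time policy delivering $B_{0}$ bits by some time $T'$ certifies $D^{(MH)}(T')\ge B_{0}$; combining $D^{(MH)}(0)=0<B_{0}\le D^{(MH)}(T')$ with continuity and the definition $T_{min}=\min C$ (again via the intermediate value theorem) forces $T'\ge T_{min}$. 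Together these give $T_{off}=T_{min}$.

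Finally, for the optimality of the proposed algorithm, I would observe that the policy achieving $D^{(MH)}(T_{off})$ is exactly the one produced by the Section~IV algorithm applied with deadline $T_{off}$, extended to the relay network through the sequential decomposition of Theorem~\ref{V.1.} and its corollary. Since this policy delivers $B_{0}$ bits at the minimum completion time $T_{off}=T_{min}$, it solves the completion-time problem.

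The step I expect to be the main obstacle is cleanly tying the two optimization problems together---specifically, verifying that the throughput-optimal curve at $T_{min}$ is genuinely feasible for the \emph{equality}-constrained completion-time problem and that the intermediate value argument is airtight at the boundary. Everything rests on the continuity of $D^{(MH)}(t)$, so the assumption $\lim_{p\to\infty} r_{rd}(p)/p = 0$ is doing the real work; without it $D^{(MH)}$ could jump over the level $B_{0}$ and the equivalence $T_{off}=T_{min}$ would fail.
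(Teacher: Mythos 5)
Your proposal is correct and follows essentially the same route as the paper's own (much terser) proof: both rest entirely on Lemma~\ref{V.2} (monotonicity and continuity of $D^{(MH)}$), obtain $T_{off}\le T_{min}$ from feasibility of the throughput-optimal policy at $T_{min}$, obtain $T_{off}\ge T_{min}$ from the fact that $D^{(MH)}(t)<B_{0}$ for $t<T_{min}$, and dispose of $C=\emptyset$ by the same intermediate-value contradiction. Your additional details (closedness of $C$, attainment of $\min C$, and the explicit check that the throughput-optimal curve meets the equality constraint) are refinements the paper leaves implicit, not a different argument.
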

 \begin{proof}
Obviously, if $C\neq \emptyset$, exists a method to transmit amount of $B_{0}$ data in $T_{min}$. Based on Lemma \ref{V.2}, if $T_{off}< T_{min}$ holds, we have $D^{(MH)}(T_{off})<B_{0}$. Thus $T_{off}= T_{min}$. If we assume that $C=\emptyset$, we can conclude the amount of $B_{0}$ data cannot be transmitted by any time. Because, if there exists a time $T_{c}$ such that we can transmit $B_{0}$ amount of data until $T_{c}$, we get $B_{0}\leq D^{(MH)}(T_{c})$. Thus, Lemma \ref{V.2} concludes $C\neq \emptyset$.
 \end{proof}
 \begin{coro} We can extend Theorem \ref{V.3} to $n$ relays with defining $D^{(MH)}(t)$ as maximum amount of data curve in Rx in Theorem \ref{V.3}.
 \end{coro}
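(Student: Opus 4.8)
The plan is to reduce the $n$-relay completion-time problem to the two-hop machinery already in place, by observing that the proof of Theorem~\ref{V.3} uses only two facts about $D^{(MH)}(t)$: that it is nondecreasing, and (under $\lim_{p\to\infty} r_{rd}(p)/p = 0$) that it is continuous. Hence the entire task is to re-establish these two properties for the quantity $D^{(MH)}(t)$ now reinterpreted, as the corollary instructs, as the maximum data delivered to the Rx by time $t$ over the $n$-relay network, and then to replay the argument of Theorem~\ref{V.3} unchanged.

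First I would invoke Corollary~2 (the $n$-relay extension of Theorem~\ref{V.1.}), which states that the optimal $n$-relay throughput policy is the cascade in which the Tx runs the Section~IV single-user algorithm to maximize data into relay~1, relay~1 does the same into relay~2, and so on until the last relay feeds the Rx, with each stage's optimal cumulative output curve serving as the arrival-data curve of the next stage. Consequently $D^{(MH)}(t)$ equals the cumulative output of the final hop of this cascade when the common deadline is set to $t$.

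Next I would extend Lemma~\ref{V.2} to this cascade. Monotonicity is immediate, since enlarging the deadline only enlarges the feasible set at every hop, so $D^{(MH)}(t)$ is nondecreasing. For continuity I would argue by induction on the number of stages that the output curve of each single-user throughput-maximization stage is continuous in its deadline, the base case being the source, whose arrival curve $B_s(t)$ is continuous by the Section~II assumptions. The inductive step is exactly the estimate of Lemma~\ref{V.2}: the increment of a stage's cumulative output over a short window $(t_0,t]$ is bounded above by $(t-t_0)\,r\!\left(A(t)/(t-t_0)\right)$, where $A(t)$ is the continuous energy available at that node over the window, and this bound tends to $D^{(MH)}(t_0)$ as $t\to t_0^{+}$ because $\lim_{p\to\infty} r(p)/p = 0$; the symmetric sandwich gives left-continuity. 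The key point is that this energy-only bound stays valid even though the true increment may be further limited by the data arriving from the (continuous) previous stage, so the cascade does not spoil the estimate. Applied to the final hop, this yields continuity of $D^{(MH)}(t)$.

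With $D^{(MH)}(t)$ shown to be nondecreasing and continuous, I would finish by repeating Theorem~\ref{V.3} verbatim: set $C = \{t : D^{(MH)}(t) = B_0\}$; if $C \neq \emptyset$ then $T_{min} = \min C$ is attained by continuity, and any $T_{off} < T_{min}$ forces $D^{(MH)}(T_{off}) < B_0$ by monotonicity, so $T_{off} = T_{min}$; whereas if $C = \emptyset$ no deadline can deliver $B_0$, since an achievable time $T_c$ would satisfy $B_0 \le D^{(MH)}(T_c)$ and place $T_c$ in $C$. The main obstacle is the inductive continuity step: one must verify that the hypothesis $\lim_{p\to\infty} r(p)/p = 0$ is imposed on every hop's rate function, and that feeding the continuous output of one stage as the arrival curve of the next genuinely preserves the Lemma~\ref{V.2} sandwich at each intermediate node, not only at the relay adjacent to the Rx.
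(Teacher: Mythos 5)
Your proposal is correct, and its core is exactly what the paper intends: the paper states this corollary with no proof at all, treating it as immediate because Lemma \ref{V.2} and the argument of Theorem \ref{V.3} carry over verbatim once $D^{(MH)}(t)$ is reread as the maximal data curve at the Rx of the $n$-relay cascade given by Corollary 2. Where you do more work than necessary is the induction over stages for continuity, and the ``main obstacle'' you identify at the end is in fact a non-issue. The sandwich in Lemma \ref{V.2}, $D^{(MH)}(t_{0})\leq D^{(MH)}(t)\leq D^{(MH)}(t_{0})+(t-t_{0})\,r_{rd}\bigl(A(t)/(t-t_{0})\bigr)$, uses only two facts: restricting a feasible deadline-$t$ policy to $[0,t_{0}]$ yields a feasible deadline-$t_{0}$ policy, and the data delivered to the Rx over $(t_{0},t]$ is bounded via Jensen's inequality by the energy harvested at the \emph{last} relay over that window. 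Neither fact involves the intermediate hops, so continuity of $D^{(MH)}(t)$ follows with $\lim_{p\to\infty}r_{rd}(p)/p=0$ imposed on the last hop only, exactly as in Theorem \ref{V.3}. Moreover, the stage-by-stage propagation you sketch would actually be the weakest link if pursued: when the deadline changes, the optimal output curve of stage $i$ (hence the arrival curve of stage $i+1$) can change over the whole interval, not merely near the endpoint, so ``continuity in the deadline'' does not pass through the cascade in the naive inductive way (your base case also conflates continuity of $B_{s}(t)$ in time with continuity in the deadline). The cascade structure is needed only for well-posedness --- each stage's transmitted data curve is continuous in time by the Section II assumptions, hence is an admissible arrival curve for the next stage --- after which monotonicity plus the last-hop estimate lets you replay Theorem \ref{V.3} unchanged, which is the paper's (implicit) argument.
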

 \section{An Online Algorithm}
 In this section, we want to propose an online algorithm for the optimization problem proposed in Section II. In our online algorithm, we do not have any information about the future of two curves $B_{s}(t), E_{s}(t)$, (even the distributions of two processes $B_{s}(t), E_{s}(t)$  are unknown). First, we prove that the proposed online algorithm uses all of the energy or sends all of the data in the data buffer, and the transmitted power curve is a nondecreasing function similar to optimal offline algorithm. Then, we derive a lower bound on the ratio of the amounts of transmitted data in the online algorithm to the optimal offline algorithm.

   We express the online algorithm in the following.
\begin{align}\label{49}
p_{on}(t)=\min
\left\{r^{-1} (\dfrac{B_{rem}(t)}{T-t+\epsilon}),\dfrac{E_{rem}(t)}{T-t+\epsilon}\right\},
\end{align}where $B_{rem}(t)=B_{s}(t)-B_{on}(t)$, $E_{rem}(t)=E_{s}(t)-E_
{on}(t)$ and $\epsilon$ is chosen to make the $p_{on}(t)$ a bounded curve. Note that $\epsilon$ is a sufficiently small real number.

According to above, in our algorithm, if in time $t$ the amount of energy is the limiting element, then $p_{on}(t)$ is determined such that all of the remaining energy in $t$ is utilized with a fixed power until time $T$ and if in time $t$ the amount of information is the limiting element, then $p_{on}(t)$ is determined such that all of the remaining bits in $t$ are transmitted with a fixed rate until time $T$. In the following, we obtain $p_{on}(t) $ in parameters $ B_{s}(t) $, $ E_{s}(t) $, $T$. We assume that $t_{1},t_{2},...,t_{n}$ are instants in which the $p_{on}(t)$ switches from $ r^{-1} (\frac{B_{rem}(t)}{T-t+\epsilon}) $ to $ \frac{E_{rem}(t)}{T-t+\epsilon} $ or vice versa. We assume that in interval $(t_{i-1},t_{i})$ we have $p_{on}(t)=\frac{E_{rem}(t)}{T-t+\epsilon}$. Hence, in $(t_{i},t_{i+1})$ we have $p_{on}(t)=r^{-1}(\frac{B_{rem}(t)}{T-t+\epsilon})$. Thus,

\begin{align}\label{50}
p_{on}(t^{+}_{i})=r^{-1}\bigg(\dfrac{B_{s}(t^{+}_{i})-B_{on}(t_{i})}{T-t_{i}+\epsilon}\bigg),
\end{align}
and after some algebraic calculation we obtain
\begin{align}\label{51}
p_{on}(t)=r^{-1} \bigg( \int_{t^{+}_{i}}^{t}\dfrac{\frac{d}{dt^{'}}B_{s}(t^{'})}{T-t^{'}+\epsilon}dt^{'}+r(p_{on}(t_{i}^{+})) \bigg)
\end{align}
for $t_{i}<t<t_{i+1}$. Also, we can easily show that if in $t\in(t_{i-1},t_{i})$, $p_{on}(t)=r^{-1} (\frac{B_{rem}(t)}{T-t+\epsilon})$ holds, then in $t\in(t_{i},t_{i+1})$ we have $p_{on}(t)=\frac{E_{rem}(t)}{T-t+\epsilon}$ and so,
\begin{align}
p(t^{+}_{i})=\frac{E_{s}(t^{+}_{i})-E_{on}(t_{i})}{T-t_{i}+\epsilon}
\end{align}
\begin{align}\label{53}
p_{on}(t)=  \int_{t^{+}_{i}}^{t}\frac{\frac{d}{dt^{'}}E_{s}(t^{'})}{T-t^{'}+\epsilon}dt^{'}+p_{on}(t_{i}^{+})
\end{align}
for $t_{i}<t<t_{i+1}$.
\begin{lemma}\label{VI.1.}$p_{on}(t)$ is a nondecreasing function.
\end{lemma}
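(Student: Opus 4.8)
The plan is to exploit the explicit piecewise description of $p_{on}(t)$ established in \eqref{49}--\eqref{53}. The switching instants $t_1,\dots,t_n$ partition $[0,T]$ into open intervals on each of which a single regime is active: either the energy-limited form $p_{on}(t)=\frac{E_{rem}(t)}{T-t+\epsilon}$ or the data-limited form $p_{on}(t)=r^{-1}\!\left(\frac{B_{rem}(t)}{T-t+\epsilon}\right)$. I would first prove that $p_{on}$ is nondecreasing on each such interval, and then show that it cannot drop across a switching instant; together these give monotonicity on all of $[0,T]$.

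For the within-regime step I would use \eqref{51} and \eqref{53} directly. On a data-limited interval $(t_i,t_{i+1})$, differentiating \eqref{51} gives $\frac{d}{dt}r(p_{on}(t))=\frac{\frac{d}{dt}B_s(t)}{T-t+\epsilon}$, which is nonnegative because $B_s$ is a cumulative arrival curve (so $\frac{d}{dt}B_s\ge 0$) and $T-t+\epsilon>0$ on $[0,T]$; since $r$ increases monotonically, $r(p_{on})$ nondecreasing forces $p_{on}$ nondecreasing. On an energy-limited interval the analogous computation on \eqref{53} yields $\frac{d}{dt}p_{on}(t)=\frac{\frac{d}{dt}E_s(t)}{T-t+\epsilon}\ge 0$. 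Thus $p_{on}$ is nondecreasing inside every regime interval, and the shrinking denominator never makes it decrease.

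The main obstacle is to rule out a downward jump at the switching instants. Here I would first note that $E_{on}(t)=\int_0^t p_{on}$ and $B_{on}(t)=\int_0^t r(p_{on})$ are continuous, because $\epsilon$ is chosen so that $p_{on}$ is bounded; hence $E_{rem}$ and $B_{rem}$ inherit discontinuities only from $E_s$ and $B_s$. At a switching instant where $E_s$ and $B_s$ are continuous, the two candidate expressions $f_E(t)=\frac{E_{rem}(t)}{T-t+\epsilon}$ and $f_B(t)=r^{-1}\!\left(\frac{B_{rem}(t)}{T-t+\epsilon}\right)$ are themselves continuous and, by the definition of the switch as the crossing of the $\min$ in \eqref{49}, satisfy $f_E(t_i)=f_B(t_i)$; therefore $p_{on}$ is continuous there and the left and right monotone pieces join without a drop.

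Finally I would handle the finitely many instants where $E_s$ or $B_s$ jumps. Such jumps are nonnegative, since energy and data only arrive, so the affected candidate $f_E$ or $f_B$ can only increase across $t_i$. A short case check then shows $\min\{f_E,f_B\}$ cannot decrease: for instance, if the regime before $t_i$ is energy-limited (so $f_E(t_i^-)\le f_B(t_i^-)=f_B(t_i^+)$) and an energy arrival makes $f_E$ jump up, then both candidates after $t_i$ are at least $f_E(t_i^-)=p_{on}(t_i^-)$, so $p_{on}(t_i^+)\ge p_{on}(t_i^-)$; the remaining cases are symmetric. Combining the within-regime monotonicity with the absence of downward jumps at all switching and arrival instants yields that $p_{on}(t)$ is nondecreasing on $[0,T]$.
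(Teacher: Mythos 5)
Your proof is correct and follows essentially the same route as the paper: monotonicity inside each regime interval from \eqref{51} and \eqref{53}, and no downward jump at a switching instant because the active candidate is the minimum in \eqref{49} just before the switch, while $E_{on}(t)$, $B_{on}(t)$ are continuous and $E_{s}(t)$, $B_{s}(t)$ can only jump upward. Your explicit split into continuous switching points versus arrival-jump points is a slightly finer case analysis than the paper's single argument, but the underlying idea is identical.
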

\begin{proof}\eqref{51} and \eqref{53} conclude that $p_{on}(t)$ is nondecreasing in all intervals $(t_{i},t_{i+1})$ for $0\leq i\leq n$ with $t_{0}=0, t_{n+1}=T$. Thus, we must only prove that $p(t_{i}^{-})\leq p(t_{i}^{+})$ for $0<i<n+1$. If in $(t_{i-1},t_{i})$, $p_{on}(t)=\dfrac{E_{rem}(t)}{T-t+\epsilon}$ holds, then in $(t_{i},t_{i+1})$ we have $p_{on}(t)=r^{-1}(\frac{B_{rem}(t)}{T-t+\epsilon})$. Therefore, \eqref{49} concludes that $\frac{E_{rem}(t^{-}_{i})}{T-t_{i}+\epsilon}\leq r^{-1}(\frac{B_{rem}(t_{i}^{-})}{T-t_{i}+\epsilon})$. Since $B_{on}(t^{-}_{i})=B_{on}(t^{+}_{i})$ and $B_{s}(t^{-}_{i})\leq B_{s}(t^{+}_{i})$, we have $\frac{B_{rem}(t^{-}_{i})}{T-t_{i}+\epsilon}\leq \frac{B_{rem}(t^{+}_{i})}{T-t_{i}+\epsilon}$ which results in $p_{on}(t_{i}^{-})\leq p_{on}(t_{i}^{+})$. If in $(t_{i-1},t_{i})$ we have $p_{on}(t)=r^{-1}(\frac{B_{rem}(t)}{T-t+\epsilon})$, similarly we can show that $p_{on}(t_{i}^{-})\leq p_{on}(t_{i}^{+})$, which completes the proof.\end{proof}
\begin{lemma}\label{VI.2.}In our online algorithm either $\lim\limits_{\epsilon\to 0}E_{on}(T)=E_{s}(T)$ or $\lim\limits_{\epsilon\to 0}B_{on}(T)=B_{s}(T)$. Moreover, if for $t\in(t_{n},T)$ we have $p_{on}(t)=r^{-1} (\frac{B_{rem}(t)}{T-t+\epsilon})$, then $\lim\limits_{\epsilon\to 0}B_{on}(T)=B_{s}(T)$; otherwise, $\lim\limits_{\epsilon\to 0}E_{on}(T)=E_{s}(T)$.
\end{lemma}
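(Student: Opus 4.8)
The plan is to prove the ``moreover'' statement first; the dichotomy (``either \ldots\ or \ldots'') then follows immediately, because in the final interval $(t_n,T)$ the minimum in \eqref{49} is realized by exactly one of the two candidate branches, and each branch forces the corresponding limit. I would treat the two cases symmetrically and write out only the data-active case, i.e. for $t\in(t_n,T)$ we have $p_{on}(t)=r^{-1}\!\left(\frac{B_{rem}(t)}{T-t+\epsilon}\right)$; the energy-active case is verbatim with $B$ replaced by $E$ and $r^{-1}$ dropped.

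First I would convert the defining relation into a differential equation for the remaining data. Since $B_{on}'(t)=r(p_{on}(t))=\frac{B_{rem}(t)}{T-t+\epsilon}$ on $(t_n,T)$ and $B_{rem}=B_s-B_{on}$, we get the linear equation $B_{rem}'(t)+\frac{B_{rem}(t)}{T-t+\epsilon}=B_s'(t)$. The point is that this is exactly
\begin{align}
\frac{d}{dt}\left(\frac{B_{rem}(t)}{T-t+\epsilon}\right)=\frac{B_s'(t)}{T-t+\epsilon},
\end{align}
which I would confirm by a one-line differentiation (the factor $T-t+\epsilon$ serves as the integrating factor).

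Next I would integrate this identity from $t_n$ to $T$ and solve for the remaining data at the deadline, obtaining
\begin{align}
B_{rem}(T)=\frac{\epsilon}{T-t_n+\epsilon}\,B_{rem}(t_n^{+})+\epsilon\int_{t_n}^{T}\frac{B_s'(s)}{T-s+\epsilon}\,ds,
\end{align}
and then let $\epsilon\to0$. The first term plainly vanishes. For the second I would use that $B_s'$ is bounded on the compact interval $[0,T]$, say $|B_s'|\le M$ (guaranteed by the assumption that $B_s$ is a bounded differentiable function with piecewise continuous derivative), to bound it by $\epsilon M\ln\frac{T-t_n+\epsilon}{\epsilon}$, which tends to $0$ since $\epsilon\ln\frac1\epsilon\to0$. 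This gives $\lim_{\epsilon\to0}B_{rem}(T)=0$, i.e. $\lim_{\epsilon\to0}B_{on}(T)=B_s(T)$, and the symmetric computation yields $\lim_{\epsilon\to0}E_{on}(T)=E_s(T)$ in the energy-active case.

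The main obstacle is precisely the second term: the integral $\int_{t_n}^{T}\frac{B_s'(s)}{T-s+\epsilon}\,ds$ itself diverges like $\ln\frac1\epsilon$, so one cannot bound it in isolation and must keep the prefactor $\epsilon$ and invoke $\epsilon\ln\frac1\epsilon\to0$. A secondary technical point is ensuring that $(t_n,T)$ is a genuine final interval on which a single branch of the minimum is active; if the switching instants $t_i$ accumulated at $T$, I would instead run the same ODE estimate on a right-neighborhood of $T$ on which one branch prevails, and the argument carries over unchanged.
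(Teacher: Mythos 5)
Your argument is correct on its stated assumptions and is, at its core, the same proof as the paper's. Your integrating-factor identity $\frac{d}{dt}\bigl(\frac{B_{rem}(t)}{T-t+\epsilon}\bigr)=\frac{B_{s}'(t)}{T-t+\epsilon}$ is exactly the content of the paper's \eqref{51} (there written in integrated form for $r(p_{on}(t))$, which equals $\frac{B_{rem}(t)}{T-t+\epsilon}$ on the active branch), and your closed form $B_{rem}(T)=\frac{\epsilon}{T-t_{n}+\epsilon}B_{rem}(t_{n}^{+})+\epsilon\int_{t_{n}}^{T}\frac{B_{s}'(s)}{T-s+\epsilon}\,ds$ coincides term by term with the paper's expression for $B_{s}(T)-B_{on}(T)$. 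The only structural difference is bookkeeping: you evaluate the identity once at $t=T$, whereas the paper integrates \eqref{51} over $(t_{n},T)$ to form $B_{on}(T)$ and must interchange the order of a double integral to reach the same formula; your route avoids that Fubini step. The concluding estimate (the prefactor $\epsilon$ against a $\ln\frac{1}{\epsilon}$-divergent integral, using $\epsilon\ln\frac{1}{\epsilon}\to 0$) is identical in both, and you correctly identify it as the crux.

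There is one real omission relative to the lemma as stated for the paper's model. You justify $|B_{s}'|\le M$ on all of $[0,T]$ by appeal to the differentiability assumption, but the model explicitly allows $B_{s}$ (and $E_{s}$) to be discontinuous at finitely many points, and at such a jump your ODE identity breaks down and $B_{s}'$ is not defined, let alone bounded. This is not a cosmetic point: the paper devotes the second half of its proof to it, splitting the integral at the jump instants $t_{j_{1}},\dots,t_{j_{m}}\in(t_{n},T)$, showing each jump contributes a term $\frac{\epsilon\,(B_{s}(t_{j_{i}}^{+})-B_{s}(t_{j_{i}}^{-}))}{T-t_{j_{i}}+\epsilon}\to 0$, and applying the bound $|B_{s}'|\le M$ only piecewise between jumps. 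The repair is immediate within your framework—integrate the identity piecewise (note $B_{rem}$ inherits the jumps of $B_{s}$, since $B_{on}$ is continuous) and add the vanishing jump terms—but as written your proof covers only the case where $B_{s}$ has no discontinuity in $(t_{n},T)$. Your closing caveat about switching instants accumulating at $T$ is sensible extra care; the paper sidesteps it by implicitly assuming finitely many switches $t_{1},\dots,t_{n}$.
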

\begin{proof} We assume that for $t\in(t_{n},T)$ we have $p_{on}(t)=r^{-1} (\frac{B_{rem}(t)}{T-t+\epsilon})$, the other condition can be proved similarly.
\begin{align}
B_{on}(T)=\int_{0}^{T} r(p_{on}(t))dt=\int_{0}^{t_{n}}r(p_{on}(t))dt\nonumber\\
+\int_{t_{n}}^{T}r(p_{on}(t))dt=B_{on}(t_{n})+\int_{t_{n}}^{T}r(p_{on}(t))dt.
\end{align}
From \eqref{51}:
\begin{align}
 r(p_{on}(t))=
  \int_{t^{+}_{n}}^{t}\frac{\frac{d}{dt^{'}}B_{s}(t^{'})}{T-t^{'}+\epsilon}dt^{'}+\frac{B_{s}(t^{+}_{n})-B_{on}(t_{n})}{T-t_{n}+\epsilon}
 \end{align}
 for $t_{n}<t\leq T$. So,
 \begin{align}
 B_{on}(T)=&\int_{t_{n}}^{T}\int_{t_{n}^{+}}^{t}\frac{\frac{d}{dt^{'}}B_{s}(t^{'})}{T-t^{'}+\epsilon}dt^{'}dt\nonumber\\&+\frac{B_{s}(t^{+}_{n})-B_{on}(t_{n})}{T-t_{n}+\epsilon}(T-t_{n})+B_{on}(t_{n})=\nonumber\\
 &\int_{t^{+}_{n}}^{T}\frac{(T-t^{'})\frac{d}{dt^{'}}B_{s}(t^{'})}{T-t^{'}+\epsilon}dt^{'}+\frac{(T-t_{n})}{T-t_{n}+\epsilon}B_{s}(t_{n}^{+})+\nonumber\\
 &\frac{\epsilon}{T-t_{n}+\epsilon}B_{on}(t_{n}).~~~~~~~~~~~
 \end{align}
 Now, it is enough to prove that $\lim\limits_{\epsilon\to 0}(B_{s}(T)-B_{on}(T))=0$.
 \begin{align}
 B_{s}(T)-B_{on}(T)=B_{s}(T)-B_{s}(t_{n}^{+})+B_{s}(t_{n}^{+})-B_{on}(T)=~~~~~~\nonumber\\
 \int_{t^{+}_{n}}^{T}\bigg(\frac{d}{dt^{'}}B_{s}(t^{'})-\frac{(T-t^{'})\frac{d}{dt^{'}}B_{s}(t^{'})}{T-t^{'}+\epsilon}\bigg)dt^{'}+\dfrac{\epsilon}{T-t_{n}+\epsilon}B_{s}(t_{n}^{+})~~~~~~\nonumber\\
  -\frac{\epsilon}{T-t_{n}+\epsilon}B_{on}(t_{n})=\int_{t^{+}_{n}}^{T}\dfrac{\epsilon }{T-t^{'}+\epsilon}\frac{d}{dt^{'}}B_{s}(t^{'})dt^{'}+~~~~~~~~~~~\nonumber\\
  \frac{\epsilon}{T-t_{n}+\epsilon}B_{s}(t_{n}^{+})-
    \frac{\epsilon}{T-t_{n}+\epsilon}B_{on}(t_{n}).~~~~~~~~~~~~~~~~~~~
 \end{align}
 It is clear that $\lim\limits_{\epsilon\to 0}\dfrac{\epsilon}{T-t_{n}+\epsilon}B_{s}(t_{n}^{+})=0$ and $\lim\limits_{\epsilon\to 0}\frac{\epsilon}{T-t_{n}+\epsilon}B_{on}(t_{n})=0$, because, $B_{s}(t)$ is bounded and as a result $B_{on}(t)$ is bounded, too. Now, assume that $t_{j_{1}},~t_{j_{2}},.., t_{j_{m}}$ are all of instants in interval $(t_{n},T)$ such that $B_{s}(t_{j_{i}}^{-})\neq B_{s}(t_{j_{i}}^{+})$ for $1\leq i\leq m$. Thus,
 \begin{align}
 \int_{t^{+}_{n}}^{T}\frac{\epsilon }{T-t^{'}+\epsilon}\frac{d}{dt^{'}}B_{s}(t^{'})dt^{'}=\int_{t^{+}_{n}}^{t_{j_{1}}^{-}}\frac{\epsilon }{T-t^{'}+\epsilon}\frac{d}{dt^{'}}B_{s}(t^{'})dt^{'}+\nonumber\\
 \frac{\epsilon(B_{s}(t_{j_{1}}^{+})-B_{s}(t_{j_{1}}^{-}))}{T-t_{j_{1}}+\epsilon}+\int_{t_{j_{1}}^{+}}^{t_{j_{2}}^{-}}\frac{\epsilon }{T-t^{'}+\epsilon}\frac{d}{dt^{'}}B_{s}(t^{'})dt^{'}+\nonumber\\
  \frac{\epsilon(B_{s}(t_{j_{2}}^{+})-B_{s}(t_{j_{2}}^{-}))}{T-t_{j_{2}}+\epsilon}+...+\int_{t_{j_{m}}^{+}}^{T}\frac{\epsilon }{T-t^{'}+\epsilon}\frac{d}{dt^{'}}B_{s}(t^{'})dt^{'}.
 \end{align}
 It is clear that $\lim\limits_{\epsilon\to 0}\frac{\epsilon(B_{s}(t_{j_{i}}^{+})-B_{s}(t_{j_{i}}^{-}))}{T-t_{j_{i}}+\epsilon}=0$, for $0<i<m$.
 From assumptions in Section II for $B_{s}(t)$, we conclude that $\frac{d}{dt^{'}}B_{s}(t)$ is bounded in intervals $(t_{j_{i}},t_{j_{i+1}})$ for $0\leq i\leq m$ with $t_{j_{0}}=t_{n}$ and $t_{j_{m+1}}=T$. Thus,
 \begin{align}
 0\leq \int_{t_{j_{i}}^{+}}^{t_{j_{i+1}}^{-}}\frac{\epsilon }{T-t^{'}+\epsilon}\frac{d}{dt^{'}}B_{s}(t^{'})dt^{'}\leq \int_{t_{j_{i}}^{+}}^{t_{j_{i+1}}^{-}}\frac{\epsilon }{T-t^{'}+\epsilon}M dt^{'}\
 \end{align}
for $0\leq i\leq m$, where $\frac{d}{dt}B_{s}(t)\lvert_{t=t^{+}}\leq M$ and $\frac{d}{dt}B_{s}(t)\lvert_{t=t^{-}}\leq M$ for $0\leq t\leq T$ in which $\frac{d}{dt}B_{s}(t)\lvert_{t=t^{-}},~\frac{d}{dt}B_{s}(t)\lvert_{t=t^{+}}$ mean left and right derivatives of $B_{s}(t)$ in $t$.

  Also it can be easily shown that, $\lim\limits_{\epsilon\to 0}\int_{t_{j_{i}}^{+}}^{t_{j_{i+1}}^{-}}\frac{\epsilon }{T-t^{'}+\epsilon}M dt^{'}=0$ for $0\leq i\leq m$. Hence the proof is complete. \end{proof}
  \begin{theorem}\label{VI.3}
Assume $l$ and $k$ are two real numbers such that $t_{i}< \frac{T}{l}< t_{i+1}$ and $E_{s}(\frac{T}{l})=\frac{E_{s}(T)}{k}$. If $p_{on}(\frac{T}{l})=\frac{E_{rem}(\frac{T}{l})}{T-\frac{T}{l}+\epsilon}$, then $\frac{1}{k}(1-\frac{1}{l})\leq \frac{B_{on}(T)}{B_{off}(T)}$ otherwise $\frac{B_{s}(\frac{T}{l})}{B_{s}(T)}\leq \frac{B_{on}(T)}{B_{off}(T)}$.
  \end{theorem}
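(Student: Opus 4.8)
The plan is to treat the two regimes separately and, in each, to combine an easy upper bound on the optimal offline throughput $B_{off}(T)$ with a lower bound on $B_{on}(T)$ coming from the monotonicity of the online power curve. Two upper bounds on $B_{off}(T)$ will be used: from data causality, $B_{off}(T)\le B_{s}(T)$; and from energy causality together with the concavity of $r$, Jensen's inequality gives
\begin{align}
B_{off}(T)=\int_{0}^{T} r(p^{*}(t))\,dt\le T\,r\!\left(\tfrac{1}{T}\int_{0}^{T}p^{*}(t)\,dt\right)\le T\,r\!\left(\tfrac{E_{s}(T)}{T}\right),
\end{align}
the last step using $\int_{0}^{T}p^{*}\le E_{s}(T)$ and that $r$ increases. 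For the lower bound I would use that $p_{on}$ is nondecreasing (Lemma~\ref{VI.1.}) and $r$ is increasing, so $r(p_{on}(t))\ge r(p_{on}(\tfrac{T}{l}))$ for $t\ge\tfrac{T}{l}$, giving $B_{on}(T)\ge\int_{T/l}^{T}r(p_{on}(t))\,dt\ge(T-\tfrac{T}{l})\,r(p_{on}(\tfrac{T}{l}))$, and likewise $E_{on}(\tfrac{T}{l})=\int_{0}^{T/l}p_{on}\le\tfrac{T}{l}\,p_{on}(\tfrac{T}{l})$. As in Lemma~\ref{VI.2.}, I would let $\epsilon\to 0$ at the end.

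For the data-binding case, $r(p_{on}(\tfrac{T}{l}))=\tfrac{B_{s}(T/l)-B_{on}(T/l)}{T-T/l+\epsilon}$, so the lower bound reads $B_{on}(T)\ge B_{on}(\tfrac{T}{l})+(T-\tfrac{T}{l})\tfrac{B_{s}(T/l)-B_{on}(T/l)}{T-T/l+\epsilon}$, which tends to $B_{s}(\tfrac{T}{l})$ as $\epsilon\to 0$. Combining $B_{on}(T)\ge B_{s}(\tfrac{T}{l})$ with $B_{off}(T)\le B_{s}(T)$ yields $\tfrac{B_{on}(T)}{B_{off}(T)}\ge\tfrac{B_{s}(T/l)}{B_{s}(T)}$, which is the second conclusion.

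For the energy-binding case, which is the crux, I would start from $p_{on}(\tfrac{T}{l})=\tfrac{E_{s}(T/l)-E_{on}(T/l)}{T-T/l+\epsilon}$, substitute the monotonicity estimate $E_{on}(\tfrac{T}{l})\le\tfrac{T}{l}p_{on}(\tfrac{T}{l})$, and solve the resulting self-referential inequality for $p_{on}(\tfrac{T}{l})$: this gives $p_{on}(\tfrac{T}{l})\,(T+\epsilon)\ge E_{s}(\tfrac{T}{l})$, i.e. $p_{on}(\tfrac{T}{l})\ge\tfrac{E_{s}(T/l)}{T+\epsilon}\to\tfrac{E_{s}(T)}{kT}$. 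Then $B_{on}(T)\ge(T-\tfrac{T}{l})\,r\!\left(\tfrac{E_{s}(T)}{kT}\right)$, and the concavity scaling $r(\tfrac{x}{k})\ge\tfrac{1}{k}r(x)$ (valid since $r(0)=0$, $r$ is concave, and $k\ge 1$ because $E_{s}$ is nondecreasing and $\tfrac{T}{l}<T$) combined with the energy bound on $B_{off}(T)$ gives
\begin{align}
\frac{B_{on}(T)}{B_{off}(T)}\ge\frac{\tfrac{1}{k}(T-\tfrac{T}{l})\,r(\tfrac{E_{s}(T)}{T})}{T\,r(\tfrac{E_{s}(T)}{T})}=\frac{1}{k}\left(1-\frac{1}{l}\right),
\end{align}
as claimed.

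The main obstacle I expect is the energy-binding case: the trick of bounding $E_{on}(\tfrac{T}{l})$ by $\tfrac{T}{l}\,p_{on}(\tfrac{T}{l})$ via monotonicity and then isolating $p_{on}(\tfrac{T}{l})$ is what converts the binding condition into the clean slope estimate $p_{on}(\tfrac{T}{l})\ge\tfrac{E_{s}(T/l)}{T}$; pinning this down, together with carefully justifying the $\epsilon\to 0$ limits (boundedness of $B_{s}$, $E_{s}$ and the explicit limiting formulas established in Lemma~\ref{VI.2.}), is the part that requires care. The data-binding case and the two upper bounds on $B_{off}(T)$ are comparatively routine.
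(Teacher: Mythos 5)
Your proposal is correct and follows essentially the same route as the paper: the same case split on which constraint binds at $t=\tfrac{T}{l}$, the same lower bound $B_{on}(T)\geq (T-\tfrac{T}{l})\,r(p_{on}(\tfrac{T}{l}))$ from monotonicity of $p_{on}$, the same Jensen upper bound $B_{off}(T)\leq T\,r(\tfrac{E_{s}(T)}{T})$ together with $B_{off}(T)\leq B_{s}(T)$, and the same concavity scaling $r(x/k)\geq \tfrac{1}{k}r(x)$ at the end. The only cosmetic difference is that you isolate $p_{on}(\tfrac{T}{l})\geq \tfrac{E_{s}(T/l)}{T+\epsilon}$ by solving the self-referential inequality directly, whereas the paper derives the identical bound by contradiction from its inequalities \eqref{60} and \eqref{100}.
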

  \begin{proof} Note that if $p_{on}(\frac{T}{l})=\frac{E_{rem}(\frac{T}{l})}{T-\frac{T}{l}+\epsilon}$, then  \begin{align}\label{59} \frac{E_{s}(\frac{T}{l})}{T+\epsilon}\leq p_{on}(\frac{T}{l}),\end{align} because if \eqref{59} does not hold then should
  \begin{align}\label{60}
  \frac{E_{s}(\frac{T}{l})-E_{on}(\frac{T}{l})}{T-\frac{T}{l}+\epsilon}<\frac{E_{s}(\frac{T}{l})}{T+\epsilon}.
  \end{align}
   Also, due to the fact that $p_{on}(t)$ is a nondecreasing function (based on Lemma \ref{VI.1.}), we have,
   \begin{align}\label{100} \frac{E_{on}(\frac{T}{l})}{\frac{T}{l}}\leq\frac{E_{s}(\frac{T}{l})-E_{on}(\frac{T}{l})}{T-\frac{T}{l}+\epsilon}.
   \end{align}
    After some algebraic calculation, it can be shown that \eqref{60} and \eqref{100} result in a contradiction and so \eqref{59} holds. Also, we have,
    \begin{align}\label{63}
    B_{on}(T)=B_{on}(\frac{T}{l})+\int^{T}_{\frac{T}{l}}r(p_{on}(t))dt\geq\nonumber\\
     \int^{T}_{\frac{T}{l}}r(p_{on}(t))dt\geq (T-\frac{T}{l})r(p_{on}(\frac{T}{l})).
    \end{align}
    Also, we have from Lemma \ref{Jensen},
    \begin{align}\label{64}
    B_{off}(T)\leq T r(\frac {E_{s}(T)}{T}).
    \end{align}
    Thus, it is concluded from \eqref{59}, \eqref{63} and \eqref{64} and assuming that $\epsilon$ is small sufficiently,
    \begin{align}
    \frac{B_{on}(T)}{B_{off}(T)}\geq \frac{(T-\frac{T}{l})r(\frac{E_{s}(\frac{T}{l})}{T+\epsilon})}{T r(\frac {E_{s}(T)}{T})}=\nonumber\\
    \frac{(1-\frac{1}{l})r(\frac{E_{s}(\frac{T}{l})}{T+\epsilon})}{r(\frac {E_{s}(T)}{T})}>\frac{1}{k}(1-\frac{1}{l}).
    \end{align}
    Now, since $p_{on}(t)$ is a nondecreasing function in $t$, if $p_{on}(\frac{T}{l})=r^{-1}\bigg(\frac{B_{s}(\frac{T}{l})-B_{on}(\frac{T}{l})}{T-\frac{T}{l}+\epsilon}\bigg)$ then we have,
    \begin{align}
    B_{on}(T)\geq B_{on}(\frac{T}{l})+\frac{B_{s}(\frac{T}{l})-B_{on}(\frac{T}{l})}{T-\frac{T}{l}+\epsilon}(T-\frac{T}{l})\nonumber\\
    =\frac{T-\frac{T}{l}}{T-\frac{T}{l}+\epsilon}B_{s}(\frac{T}{l})+\frac{\epsilon}{T-\frac{T}{l}+\epsilon}B_{on}(\frac{T}{l})\geq~~~~~\nonumber\\
     \frac{T-\frac{T}{l}}{T-\frac{T}{l}+\epsilon}B_{s}(\frac{T}{l}).~~~~~~~~~~~~~~~~~~~~
    \end{align}
    Also we know $B_{off}(T)\leq B_{s}(T)$ and since $\epsilon$ is sufficiently small we obtain,
    \begin{align}
    \frac{B_{on}(T)}{B_{off}(T)}\geq \frac{B_{s}(\frac{T}{l})}{B_{s}(T)},
    \end{align}
    and the proof of the theorem is complete.\end{proof}
\begin{coro}
The following results are concluded from Theorem \ref{VI.3}:

(i) If in $(t_{n},T)$, $p_{on}(t)=r^{-1}(\frac{B_{rem}(t)}{T-t+\epsilon})$, then $\frac{B_{on}(T)}{B_{off}(T)}\approx 1$, which means that the online algorithm transmits all of data bits of those in offline algorithm.

(ii) If $E_{s}(\frac{T}{2})=E_{s}(T)$ and $p_{on}(\frac{T}{2})=r^{-1}(\frac{E_{rem}(\frac{T}{2})}{T-\frac{T}{2}+\epsilon})$, then $\frac{B_{on}(T)}{B_{off}(T)}\geq\dfrac{1}{2}$, which means that the online algorithm transmits at least half of data bits of those in offline algorithm.
\end{coro}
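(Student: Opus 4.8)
The plan is to derive both parts directly from Theorem~\ref{VI.3}, using in addition Lemma~\ref{VI.2.} for part~(i) together with the feasibility and optimality facts already established for the two algorithms. Neither part requires new machinery; each is obtained by specializing the bounds of Theorem~\ref{VI.3} and reading off the correct parameters.

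For part~(i), I would observe that the hypothesis is exactly the premise of the last sentence of Lemma~\ref{VI.2.}: if $p_{on}(t)=r^{-1}(\frac{B_{rem}(t)}{T-t+\epsilon})$ on $(t_{n},T)$, then $\lim_{\epsilon\to 0}B_{on}(T)=B_{s}(T)$. I would then combine this with two inequalities valid for every admissible $\epsilon$. First, the data-causality constraint \eqref{gh} evaluated at $t=T$ forces $B_{off}(T)\leq B_{s}(T)$. Second, because $p_{on}(t)$ is by construction a feasible policy (each branch of the $\min$ in \eqref{49} respects one of the two causality constraints), its throughput cannot exceed the offline optimum, so $B_{on}(T)\leq B_{off}(T)$. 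Chaining these yields $B_{on}(T)\leq B_{off}(T)\leq B_{s}(T)$; letting $\epsilon\to 0$ squeezes $B_{off}(T)$ between $B_{on}(T)\to B_{s}(T)$ and $B_{s}(T)$, so $B_{off}(T)\to B_{s}(T)$ and the ratio $\frac{B_{on}(T)}{B_{off}(T)}\to 1$.

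For part~(ii), I would specialize Theorem~\ref{VI.3} to $l=2$. The hypothesis $E_{s}(\frac{T}{2})=E_{s}(T)$ means that in the parametrization $E_{s}(\frac{T}{l})=\frac{E_{s}(T)}{k}$ of that theorem one must take $k=1$. The assumption that $p_{on}(\frac{T}{2})$ lies in the energy-limited branch places us in the first case of Theorem~\ref{VI.3}, which gives $\frac{1}{k}(1-\frac{1}{l})\leq\frac{B_{on}(T)}{B_{off}(T)}$. Substituting $k=1$ and $l=2$ produces $\frac{1}{2}\leq\frac{B_{on}(T)}{B_{off}(T)}$, as claimed.

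The main point requiring care is not analytic but one of bookkeeping: correctly reading off $k=1$ from $E_{s}(\frac{T}{2})=E_{s}(T)$ and aligning the stated branch in part~(ii) with the energy-limited case $\frac{E_{rem}(T/2)}{T-T/2+\epsilon}$ of Theorem~\ref{VI.3} (the hypothesis of that theorem is stated for the energy branch). In part~(i), the only subtlety is that the equality is asymptotic: one must take the $\epsilon\to 0$ limit of Lemma~\ref{VI.2.} jointly with the fixed-$\epsilon$ feasibility bound $B_{on}(T)\leq B_{off}(T)$, so that $\frac{B_{on}(T)}{B_{off}(T)}\approx 1$ is understood precisely in the limit $\epsilon\to 0$ that the algorithm in \eqref{49} already presupposes.
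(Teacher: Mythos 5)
Your proposal is correct, and part (ii) coincides with the paper's own derivation: the paper simply substitutes $l=2$, $k=1$ into the energy-branch bound of Theorem~\ref{VI.3}, exactly as you do (you are also right that the $r^{-1}$ wrapped around the energy term in the statement of (ii) is an inconsistency with Theorem~\ref{VI.3}, whose energy branch is $p_{on}(T/l)=\frac{E_{rem}(T/l)}{T-T/l+\epsilon}$; the intended hypothesis is the energy-limited one). Your part (i), however, takes a genuinely different route. The paper stays inside Theorem~\ref{VI.3}: it applies the data-branch bound $\frac{B_{s}(T/l)}{B_{s}(T)}\leq\frac{B_{on}(T)}{B_{off}(T)}$ and lets $l\to 1$, so that $B_{s}(T/l)\to B_{s}(T)$ and the ratio tends to $1$. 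You instead bypass Theorem~\ref{VI.3} entirely, combining the last assertion of Lemma~\ref{VI.2.} ($\lim_{\epsilon\to 0}B_{on}(T)=B_{s}(T)$) with feasibility of the online policy in \eqref{49} (hence $B_{on}(T)\leq B_{off}(T)$) and data causality \eqref{gh} at $t=T$ (hence $B_{off}(T)\leq B_{s}(T)$), and squeeze. Your route buys two things: it shows in passing that $B_{off}(T)=B_{s}(T)$, i.e., the offline optimum also empties the data buffer under this hypothesis, and it does not rely on $B_{s}(T/l)\to B_{s}(T)$ as $l\to 1$, which implicitly requires left-continuity of $B_{s}$ at $T$. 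The paper's route has the cosmetic advantage of matching the corollary's claim that both parts follow ``from Theorem~\ref{VI.3}.'' The one caveat, which affects the paper's argument equally, is that $t_{n}$ and the online policy depend on $\epsilon$, so the hypothesis of (i) must be read as holding for all sufficiently small $\epsilon$ when the limit is taken; you flag this correctly.
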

\begin{proof}
For (i), in Theorem \ref{VI.3}, $l$ can be chosen sufficiently close to $1$. (ii) can be derived by substitution.
\end{proof}
Although, there are many examples that this bound is good for them but the authors believe that the above lower bound is not tight enough for all the arbitrary two curves $E_{s}(t),~B_{s}(t)$, and the algorithm is more efficient than the bound in these examples. Another advantage of this online algorithm is that it does not require any information about the distributions of the two processes $B_{s}(t)$ and $E_{s}(t)$.

 \section{Numerical Results}
 \begin{figure*}
  \centering
  \includegraphics[width=7in]{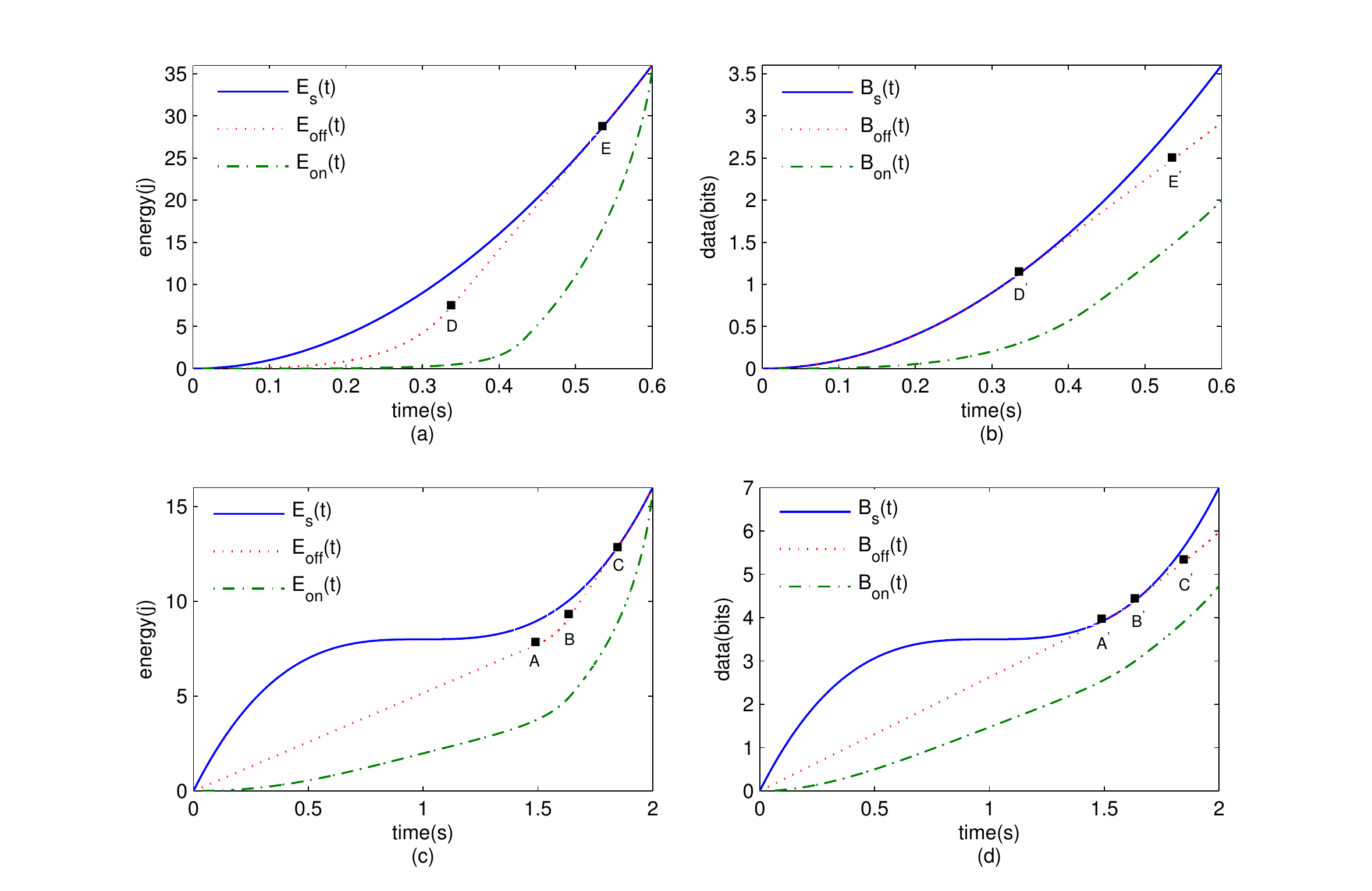}
  \caption{Online and Optimal offline algorithms without discretizing}
  \label{fig}
  \end{figure*}
  \begin{figure*}
    \centering
    \includegraphics[width=7in]{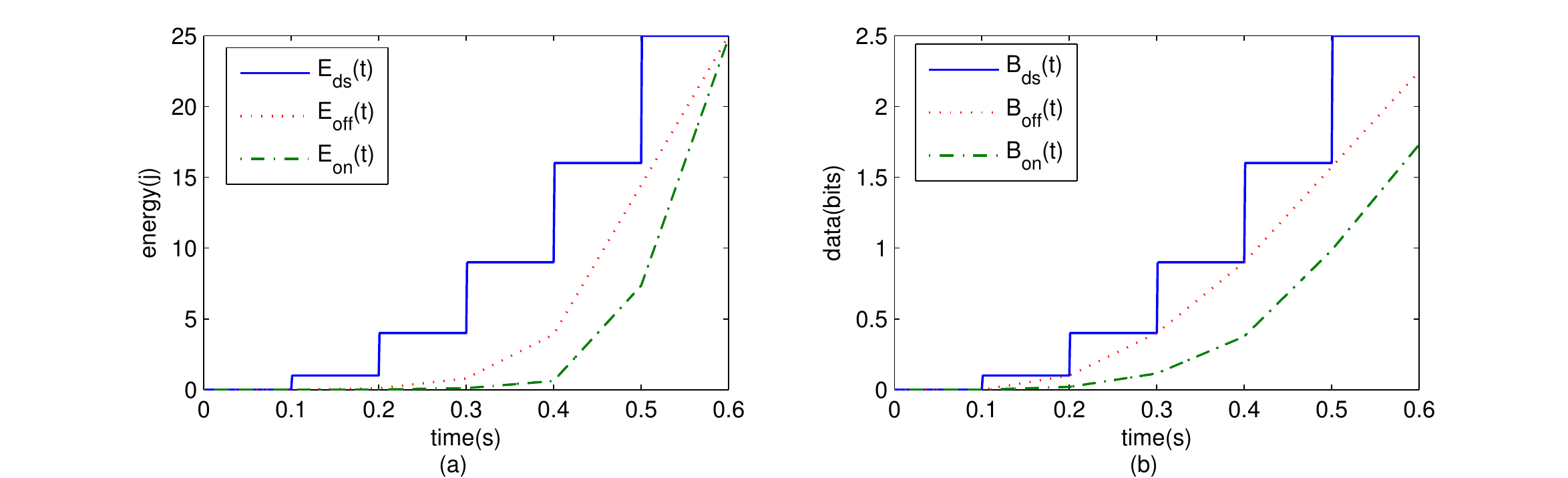}
    \caption{Online and Optimal offline algorithms with discretizing}
    \label{figd}
    \end{figure*}
In this section, we provide numerical examples to explain our results. Consider a band-limited additive white Gaussian noise channel with bandwidth $W=1$ Hz. Also, the actual channel gain divided by the noise power spectral density multiplied by the bandwidth is $1$. So we have, $r(p)=\log(1+p)$, where the logarithm is in base $2$. We consider two different $E_{s}(t),B_{s}(t)$ curve pairs. The first pair consists of two convex functions $E_{s}(t)=100 t^{2}$ J and $B_{s}(t)=10 t^{2}$ bits in Fig. \ref{fig} (a) and Fig. \ref{fig} (b), while the second pair consists of more general functions $E_{s}(t)=8 (t-1)^{3}+8$ J and $B_{s}(t)=3.5 (t-1)^{3}+3.5$ bits in Fig. \ref{fig} (c) and Fig. \ref{fig} (d). We remark that due to the nature of the harvested energy and arrival data, these functions must be non-decreasing. These figures show the harvested energy/arrival data curves and the transmitted energy/data curves based on our proposed offline and online algorithms versus the time. In Fig. \ref{fig} (a) and Fig. \ref{fig} (b), we assume that $T=0.6$ s while in Fig. \ref{fig} (c) and Fig. \ref{fig} (d), we assume that $T=2$ s.

As can be easily seen from Fig. \ref{fig} (a) and Fig. \ref{fig} (b), the $E_{off}(t)$ curve consists of three parts: the offline algorithm is in state C in $(0,D)$ (approximately $(0,.34)$), it is in state A in $(D,E)$ (approximately $(.34,0.54)$), and it is in state B in $(E,T)$ (approximately $(0.54,0.6)$). In $(0,D)$, $E_{off}(t)$ is nonlinear; $E_{off}(t)\neq E_{s}(t)$ and this means that $B(t)=B_{s}(t)$, according to Lemma \ref{sfp}; in $(E,T)$, $E_{off}(t)$ is nonlinear; $B_{off}(t)\neq B_{s}(t)$ and this means that $E(t)=E_{s}(t)$, according to Lemma \ref{sfp}. Moreover, we observe that $p(t)$ is a nondecreasing function because $\frac{d}{dt}E_{off}(t)\geq 0$ (Lemma \ref{sdddf}).
We remark that the optimal algorithm can transmit at most $2.9$ bits at the end of the interval ($T=0.6$) while it uses all the harvested energy, i.e., $E_{off}(0.6)=E_{s}(0.6)$. Thus, the system is "energy constrained" in this case.

As can be easily seen from Fig. \ref{fig} (c) and Fig. \ref{fig} (d), the $E_{off}(t)$ curve consists of four parts: the offline algorithm is in state A in $(0,A)$ (approximately $(0,.1.5)$), it is in state C in $(A,B)$ (approximately $(1.5,1.63)$), it is in state A in $(B,C)$ (approximately $(1.63,1.86)$) and it is in state B in $(C,T)$ (approximately $(1.86,2)$). In $(A,B)$, $E_{off}(t)$ is nonlinear; $E_{off}(t)\neq E_{s}(t)$ and this means that $B(t)=B_{s}(t)$, according to Lemma \ref{sfp}; in $(C,T)$, $E_{off}(t)$ is nonlinear; $B_{off}(t)\neq B_{s}(t)$ and this means that $E(t)=E_{s}(t)$, according to Lemma \ref{sfp}. Moreover, we observe that $p(t)$ is a nondecreasing function because $\frac{d}{dt}E_{off}(t)\geq 0$ (Lemma \ref{sdddf}).
We remark that the optimal algorithm can transmit at most $6$ bits at the end of the interval ($T=2$) while it uses all the harvested energy, i.e., $E_{off}(0.6)=E_{s}(0.6)$. Thus, the system is "energy constrained" in this case.

For the online algorithm we assume that $\epsilon=.001$. It can be easily seen that $E_{on}(t)$ and $B_{on}(t)$ are convex hence, $p_{on}(t)$ is nondecreasing (Lemma \ref{VI.1.}). Also, in Fig. \ref{fig} (a) $E_{on}(0.6)\approx E_{s}(0.6)$ and in Fig. \ref{fig} (c) $E_{on}(2)\approx E_{s}(2)$ (Lemma \ref{VI.2.}). From Fig. \ref{fig} (b) we have $\frac{B_{on}(T)}{B_{off}(T)}\approx\frac{2}{2.9}$ which means that approximately $69$ percent of data that transmitted by offline algorithm is transmitted by online algorithm and from Fig. \ref{fig} (d) we have $\frac{B_{on}(T)}{B_{off}(T)}\approx\frac{4.8}{6}$ which means that approximately $80$ percent of data that transmitted by offline algorithm is transmitted by online algorithm (using all harvested energy in both algorithms).

As mentioned in Section I, Fig. \ref{figd} shows the necessity of investigating continuous model instead of discretizing harvested energy and arrival data curves to achieve the optimal performance. In Fig. \ref{figd}, the $E_{ds}(t)$ and $B_{ds}(t)$ are the discretized version of the $E_{s}(t)$ and $B_{s}(t)$ in Fig. \ref{fig} (a) and Fig. \ref{fig} (b), respectively. It can be easily seen that the optimal offline algorithm with discretizing transmits $2.25$ bits (compared to 2.9 bits in continuous model) which reduces the efficiency.
\section{Discussion and Conclusion}
In this paper, we considered an EH system with continuous arrival data and continuous harvested energy curves; while, most of the research in this area considered a discrete model due to the mathematical tractability of the ensuing system optimization.
Our work can be compared to the ones in \cite{varan2014energy,yang2012optimal,vaze2014dynamic}.
In \cite{varan2014energy}, a model with continuous harvested energy curve is investigated, while it is assumed that the large amount of data exists to transmit (no arrival data process). Compared to our model, in \cite{varan2014energy} the causality condition of \eqref{gh} does not exist. Thus, the model in this paper is more general than \cite{varan2014energy}. In addition, \cite{varan2014energy} focuses on battery imperfection and processing gain which makes the results completely different from our results.
In \cite{yang2012optimal}, only the model with discrete $E_{s}(t)$ and $B_{s}(t)$ curves is investigated and its goal is to find the optimal policy that minimizes the completion time for transmitting a given amount of data among piecewise linear curves; while, in this paper we consider a model that includes both discrete and continuous models for $E_{s}(t)$ and $B_{s}(t)$, and find optimal policy among \emph{all} of curves assumed in Section II, therefore, the considered model of this paper is more general than \cite{yang2012optimal}. We compared our results thoroughly in Sections III and IV.
In \cite{vaze2014dynamic}, the optimal online algorithm for the discrete case with no data arrival is proposed which derives the transmitted power based on the available energy in the buffer (the sole constraint in this case). Our proposed online algorithm easily reduces to the mentioned algorithm by making the data available at the beginning and discretizing the harvested energy curve.

\bibliographystyle{./IEEEtran}
\bibliography{./IEEEabrv,./JSAC}

\end{document}